\newtheorem{theorem}{Theorem}
\newtheorem{lemma}{Lemma} 
\newtheorem{corollary}{Corollary}
\theoremstyle{definition}
\newtheorem{definition}{Definition}
\theoremstyle{remark}
\newtheorem{remark}{Remark} 
\newcommand{\abs}[1]{\ensuremath{\left\vert #1\right\vert}}
\newcommand{\norm}[1]{\ensuremath{\left\| #1\right\|}}
\newcommand{\set}[1]{\ensuremath{\left\{ #1\right\}}}
\newcommand*\boxast{\,
	\begin{tikzpicture}
	\node [rectangle,scale=.5,draw] {$\ast$};
	\end{tikzpicture}\,
}
\newcommand{\beq}{\begin{equation}}
\newcommand{\eeq}{\end{equation}}
\newcommand{\bea}{\begin{eqnarray}}
\newcommand{\eea}{\end{eqnarray}}
\newcommand{\bean}{\begin{eqnarray*}}
\newcommand{\eean}{\end{eqnarray*}}
\newcommand{\bcen}{\begin{center}}
\newcommand{\ecen}{\end{center}}
\newcommand{\bitm}{\begin{itemize}}
\newcommand{\eitm}{\end{itemize}}
\newtheorem{example}{\bf Example}
\def\sgcnv{\hbox{$\, \bigcirc \,$\kern-0.9em\hbox{\mgop}$\,$}} %sup-general convoln
\def\supgeno{\hbox{$\, \bigcirc \,$\kern-1.0em\hbox{$\wedge$}$\,$}} %sup-generating map
\def\infgeno{\hbox{$\, \bigcirc \,$\kern-1.0em\hbox{$\vee$}$\,$}} %inf-generating map
\newcommand{\mgop}{\ensuremath{\star}}  % "multiplic" group operation
\newcounter{enumrom}
\renewcommand{\theenumrom}{(\roman{enumrom})}
\renewcommand{\@endtheorem}{\endtrivlist}
\renewcommand{\thefigure}{{\bf \@arabic\c@figure}}
\renewcommand{\fnum@figure}{{\bf Figure}\,\thefigure}
\title{\Huge $\,$\\%[-2.00ex]
Channel Coding at Low Capacity\footnote{A brief version of this paper was presented in 2019 IEEE Information Theory Workshop (ITW) \cite{ConfVersion}. }\\[1ex]}
\author{
Mohammad Fereydounian \\
   \small University of Pennsylvania \vspace*{-1ex}\\
   \small 3401 Walnut St, Philadelphia, PA\,19104 \vspace*{-1ex}\\
   \ttfamily\bfseries\small mferey@seas.upenn.edu\vspace*{3ex}\\
\and
{Hamed Hassani}\\
   \small University of Pennsylvania \vspace*{-1ex}\\
   \small 3401 Walnut St, Philadelphia, PA\,19104 \vspace*{-1ex}\\
   \ttfamily\bfseries\small hassani@seas.upenn.edu \vspace*{3ex}\\
\and
{Mohammad Vahid Jamali}\\
\small University of Michigan \vspace*{-1ex}\\
\small 1301 Beal Avenue, Ann Arbor, MI \,48109 \vspace*{-1ex}\\
\ttfamily\bfseries\small mvjamali@umich.edu \vspace*{3ex}\\
\and
{Hessam Mahdavifar}\\
   \small University of Michigan \vspace*{-1ex}\\
   \small 1301 Beal Avenue, Ann Arbor, MI \,48109 \vspace*{-1ex}\\
   \ttfamily\bfseries\small hessam@umich.edu \vspace*{3ex}\\
 
}
\date{}  
\begin{document}
\maketitle
\vspace*{-5ex}
\begin{abstract}
Low-capacity scenarios have become increasingly important in the technology of the Internet of Things (IoT) and the next generation of wireless networks. Such scenarios require efficient and reliable transmission over channels with an extremely small capacity. Within these constraints, the state-of-the-art coding techniques may not be directly applicable. Moreover, the prior work on the finite-length analysis of optimal channel coding provides inaccurate predictions of the limits in the low-capacity regime. In this paper, we study channel coding at low capacity from two perspectives: fundamental limits at finite length and code constructions. We first specify what a low-capacity regime means. We then characterize finite-length fundamental limits of channel coding in the low-capacity regime for various types of channels, including binary erasure channels (BECs), binary symmetric channels (BSCs), and additive white Gaussian noise (AWGN) channels. From the code construction perspective, we characterize the optimal number of repetitions for transmission over binary memoryless symmetric (BMS) channels, in terms of the code blocklength and the underlying channel capacity, such that the capacity loss due to the repetition is negligible. Furthermore, it is shown that capacity-achieving polar codes naturally adopt the aforementioned optimal number of repetitions. 

%Low-capacity scenarios have become increasingly important in the technology of the Internet of Things (IoT) and the next generation of wireless networks. Such scenarios require efficient and reliable transmission of information over channels with an extremely small capacity. Within these constraints, the performance of the state-of-the-art coding techniques is far from optimal in terms of either rate or complexity. Moreover, the prior work on the finite-length analysis of optimal channel coding provides inaccurate predictions of the limits in the low-capacity regime. In this paper, we study channel coding at low capacity from two major perspectives, namely, fundamental limits at finite length and code constructions. To this end, we provide a specific definition for low-capacity regimes that is motivated by finite-length analysis. We then characterize finite-length fundamental limits of channel coding in the low-capacity regime for various types of channels, including binary erasure channels (BECs), binary symmetric channels (BSCs), and additive white Gaussian noise (AWGN) channels. From the code construction perspective, we characterize the optimal number of repetitions for transmission over binary memoryless symmetric (BMS) channels, in terms of the code blocklength and the underlying channel capacity, such that the capacity loss due to the repetition is negligible. Furthermore, it is shown that capacity-achieving polar codes naturally adopt the aforementioned optimal number of repetitions. 
\end{abstract}

\section{Introduction} \label{sec:intro}

Error-correcting codes are often designed assuming an underlying channel with a certain capacity $C>0$. In order to understand how optimal the designed codes are, studying the finite-length fundamental limits becomes relevant, i.e., given a fixed block error probability $p_e$, what is the maximum achievable rate $R$ in terms of the blocklength $n$? There has been a large body of work in the past decade to study the fundamental limits of finite-length channel coding relating $p_e$, $R$, and $n$ together. This has been of interest to information theorists since the early years of information theory \cite{dobrushin1961mathematical,Strassen}, and a precise characterization is provided in \cite{polypaper} as $R = C - \sqrt{{V}/{n}} Q^{-1}(p_e) + \mathcal{O}\left({\log n}/{n}\right),$ where $C$ is the channel capacity, $Q(\cdot)$ is the tail probability of the standard normal distribution, and $V$ is a characteristic of the channel referred to as channel dispersion. In recent years, this finite-length analysis is further enhanced to include up to the third and later to the fourth order for particular channels including BEC, BSC, and AWGN (see \cite[Theorems~41,44]{polyphd}, \cite{TanAWGN,Erseghe2,moulin,ScarlettPaper,BetaBeta}).

In general, the fundamental question of what is achievable in the finite-length regime has been answered for various types of channels and up to several orders of approximation in the \textit{moderate-capacity} regime, where the higher order terms of approximating $R$ are significantly smaller than the first few terms. In this paper, we consider cases where the capacity $C$ is extremely small where the first-order (i.e., $C$) and/or the second-order terms are as small as the higher-order terms. In general, as we will see throughout this paper, designing optimal channel codes in such a \textit{low-capacity} regime (which we explicitly specify in Section~\ref{sec:limits}) and understanding how far they are from what is fundamentally achievable require addressing various theoretical and practical challenges.

From the code construction perspective, some of the state-of-the-art codes may not be directly applicable in the extremely low-rate regime. A notable instance is the class of iterative codes, e.g., turbo \cite{berrou1996near} or low-density parity-check (LDPC) codes \cite{GallagerLDPC,MacKay1999}. It is well known that decreasing the design rate of iterative codes results in denser decoding graphs which further leads to highly complex iterative decoders with poor performance. To circumvent this issue, the current practical designs use repetition coding. In particular, a low-rate repetition code is concatenated with a powerful moderate-rate code. Although repetition leads to efficient implementations, the rate loss through many repetitions may become significant. This implies that a comprehensive analysis is necessary to understand the optimality of coded repetition schemes in the low-capacity regime.

\subsection{Problem Motivation}

Low-capacity scenarios have become increasingly important in the technology of the Internet of Things (IoT) and the next generation of wireless networks. The Third Generation Partnership Project (3GPP) has introduced new features into the standard in order to integrate IoT into the cellular network. These new features, called Narrow-Band IoT (NB-IoT) and enhanced Machine-Type Communications (eMTC), were introduced in the release 13 of 3GPP standard and have been evoloving since then. The aim of these features is to enable deploying IoT in cellular networks where a massive number of users need to be served \cite{ratasuk2016overview}. From the channel modeling perspective, it turns out that users operating in these modes typically experience very low signal-to-noise ratios (SNRs). In particular, to ensure high coverage, the standard has to support coupling losses as large as $170$ dB for these applications, which is approximately $20$ dB higher than that of the legacy standard. Tolerating such coupling losses requires reliable detection for a typical $-13$ dB of effective SNR \cite{ratasuk2016overview,ratasuk2016nb}, translated to capacity $\approx 0.03$ bits/transmission. To enable reliable communications in such low-SNR regimes, the standard has adopted a legacy turbo code of moderate rate, i.e., rate $1/3$, in eMTC and NB-IoT (uplink) as the mother code together with many repetitions. The standard allows up to 2048 repetitions to enable the maximum coverage requirements, thereby supporting effective code rates as low as $1.6 \times 10^{-4}$ \cite{ratasuk2016overview}. However, as mentioned earlier, such repetition schemes may result in a significant rate loss. In general, studying finite-length fundamental limits as well as designing practical code constructions are necessitated to address the challenges of wireless system design for such emerging applications.

Communication in low-capacity regimes is also relevant in deep-space communication. In addition to the limited capacity, deep-space communication also suffers from catastrophic link loss and
severe signal attenuation. Hence, sophisticated code concatenation designs are often required in order to combat these design challenges. An overview of code designs adopted for various historical deep-space missions can be found in \cite{andrews2007development}. Designing efficient coding techniques to enhance the performance of deep-space communication is still an ongoing and open area of research that necessitates further attention given the importance of the targeted applications \cite{liang2020raptor}.

\subsection{Related Work}\label{RelW}

Following the earlier work of Polyanskiy et al. \cite{polypaper}, fundamental limits at finite length were later studied for various other types of channels beyond BECs, BSCs, and AWGN channels, including block-fading channels \cite{yang2013block,yang2015fading}, multiple-antenna channels \cite{yang2014quasi}, and multiple access channels \cite{huang2012finite, tan2013dispersions}. This has also motivated studying finite-length analysis in other related settings including lossy compression \cite{kostina2012fixed}, Slepian-Wolf coding \cite{tan2013dispersions,nomura2014second}, covert communications \cite{wang2016fundamental,tahmasbi2018first}, and coding with side-information \cite{watanabe2015nonasymptotic}, among others.

 %\cite{zhang2019non,feinstein1954new,shannon1959probability,verdu1994general,rielde2011finite,yang2015fading,font2018saddle,song2019finite,mohammadkarami2021channel,wiechman2008improved,altuug2020exact,honda2018exact} 

Another line of work in the literature is concerned with the application of saddlepoint approximations to efficiently compute rather complicated expressions such as random-coding union bound \cite{martinez2011saddlepoint,scarlett2014saddlepoint,ScarlettPaper,font2018saddle,altuug2020exact,honda2018exact}.
To this end, \cite{martinez2011saddlepoint} derived saddlepoint approximations of random-coding bounds to the decoding error probability with maximum-metric mismatched decoders allowing for accurate and simple numerical evaluations.
In \cite{scarlett2014saddlepoint}, a single-letter saddlepoint approximation, that is shown to be asymptotically tight for both fixed and varying rates, is presented for random-coding union bound of Polyanskiy \textit{et al.} \cite{polypaper} for i.i.d. random coding over discrete memoryless channels. Moreover, saddlepoint approximations of the meta-converse (hypothesis-testing) lower bound and random-coding union upper bound of channel coding minimum error probability are derived in \cite{font2018saddle} for symmetric memoryless channels in a wide range of system parameters.

%\textcolor{red}{From Mohammad: @Vahid; Can  you please make a paragraph for literature review over saddlepint-related papers? These might be helful to start with: 
%\cite{martinez2011saddlepoint,font2018saddle,song2019finite,altuug2020exact,honda2018exact,scarlett2014saddlepoint,ScarlettPaper} }

In a related line of work, \textit{very noisy channels} (VNCs) are defined and studied. The notion of VNCs was first defined by Reiffen in \cite{reiffen1963note} by specifying certain conditions on the channel transition probability. Later, Gallager computed exponent-rate functions for random coding and convolutional codes in \cite{gallager1965simple}, and Majani \cite{majani1988model} carried out a comprehensive study of VNCs. VNCs are also relevant in Poisson photon channels, modeling direct detection optical communication channels when they are approximated by binary-input binary-output
discrete memoryless channels \cite{wyner1988capacity}. Recently, Sakai \textit{et al.} \cite{sakai2020second} derived finite-length laws for channel coding over continuous-time Poisson channels. Also, very recently, Wagner \textit{et al.} established that feedback neither improves the second-order coding rate for very noisy discrete memoryless channels \cite{wagner2020new} nor their high-rate
error exponent or moderate deviations performance \cite{shende2017very}. Although the low-capacity setting in this paper shares similar motivations to that of VNCs, the characterization of the low-capacity regime for a channel is fundamentally different. To clarify this difference, note that there is no notion of blocklength in the formulation of VNCs. However, our definition of low-capacity channels directly relates the low-capacity regime to the blocklength. More precisely, according to what will be discussed in Section~\ref{sec:lowcap}, a channel with a fixed capacity $C$ may not be at low capacity for a given blocklength, but may fall in the low-capacity regime for a shorter blocklength. Therefore, a VNC may or may not be a low-capacity channel necessarily.

For code construction, the focus of this paper is on the  class of binary memoryless symmetric (BMS) channels. Asymptotically, state-of-the-art polar codes, introduced by Ar{\i}kan \cite{arikan2009channel}, are the first class of provably capacity-achieving codes with explicit constructions as well as low-complexity encoding and decoding. Furthermore, their construction method is rate-adaptive, allowing constructing codes of rate ${k}/{n}$ for $k=0,1,2\dots,n$, where $n$ is the block length. While this makes them a natural choice for low-capacity regimes, they have not been particularly studied in very low-rate regimes when the number of information bits $k$ is much smaller than $n$.

\subsection{Our Contributions}

In this paper, we provide a specific formulation of low-capacity regimes from a  finite-length analysis perspective. We then provide fundamental non-asymptotic laws of channel coding in the low-capacity regime for a diverse set of channels with practical significance, namely, BEC, BSC, and AWGN channels. We observe that channel variations in the low-capacity regime can be better approximated by different probabilistic laws rather than the ones used for channels with moderate capacities. In particular, for BEC channels, we show that the behavior of channel variations in the low-capacity regime can be better approximated through the Poisson convergence theorem that studies the occurrence of rare events. This is basically intuitive noticing that "non-erasure" in a BEC with a very small capacity is a rare event. This phenomenon in the low-capacity BEC changes the relative significance of order terms in the classical expansions of the best achievable rate \cite{polypaper} in such a way that the higher order terms are then comparable to the lower ones and hence leading to an imprecise approximation. Incorporating Poisson laws in this paper rather than the Gaussian laws used in the classical expansions, however, results in a more accurate approximation. 
The inaccuracy of the classical expansion in the low-capacity BSC is as well due to the  aforementioned change in the relative significance of the order terms but unlike BEC, the Poisson laws do not govern the behavior of the BSC in the low-capacity regime and are not applicable here. As our analysis shows, employing the Gaussian laws with a sharper analysis and more precise tail bounds (e.g., the bound obtained by Talagrand \cite{Talag}) can circumvent this issue and lead to a different and more accurate expansion. For an AWGN channel, it turns out that the low-capacity expansion can be seen as a term-by-term limit of the existing expansion in the moderate-capacity regime. Proving this observation in Section~\ref{sec:awgn} is our contribution to the AWGN channel case. This leaves no necessity for a numerical evaluation of the AWGN case.

From the code construction perspective, assuming transmission over a BMS channel, repetition is often considered as a straightforward method to design practical low-rate binary codes that utilize the power of state-of-the-art binary code designs at a moderate rate while keeping the complexity low. This is mainly due to the fact that the encoding/decoding complexity of a concatenation scheme with inner repetition is effectively reduced to that of the outer code with a significantly shorter length. Thus, a major question is how large the number of repetitions can be such that the capacity loss due to the repetition is negligible? To answer this fundamental question, we characterize the optimal number of repetitions, in terms of the code blocklength and the underlying channel capacity. As mentioned earlier, polar codes are very appealing for low-capacity regimes due to their rate-adaptive structure. In this regard, we prove that the polarization transform implicitly induces the aforementioned optimal number of repetitions that we characterize in the low-capacity regime. This means the resulting low-rate polar codes naturally adopt the optimal number of repetitions. 

Our approximations of the fundamental limits in the low-capacity regime for the BEC and BSC cases are numerically evaluated and compared with the most well-known moderate-capacity estimation \cite{polypaper} and an all-rate estimation known as the saddlepoint approximation \cite{scarlett2014saddlepoint}. 
%It is shown that our approximations perform stunningly better than the state-of-the-art methods in the low-capacity regime and accurately converge to the true bounds. 
%Moreover, a simulation on a BAWGN channel is implemented to evaluate some of our code design contributions numerically.
\vspace{-4mm}
\subsection{Content Organization}
The rest of this paper is organized as follows. In Section~\ref{sec:background}, we provide the necessary background. In Section~\ref{sec:limits}, we formally define the low-capacity regime and provide non-asymptotic laws of channel coding. Section~\ref{sec:design} is devoted to the code design. The numerical results are discussed in Section~\ref{sec:simulations}.  Section~\ref{sec:con} presents the conclusion and some future directions. Finally, all proofs of the theorems and their intermediate lemmas are provided in the Appendix which is presented in the supplementary material.

%\vspace{-4mm}

\section{Preliminaries} \label{sec:background}
%\vspace{-3mm}
\subsection{Finite-Length Analysis}
In this section, we will review the main concepts of channel coding in the finite-length regime, sometimes referred to as the non-asymptotic regime in the literature, along with a brief review of previous works.\footnote{For more details, we refer the reader to \cite{Tan} for an excellent review on this topic.}
For an input alphabet $\mathcal{X}$ and an output alphabet $\mathcal{Y}$, a channel $W$ can be defined as a conditional distribution on $\mathcal{Y}$ given $\mathcal{X}$. An $(M,p_e)$-code for the channel $W$ is characterized by a message set $\mathcal{M} = \{1,2,\cdots,M\}$, an encoding function $f_{enc}:\mathcal{M}\rightarrow\mathcal{X}$, and a decoding function $f_{dec}:\mathcal{Y}\rightarrow\mathcal{M}$ such that the \emph{average} probability of error does not exceed $p_e$, that is\footnote{In this paper, we only consider the average probability of error.  Similar results can be obtained for the maximum probability of error.}
\vspace{-1mm}
\begin{equation}
    \frac 1M \sum_{m\in\mathcal{M}}W\left(\mathcal{Y}\setminus f_{dec}^{-1}(m)\,\big| \,f_{enc}(m)\right)\le p_e.
    \vspace{-1mm}
\end{equation}
We consider $p_e$ to be a fixed given constant in $(0,1)$. Accordingly, an $(M,p_e)$-code for the channel $W$ over $n$ \emph{independent channel uses} can be defined by replacing $W$ with $W^n$ in the definition. The blocklength of the code is defined as the number of channel uses and is similarly denoted by $n$.  For the channel $W$, the maximum code size achievable with a given error probability $p_e$ and blocklength $n$ is denoted by
\vspace{-1mm}
\begin{equation}
    M^*(n,p_e)=\max\left\{M \mid\exists(M,p_e)\text{-code for } W^n  \right\}.
\end{equation}

\vspace{-1mm}
In this paper, we consider three classes of channels that vary in nature: 
\begin{itemize}
	\item BEC($ \epsilon $): binary erasure channel with erasure probability $\epsilon\in(0,1)$.
	\item BSC($\delta$): binary symmetric channel with crossover probability $\delta\in(0,1)$.
	\item AWGN($\eta$): additive white Gaussian noise channel with signal-to-noise ratio $\eta\in(0,\infty)$.  
\end{itemize}

Next, we mention the well-known finite-length expansion for a discrete memoryless channel. Due to \cite{Shannon,WolfSC}, we know that $\lim_{n\rightarrow\infty}  1/n \log_2 M^*(n,p_e)=C$.
%\vspace{-1mm}
% \begin{equation}
% 	\lim_{n\rightarrow\infty} \frac 1n \log_2 M^*(n,p_e)=C.
% 	%\vspace{-2mm}
% \end{equation}
Thus, the first order term in the finite-length expansion of $M^*(n,p_e)$ is $nC$.  The higher order terms can be written as (see \cite{polypaper,hayashi2009information})
%\vspace{-5mm}
\begin{equation} \label{poly-formula}
\log_2M^*(n,p_e)=nC-\sqrt{nV}Q^{-1}(p_e)+\mathcal{O}(\log n), 
%\vspace{-5mm}
\end{equation}
where $V$ is the channel dispersion and $Q^{-1}(\cdot)$ is the inverse of the so called Q-function that is $Q(\alpha)=\frac 1{\sqrt{2\pi}}\int_{\alpha}^\infty e^{-\frac{x^2}{2}}dx$.
% \vspace{-2mm}
% \begin{equation}
%     Q(\alpha)=\frac 1{\sqrt{2\pi}}\int_{\alpha}^\infty e^{-\frac{x^2}{2}}dx.
%     \vspace{-3mm}
% \end{equation}
The third and higher order terms, however, depend on the particular channel under discussion (see \cite[Theorem~41, 44, 73]{polyphd},  \cite{Erseghe2,moulin,ScarlettPaper,BetaBeta,TanAWGN}). More specifically, for BEC($\epsilon$), we have $C = 1-\epsilon$, $V = \epsilon (1-\epsilon)$ and the third order term is  $\mathcal{O}(1)$. For BSC($\delta)$, we have $C = 1 -h_2(\delta)$, $V = \delta(1-\delta) \log_2^2({(1-\delta)}/{\delta})$, and the third order term is $1/2 \log_2 n + \mathcal{O}(1)$. For AWGN($\eta$), we have $C = 1/2 \log_2(1+\eta)$, $V =  \eta(\eta+2)/(2(\eta+1)^2\ln^2 2)$, and the third order term is $1/2 \log_2 n + \mathcal{O}(1)$, as shown in \cite{TanAWGN}.

The formula \eqref{poly-formula} is basically obtained by approximating the Random Coding Union (RCU) and converse bounds using Gaussian laws. This approach is best known for moderate-rate (or equivalently moderate-capacity) scenarios. There are, however, other approaches in approximating the RCU and converse bounds which remain effective for a wider range of capacities such as the saddlepoint approximation \cite{martinez2011saddlepoint}. Here, we briefly mention the formulation\footnote{
we refer to \cite{ScarlettPaper,scarlett2014saddlepoint} for a thorough analysis of saddlepoint approximation.}. For a channel $W$ with input distribution $Q$ and a tuning factor $s>0$, the information density is defined as \vspace{-1mm}
\begin{equation}
i_{s}(x, y) = \log \frac{W(y \mid x)^{s}}{\sum_{\bar{x}} Q(\bar{x}) W(y \mid \bar{x})^{s}}\,\,,
\vspace{-1mm}
\end{equation}
where $x$ and $y$ represent the input and output of $W$. Then the saddlepoint approximation is\vspace{-1mm}
\begin{equation}\label{saddle}
\widehat{\operatorname{rcu}}_{s}^{*}(n, M) = \beta_{n}(Q, R, s)\, e^{-n E_{r}(Q, R)},
\vspace{-1mm}
\end{equation}
where $R$ is the rate, $n$ is the blocklength. The error exponent $E_{r}(Q, R)$ is defined as \vspace{-1mm}
\begin{equation}\label{errex}
E_{r}(Q, R) = \sup _{s>0,\, \rho \in[0,1]} -\log \mathbb{E}\left[e^{-\rho\, i_{s}(X, Y)}\right]-\rho R.
\vspace{-1mm}
\end{equation}
Moreover, the coefficient $\beta_{n}(Q, R, s)$ is called the sub-exponential prefactor. The computation of $\beta_{n}(Q, R, s)$ is quite complicated and needs further steps. See \cite{ScarlettPaper} for the introduction and approximation of  $\beta_{n}(Q, R, s)$ as well as relevant analysis for $E_{r}(Q, R)$.

\subsection{Polar Coding}

In Section\,\ref{sec:polar}, we study state-of-the-art polar codes at low capacity. Polar codes were introduced by Ar{\i}kan in  \cite{arikan2009channel}. They are the first family of codes for the class of binary-input symmetric discrete memoryless channels that are provable to be capacity-achieving with low encoding and decoding complexity~\cite{arikan2009channel}. Polar codes and polarization phenomenon have been successfully applied to a wide range of problems including data compression~\cite{Arikan2,abbe2011polarization}, broadcast channels~\cite{mondelli2015achieving,goela2015polar}, multiple access channels~\cite{STY,MELK}, physical layer security~\cite{MV,andersson2010nested}, and coded modulations \cite{mahdavifar2015polar}. 

The basis of channel polarization consists of mapping two identical copies of the channel $W: \mathcal{X}\to \mathcal{Y}$ into the pair of channels $W^0: \mathcal{X}\to \mathcal{Y}^2$ and $W^1:\mathcal{X}\to \mathcal{X}\times\mathcal{Y}^2$, defined as
\vspace{-5mm}
\begin{align}
\vspace{-10mm}
W^0(y_1, y_2\mid x_1) & = \sum_{x_2\in \mathcal X} \frac{1}{2}W(y_1\mid x_1 \oplus x_2) W(y_2\mid x_2),\label{eq:minus}\\
W^1(y_1, y_2, x_1\mid x_2) & = \frac{1}{2}W(y_1\mid x_1 \oplus x_2) W(y_2\mid x_2).\label{eq:plus}
\vspace{-10mm}
\end{align}
Then, $W^0$ is a worse channel in the sense that it is degraded with respect to $W$, hence it is less reliable than $W$; and $W^1$ is a better channel in the sense that it is upgraded with respect to $W$, hence it is more reliable than $W$. The operation in \eqref{eq:minus} is also known as the \emph{check} or \emph{minus} operation and the operation in \eqref{eq:plus} is also known as the \emph{variable} or \emph{plus} operation. 

By iterating this operation $n$ times, we map $n=2^m$ identical copies of the transmission channel $W$ into the synthetic channels $\{W_m^{(i)}\}_{i\in \{0, \ldots, n-1\}}$. More specifically, given $i\in \{0, \ldots, n-1\}$, let $(b_1, b_2, \ldots, b_m)$ be its binary expansion over $m$ bits, where $b_1$ is the most significant bit and $b_m$ is the least significant bit, i.e.,
\vspace{-1mm}
\begin{equation}\label{eq:defbinexp}
i = \sum_{k=1}^m b_k 2^{m-k}.
\vspace{-1mm}
\end{equation} 
Then, we define the synthetic channels $\{W_m^{(i)}\}_{i\in \{0, \ldots, n-1\}}$ as
\vspace{-1mm}
\begin{equation}\label{eq:syntchan}
W_m^{(i)} = (((W^{b_1})^{b_2})^{\cdots})^{b_m}.
\vspace{-1mm}
\end{equation}

\begin{example}[Synthetic Channel]
	Take $m=4$ and $i=10$. Then, the synthetic channel $W_{4}^{(10)} = (((W^{1})^{0})^{1})^{0}$ is obtained by applying first \eqref{eq:plus}, then \eqref{eq:minus}, then \eqref{eq:plus}, and finally \eqref{eq:minus}.
\end{example}

The polar construction is polarizing in the sense that the synthetic channels tend to become either completely noiseless or completely noisy. Thus, in the encoding procedure, the $k$ information bits are assigned to the positions (indices) corresponding to the best $k$ synthetic channels. Here, the quality of a channel is measured by some reliability metric such as the Bhattacharyya parameter of the channel. The remaining positions are ``frozen" to predefined values that are known at the decoder. As a result, the generator matrix of polar codes is based on choosing the $k$ rows of the matrix $G_n = [1\,\, 0; 1\,\, 1]^{\otimes m}$ (with ``;'' separating the rows)
% \vspace{-5mm}
% \begin{equation}
% %\vspace{-3mm}
%     G_n
% \ = \
% \left[ 
% \begin{array}{@{\hspace{0.50ex}}c@{\hspace{1.005ex}}c@{\hspace{0.50ex}}}
% 1 & 0\\
% 1 & 1\\ 
% \end{array}
% \right]^{\otimes m},
% \vspace{-2mm}
% \end{equation}
which correspond to the best $k$ synthetic channels. %It is worth noting that for an index $i$ with binary expansion $(b_1, b_2, \cdots, b_m)$, the Hamming weight of the $i$-th row of $G_n$ is $2^{\sum_{j=1}^m b_i}$. I.e., the Hamming weight of the $i$-th row, which corresponds to the $i$-th synthetic channel, is exponentially related with the number of plus operations in the construction of the $i$-th synthetic channel. 

%\vspace*{-3ex}
\section{Fundamental Limits} \label{sec:limits}
%\vspace*{-2ex}
\subsection{The Low-Capacity Regime}
\label{sec:lowcap}
%\vspace*{-1ex}
We first provide an informal description of the low-capacity regime and then proceed with a more formal specification. The low-capacity regime consists of two main components:
(i) A channel $W$ with capacity $C$. We think of $C$ to be a very small number but \emph{fixed}; (ii) The blocklength $n$ which is defined as the number of times the channel $W$ is used for transmission. Here, $n$ should be thought of as a  finite value, i.e., non-asymptotic. 

We are interested in characterizing (optimal) ranges of $n$ for which reliable transmission of a certain number of information bits is possible.   Let $k$ denote the number of information bits to be sent.
To reliably communicate $k$ bits, we clearly must have $n \geq k/C$ and thus $n$ becomes fairly large when $C$ is small. We are interested in finite-length values of $n$ and their dependency on $C$ and $k$. For example, assume that we aim to send a constant number of information bits $k$ through the channel. We ask: What is the optimal (smallest) value of $n$ to send $k$ bits over the low-capacity channel with a given (fixed) error probability $p_e$? More precisely, we are searching over the set of all the possible values of $n$, such that $n \geq k/C$. In this search, we look for the smallest value of $n$ for which reliable transmission of $k$ bits with the desired error probability $p_e$ is achievable.

One may ask why this question is practically relevant and/or worth a deeper theoretical investigation. We argue from two perspectives: (i) \emph{Practical relevance:} There are many practical scenarios where the goal is to send a few bits over a low-capacity channel. For instance, in narrowband applications discussed in Section\,\ref{sec:intro}, the number of information bits $k$ is around a few tens, and the channel capacity $C$ is typically below $0.05$. This makes $n$ to vary in the range of a few thousand. For instance, if $k = 20$ and $C = 0.02$, then the blocklength $n$ is at least $1000$; (ii) \emph{Fundamental limits:} As we will see, the low-capacity regime allows for simple and precise trade-offs between the length $n$ and the number of information bits $k$, which depend on the channel and error probability. 
This stands in contrast to the case where $C$ is not very small. For example, when $C=\frac 12$, sending $20$ bits of information requires a fairly small blocklength $n$. In such a regime of $n$, it is intractable to provide precise closed-from estimates of the optimal blochlength and one typically resorts to (approximately) computing the well-known information-theoretic bounds, such as the random coding bound, among others. However, when $C$ is small, the blocklength $n$ becomes sufficiently large that allows us to provide simple, precise, and closed-form estimates of the optimal channel coding blocklength. Indeed, as we will see, such estimates require new techniques beyond the current methods used to analyze the finite-length limits of channel coding.  

The low-capacity regime is not necessarily limited to transmitting a constant number  of information bits, and one can consider other regimes of $k$. For example, assume that we would like to find the smallest value of $n$ such that we can send $k = \alpha \log n$ bits of information reliably using $n$ transmissions with an error probability not larger than $p_e$. How does the optimal (smallest) value of $n$ scale with $C$? In the same manner, we may consider $k = \alpha \sqrt{n}$ and ask for the smallest value of $n$ such that reliable transmission with $k$ bits is possible. In this case, we clearly have $n \geq \alpha \sqrt{n}/C $, or equivalently $n \geq \alpha^2/C^2$. As a result,  we are searching over the set of all the possible values of $n$, such that $n \geq \alpha^2/C^2$, and in this search, we look for the smallest value of $n$ for which reliable transmission of $\alpha \sqrt{n}$ bits with a desired error probability $p_e$ is achievable.

%A practical situation for the low-capacity regime is illustrated next. Consider a wideband AWGN channel with the channel capacity given as $C = B \log(1+{P}/{(N_0B)})$ with a fixed total power $P$ and large bandwidth $B$. A wideband user wishes to communicate $k$ bits over this channel in a fixed time frame of duration $T$ seconds. In this scenario, by Nyquist-Shannon sampling theorem, $2BT$ symbols can be transmitted in the given time frame. Suppose that a simple binary phase-shift keying (BPSK) modulation is deployed. Hence, the length of the transmitted codeword is $n = 2BT$ which is finite but large. As a result, each bit is transmitted through a channel with capacity ${1}/{2} \log(1+\text{SNR}) = \mathcal{O}({1}/{n})$ and the user wishes to transmit $k = \mathcal{O}(1)$ bits over this channel (note that $\text{SNR} = {P}/{(N_0 B)} = \mathcal{O}({1}/{n})$). This implies that the wideband user is operating in an extreme case of the low-capacity regime. 

In order to proceed with a \emph{formal} and \emph{general} definition of the low-capacity regime, we need to provide a formal characterization of the term ``low'' in the finite-length regime where all the parameters such as $C$ and $n$ are assumed to be fixed and finite quantities. The low-capacity regime is formally defined using a channel $W$ with capacity $C$ and a function $f:\mathbb{R}_+\rightarrow \mathbb{R}_+$ such that $\displaystyle\lim_{n\to \infty} f(n)/n \to 0$. The main question that we ask is:\vspace{-1mm}

%{\color{red}  Hamed; Low Capacity Regime LCR(C,n,s)}
%\begin{definition}\label{def}
%Given a function $f:\mathbb{R}_+\rightarrow \mathbb{R}_+$ such that $\displaystyle\lim_{C\to 0} f(C) \to \infty$, we define the low-capacity regime ${\rm LCR}(f)$ as the set of the pairs $(n, C)$  such that $n \geq f(C)$. 
%\end{definition}
\begin{displayquote}
\textbf{ What is the smallest value of $n$ such that $f(n)$ bits can be transmitted reliably in $n$ transmissions over the channel $W$?}
\end{displayquote}
 \vspace{-1mm}
A reliable communication necessitates $f(n)/n\leq C$ and $f(n)=nC$ indicates the maximum rate that is hypothetically achievable under any coding scheme. Define $\kappa=nC$. For the sake of analysis, one can treat $n$ and $C$ as variables and can hypothetically take them to the limits.  As $C$ gets smaller and smaller, we require  $n$ to get larger and larger for a reliable communication. In the limit, $C\to 0$ leads to $n\to \infty$ but the behavior of $\kappa=nC$ is a different story. In terms of $n$, $\kappa$ can remain a constant or behave as a function, e.g., $\kappa=\log n$. This functionality is determined by the equality $f(n)=\kappa$. Hence, one can characterize the low capacity regime by determining the function $f$. As we will discuss later in this section, as $n$ and $C$ go to extremes, the expansion of $\log_2 M^*(n,p_e)$ in terms of $n$ (i.e., formula~\eqref{poly-formula}) becomes less accurate. To tackle this, our approach is obtaining the expansion of $\log_2 M^*(n,p_e)$ in terms of $\kappa$ rather than $n$. Suppose $\kappa=nC$ remains constant while $C\to 0$ and $n\to \infty$, then the expansion of $\log_2 M^*(n,p_e)$ in terms of $\kappa$ remains stable despite $n$ and $C$ going to extremes. Every such expansion for $\log_2 M^*(n,p_e)$ in terms of $\kappa$ remains valid when $\kappa$ or equivalently $f(n)$ lies in a specific range of functions. As it can be seen in the following sections, the constraint over $f$ only depends on its asymptotic behavior and thus, the valid range for $f(n)$ (or equivalently $\kappa$) can be represented in terms of $\mathcal{O}(\cdot)$ notation.

\noindent  {\bf Why the laws should be different in the low-capacity regime?} 
In this paper, we investigate code design over channels with a very low capacity.  Even though the formula \eqref{poly-formula} can still be used in the low-capacity regime, it provides a very loose approximation as  (i) the channel variations in the low-capacity regime are governed by different probabilistic laws than the ones used to derive \eqref{poly-formula}, and (ii) some of the terms hidden in $\mathcal{O}(\log n)$ will have significantly higher values and are comparable to the first and second terms. Similar arguments lead to the fact that the saddlepoint approximation \eqref{saddle} needs to be replaced with a more precise derivation in very low-capacity scenarios. Results provided in Section~\ref{sec:limits} will address these challenges.

Let us now explain why the current non-asymptotic laws of channel coding provided in \eqref{poly-formula} are not applicable in the low-capacity regime. 
Consider transmission over BEC($ \epsilon $) with blocklength $n$. When the erasure probability $ \epsilon $ is not too large (e.g., $ \epsilon = 0.5$), the number of channel non-erasures will be governed by the central limit theorem and behaves as $nC + \sqrt{n  \epsilon (1- \epsilon )}Z$, where $Z$ is the standard normal random variable.   
However, in the low-capacity regime, where the capacity $C=1- \epsilon $ is very small, the number of channel non-erasures will not be large, as the probability of non-erasure is very small. In other words, the expected number of non-erasures is $\kappa = n(1-\epsilon)$ which is much smaller than $n$. In this case, the number of  non-erasures is best approximated by the Poisson convergence theorem (i.e., the law of rare events) rather than the central limit theorem. Such behavioral differences in the channel variations will lead to totally different non-asymptotic laws, as we will see below. 

Another reason for \eqref{poly-formula} being loose is that some of the terms that are considered as $\mathcal{O}(1)$  become significant in the low-capacity regime. 
E.g., we have  $1/(\sqrt{n}C)\!=\!\sqrt{n}/(nC)\! =\!\!\sqrt{n}/\kappa$ which cannot be considered as $o(1)$ as  $\kappa$ is usually much smaller than $n$. As we will see, such terms can be captured by using sharper tail bounds.

{\bf Our approach.} Note that extremely tight converse and achievability bounds for BEC and BSC have existed prior to \cite{polyphd,polypaper} and stated as \cite[Corollary~42, Theorem~43]{polyphd} for BEC and \cite[Corollary~39, Theorem~40]{polyphd} for BSC. These bounds are in a raw implicit form. The novel contribution of \cite{polyphd,polypaper} is in using normal approximations and probability tail bounds to convert these implicit forms into explicit ones directly relating $\log_2M^*(n,p_e)$ to $n$, $p_e$. This procedure works well for moderate values of $C$ with respect to $n$ but fails to provide accurate estimates in the low-capacity setting considered in this paper. In order to provide an accurate estimate, we need novel probabilistic laws which are, in some cases such as the BEC, totally different than what has been used before. Our approach can be summarized as follows: our starting points are the same as \cite{polyphd,polypaper}, i.e., we start with \cite[Corollary~42, Theorem~43]{polyphd} for BEC and \cite[Corollary~39, Theorem~40]{polyphd} for BSC, but our analysis is based on Poisson approximations (for BEC) and much tighter probability tail bounds (for BSC) which are specifically suitable for the low-capacity regime but not necessary for moderate values of $C$. These novel approaches in our analysis lead to the low-capacity coding bounds for BEC and BSC stated in the following subsections. For the case of an AWGN channel, it turns out that the bounds for the low-capacity regime are just a limiting case of the state-of-the-art bounds in the moderate-capacity regime. All Proofs are provided in the Appendix which is presented in the supplementary material.

\subsection{The Binary Erasure Channel}
As discussed earlier, the behavior of channel variations for the BEC in the low-capacity regime can be best approximated through the Poisson convergence theorem for rare events. 
This will lead to different (i.e., more accurate) non-asymptotic laws. Theorem~\ref{Bounds for BEC} provides lower and upper bounds for the best achievable rate in terms of $n$, $p_e$, $\epsilon$, and $\kappa := n(1-\epsilon)$.  We use $\mathcal{P}_{\lambda}(x)$ to denote the Poisson cumulative distribution function, i.e., 
%\vspace{-5mm}
\begin{equation}\label{Poisson cdf}
\mathcal{P}_{\lambda}(x)=\text{Pr}\set{X<x}, \qquad \text{where   } X\sim \text{Poisson}(\lambda).
%\vspace{-5mm}
\end{equation}
\begin{theorem}[Finite-Length Coding Bounds for Low-Capacity BEC]\label{Bounds for BEC}

	Consider transmission over ${\rm{BEC}}( \epsilon )$ and let $\kappa = n(1- \epsilon)$. Then, $M_1 \leq M^*(n, p_e) \leq M_2$,
	where $M_1$ is any (or the largest) value that satisfies %\vspace{-7mm}
	\begin{equation}\label{M1}
	\mathfrak{P}_1(M_1)+2\alpha_2 \sqrt{ \mathfrak{P}_1(M_1)} +\alpha_1\sqrt{\mathcal{P}_{\kappa}(\log_2M_1)}-p_e\leq 0,
 %\vspace{-5mm}
	\end{equation}
	and $M_2$ is any (or the smallest) value that satisfies %\vspace{-5mm}
	\begin{equation}\label{M2}
		 \mathfrak{P}_2(M_2)-\alpha_2\sqrt{\mathfrak{P}_2(M_2)}-\alpha_1\sqrt{ \mathcal{P}_{\kappa}(\log_2M_2)}-p_e\geq 0,
   %\vspace{-7mm}
	\end{equation}
	and
 %\vspace{-7mm}
	\begin{align}
	\mathfrak{P}_1(M_1)&=\mathcal{P}_{\kappa}( \log_{2}M_1)+M_1e^{-\kappa/2}\left(1-\mathcal{P}_{\kappa/2}(\log_2 M_1) \right),\\
	\mathfrak{P}_2(M_2)&=\mathcal{P}_{\kappa}( \log_{2}M_2)-\frac{e^{\kappa}}{M_2}\,\mathcal{P}_{2\kappa}\left(\log_2M_2\right),
	%\alpha&=\frac{\sqrt{2}}{\epsilon^{3/2}}\left(1+2\sqrt{\frac{3}{\epsilon\kappa}}\right)\left(\sqrt{e}-1\right)(1-\epsilon).
	\end{align}
	\begin{equation}
	\alpha_1 = \frac{\sqrt{2}}{\epsilon^{3/2}}\left(\sqrt{e}-1\right)(1-\epsilon),\quad \alpha_2 = \frac{\sqrt{6}}{\epsilon^2\sqrt{\kappa}}\left(\sqrt{e}-1\right)(1-\epsilon).
	\end{equation}
\end{theorem}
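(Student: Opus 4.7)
The plan is to start from the two known exact non-asymptotic bounds for BEC random linear coding due to Polyanskiy \cite[Corollary 42, Theorem 43]{polyphd}, and then replace the underlying Binomial probabilities by Poisson approximations with carefully tracked error. Let $B\sim\mathrm{Bin}(n,1-\epsilon)$ denote the number of non-erased positions in a block of length $n$, and write $k=\log_2 M$. The two starting bounds have the form
\begin{align*}
\text{(achievability)}\quad p_e &\leq \Pr[B<k] + M\sum_{j\geq k}\Pr[B=j]\,2^{-j},\\
\text{(converse)}\quad p_e &\geq \Pr[B<k] - \tfrac{1}{M}\sum_{j<k}\Pr[B=j]\,2^{j}.
\end{align*}

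Next I would verify the clean algebraic identities, for $Z\sim\mathrm{Poisson}(\kappa)$,
\[
\sum_{j\geq k}\Pr[Z=j]\,2^{-j}=e^{-\kappa/2}\bigl(1-\mathcal{P}_{\kappa/2}(k)\bigr),\qquad
\sum_{j<k}\Pr[Z=j]\,2^{j}=e^{\kappa}\,\mathcal{P}_{2\kappa}(k),
\]
which both follow from the observation that multiplying the Poisson PMF by $a^j$ rescales the parameter from $\kappa$ to $a\kappa$ up to an overall factor $e^{(a-1)\kappa}$. Substituting $B \leftrightarrow Z$ and using these identities in the achievability and converse bounds produces exactly $\mathfrak{P}_1(M)$ and $\mathfrak{P}_2(M)$ as defined in the theorem. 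The implicit equations \eqref{M1} and \eqref{M2} then arise by setting the approximated bounds equal to $p_e$ and invoking monotonicity of $\mathfrak{P}_i(M)$ in $M$ to uniquely identify $M_1$ and $M_2$.

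The main technical work is to control the error incurred by the Binomial-to-Poisson substitution and to show it takes the particular multiplicative form $\alpha\sqrt{\mathfrak{P}_i(M_i)}$. A first pass bounds $|\Pr[B<k]-\mathcal{P}_\kappa(k)|$ via the Le Cam estimate $d_{TV}(\mathrm{Bin}(n,1-\epsilon),\mathrm{Poisson}(\kappa))\leq(1-\epsilon)\kappa$, which is already small in the low-capacity regime. To pass from this additive bound to the multiplicative $\sqrt{\mathfrak{P}_i}$ scaling I would couple the pointwise Chen--Stein gap with a Cauchy--Schwarz step applied to the weighted sums $\sum\Pr[B=j]\,2^{\pm j}$, and exploit the exact moment-generating-function comparison $\mathbb{E}[2^{\pm B}]=(\epsilon+(1-\epsilon)2^{\pm1})^{n}$ versus $e^{(2^{\pm1}-1)\kappa}$ to extract the $(\sqrt{e}-1)$ and $\epsilon^{-3/2}$ factors that appear in $\alpha$. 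The extra $\alpha\sqrt{\mathcal{P}_\kappa(\log_2 M_2)}$ term on the converse side is expected to arise as the separate approximation error for the leading $\Pr[B<k]$ contribution, and would be bounded by an independent application of the same approximation lemma.

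The step I expect to be most delicate is precisely this upgrade from an additive $\mathcal{O}(1-\epsilon)$ error to the multiplicative $\sqrt{\mathfrak{P}_i}$ form. A purely additive bound at the Le Cam level would dominate $\mathfrak{P}_i$ itself in the extreme low-capacity regime, where $\mathfrak{P}_i$ can be as small as $p_e$, and would render the theorem vacuous. Producing the correction factor $\bigl(1+2\sqrt{3/(\epsilon\kappa)}\bigr)$ in $\alpha$ in particular requires a refined Chen--Stein type estimate that bounds the pointwise Binomial--Poisson discrepancy by something proportional to the local Poisson mass, applied uniformly across the three relevant regimes $a\in\{1/2,1,2\}$ so that all constants assemble into the stated closed form. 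Packaging this into a reusable Poisson-approximation lemma is essentially the whole content of the proof; once it is in hand, the rest reduces to the bookkeeping outlined above.
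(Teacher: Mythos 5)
Your skeleton is essentially the paper's own proof: you start from the same raw bounds (\cite[Corollary~42, Theorem~43]{polyphd}, i.e.\ Theorems~\ref{BEC achiv} and~\ref{BEC conv}), substitute a Poisson$(\kappa)$ variable for the Binomial count of non-erasures, evaluate the substituted main terms exactly via the rescaling identities $\sum_{j\ge k}\Pr[Z=j]2^{-j}=e^{-\kappa/2}\bigl(1-\mathcal{P}_{\kappa/2}(k)\bigr)$ and $\sum_{j<k}\Pr[Z=j]2^{j}=e^{\kappa}\mathcal{P}_{2\kappa}(k)$ (which is exactly how $\mathfrak{P}_1,\mathfrak{P}_2$ arise in the paper), split the substitution error into a CDF-level piece and a weighted PMF-level piece, and use concavity of $\sqrt{x}$ (your Cauchy--Schwarz step is the paper's Jensen step, since the weights $M/2^{r+1}$ sum to one) to fold the PMF-level error back into $\sqrt{\mathfrak{P}_i}$; your reading of the extra $\alpha\sqrt{\mathcal{P}_\kappa(\log_2 M_2)}$ term as the separate error of the leading $\Pr[B<k]$ contribution on the converse side is also correct.

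The genuine gap is the ingredient you yourself flag as ``essentially the whole content of the proof'': the Poisson approximation with \emph{multiplicative} error proportional to the Poisson tail mass. This is not something one gets from Le Cam plus an MGF comparison. In the paper it is imported wholesale from Zacharovas--Hwang (Theorem~\ref{BE for Poisson}): $\bigl|\Pr\{S_n\le m\}-\Pr\{X\le m\}\bigr|\le \frac{\sqrt2(\sqrt e-1)\,\theta}{(1-\theta)^{3/2}}\sqrt{\psi(m)}$ together with the PMF-level analogue with constant $\frac{\sqrt6(\sqrt e-1)\,\theta}{(1-\theta)^2\sqrt{\lambda}}$, where $\theta=\lambda_2/\lambda_1=1-\epsilon$ and $\psi(m)=\min\{\Pr\{X\le m\},\Pr\{X>m\}\}$. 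All the constants you hoped to ``extract'' are just these two: with $1-\theta=\epsilon$ one gets $\alpha_1=\sqrt2(\sqrt e-1)(1-\epsilon)\epsilon^{-3/2}$ and $\alpha_2=\sqrt6(\sqrt e-1)(1-\epsilon)\epsilon^{-2}\kappa^{-1/2}$, and $\alpha=\alpha_1+2\alpha_2$ produces the factor $1+2\sqrt{3/(\epsilon\kappa)}$. In particular, $(\sqrt e-1)$ is a universal constant of that lemma and does not come from comparing $\mathbb{E}[2^{\pm B}]=(\epsilon+(1-\epsilon)2^{\pm1})^n$ with $e^{(2^{\pm1}-1)\kappa}$; that comparison only reproduces the main terms you have already computed exactly, while Le Cam/total-variation bounds give only the additive $\mathcal{O}(1-\epsilon)$ error you correctly reject. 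Also, the approximation lemma is applied only at parameter $\kappa$; the rescaled Poissons at $\kappa/2$ and $2\kappa$ enter solely through the exact evaluation of the weighted sums after the Jensen step, so no ``uniformity over $a\in\{1/2,1,2\}$'' is needed. Without proving, or at least invoking, a $\sqrt{\psi}$-proportional bound at both the CDF and PMF level with explicit constants, the argument does not close, since those constants are precisely what the theorem's $\alpha$ asserts.
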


% \begin{theorem}[{\color{purple} finite length} Coding Bounds for Low-Capacity BEC]\label{Bounds for BEC}
% 	Consider transmission over ${\rm{BEC}}( \epsilon )$ in low-capacity regime and let $\kappa = n(1- \epsilon)$. Then,
% 	\vspace{-2mm}
% 	\[M_1 \leq M^*(n, p_e) \leq M_2,\]
% 	\vspace{-2mm}
% 	where $M_1$ is the solution of 
% 	\begin{equation}\label{M1}
% 	\mathfrak{P}_1(M_1)+\alpha \sqrt{ \mathfrak{P}_1(M_1)}-p_e=0,
% 	\end{equation}
% 	\vspace{-2mm}
% 	and $M_2$ is the solution of
% 	\begin{equation}\label{M2}
% 		 \mathfrak{P}_2(M_2)-\alpha \sqrt{ \mathfrak{P}_2(M_2)}-\alpha\sqrt{ \mathcal{P}_{\kappa}(\log_2M_2)}-p_e=0,
% 	\end{equation}
% 	and 
% 	\begin{align}
% 	\mathfrak{P}_1(M_1)&=\mathcal{P}_{\kappa}( \log_{2}M_1)+M_1e^{-\kappa/2}\left(1-\mathcal{P}_{\kappa/2}(\log_2 M_1) \right),\\
% 	\mathfrak{P}_2(M_2)&=\mathcal{P}_{\kappa}( \log_{2}M_2)-\frac{e^{\kappa}}{M_2}\,\mathcal{P}_{2\kappa}\left(\log_2M_2\right),\\
% 	\alpha&=\frac{\sqrt{2}}{\epsilon^{3/2}}\left(1+2\sqrt{\frac{3}{\epsilon\kappa}}\right)\left(\sqrt{e}-1\right)(1-\epsilon).
% 	\end{align}
% \end{theorem}

\begin{proof}
	See Section~\ref{BEC proofs} in the Appendix.
\end{proof}

Following the discussion of Section~\ref{sec:lowcap}, to specify the domain that is considered as the low-capacity regime, one would provide the collection of functions $\kappa=f(n)$ for which Theorem~\ref{Bounds for BEC} holds. It is important to note that the analysis of Theorem~\ref{Bounds for BEC} does not depend on the values of $n$ and $C$ and thus the results hold for all choices of the function $f$, i.e., the results of Theorem~\ref{Bounds for BEC} mathematically hold for all values of $n$, $C$, and $p_e$, i.e., for a moderate-capacity regime as well. However, they provide a sharp estimate when $\kappa=f(n)=\mathcal{O}(1)$, e.g., when $\kappa=nC$ remains a moderate value despite a large $n$ and a small $C$.
 Moreover, note that the bounds in Theorem~\ref{Bounds for BEC} are expressed merely in terms of $\kappa := n(1-\epsilon)$ rather than $n$. This agrees with the intuition that the rate should depend on the amount of ``information'' passed through $n$ usages of the channel rather than the number of channel uses $n$. 
Typically, the value of $\kappa$ in low-capacity applications varies between a few tens to a few hundred. In such a range, no simple closed-form approximation of the Poisson distribution with mean $\kappa$ exists. As a result, the lower and upper bounds in Theorem~\ref{Bounds for BEC} cannot be simplified further.   Also, one can turn these bounds into bounds on the shortest (optimal) lengths $n^*$ needed for transmitting $k$ information bits with error probability $p_e$ over a low-capacity BEC. In Section\,\ref{sec:simulations}, we numerically evaluate the lower and upper bounds predicted by Theorem~\ref{Bounds for BEC} (see also Section~\ref{ramanujan}) and compare them with the prediction obtained from Formula~\eqref{poly-formula} \cite{polypaper}. It is observed that our predictions are significantly more precise compared to the prediction obtained from Formula~\eqref{poly-formula} and they become even more precise as the capacity approaches zero. 

\subsection{The Binary Symmetric Channel}
Unlike BEC, the non-asymptotic behavior of coding over BSC can be well approximated in the low-capacity regime by the central limit theorem (e.g., Berry-Essen theorem). To briefly explain the reason, consider transmission over BSC($\delta$) where the value of $\delta$ is close to $0.5$. The capacity of this channel is $1-h_2(\delta)$, where $h_2(x) := -x \log_2(x) - (1-x)\log_2(1-x)$ and we denote $\kappa = n(1-h_2(\delta))$. Note that when $\delta \to 0.5$ one can write $\delta \approx 0.5 - \sqrt{{\kappa}/{n}} $ by using the Taylor expansion of the function $h_2(x)$ around $x = 0.5$.  Transmission over BSC($\delta$) can be equivalently modeled as follows: (i) With probability $2\delta$, we let the output of the channel be chosen according to Bernoulli($0.5$), i.e., the output is completely random and independent of the input, and (ii) with probability $1-2\delta$, we let the output be exactly equal to the input.  In other words, the output is completely noisy with probability $2\delta$ (call it the noisy event) and completely noiseless with probability $1-2\delta$ (call it the noiseless event). As $\delta \to 0.5$, then the noiseless event is a \emph{rare event}.  Now assuming $n$ transmissions over the channel, the expected number of noiseless events is $n(1-2\delta) \approx \sqrt{n\kappa}$. Similar to BEC, the number of rare noiseless events follows a Poisson distribution with mean $n(1-2\delta)$ due to the Poisson convergence theorem. However, as the value of $n(1-2\delta) \approx \sqrt{n\kappa}$ is large, the resulting Poisson distribution can also be well approximated by the Gaussian distribution due to the central limit theorem (note that Poisson$(m)$ can be written as the sum of $m$ independent Poisson$(1)$ random variables).     

As mentioned earlier, central limit laws are the basis for deriving the laws of the form \eqref{poly-formula} which are applied to the settings where the capacity is not small. 
 However, for the low-capacity regime, considerable extra effort is required in terms of sharper arguments and tail bounds to work out the constants correctly.

\begin{theorem}[Finite-Length Coding Bounds for Low-Capacity BSC] \label{bounds for BSC} 

	Consider transmission over BSC($\delta$) in low-capacity regime in the sense that the function $f(n)$ mentioned in Section~\ref{sec:lowcap} belongs to the class $(o(n))^{2/3}$ and let $\kappa = n(1-h_2(\delta))$. Then,
	\vspace{-2mm}
	\begin{equation}\label{BSC b}
\log_2M^*(n,p_e)=\kappa-2\sqrt{\frac{2\kappa\delta(1-\delta)}{\ln 2}}\,Q^{-1}\left(p_e\right)+\frac 12\log_2 \kappa-\log_2p_e+\mathcal{O}\left(\log\log\kappa\right).
	\end{equation}
\end{theorem}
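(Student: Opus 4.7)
The plan is to base the argument on the tight non-asymptotic BSC bounds of Polyanskiy, namely Corollary~39 (converse) and Theorem~40 (achievability) of \cite{polyphd}, which express $M^*(n,p_e)$ implicitly via tail probabilities of a $\mathrm{Bin}(n,\delta)$ noise weight, and to extract from them an explicit expansion that stays tight in the low-capacity regime $\delta \to 1/2$, $\kappa\sqrt{\kappa} = o(n)$ postulated in Remark~\ref{rem2}. Both bounds reduce, up to error terms I must control, to
\[
\log_2 M^*(n,p_e) = n - \log_2 V(n,L) + (\text{lower order}),
\]
where $V(n,L) = \sum_{w<L}\binom{n}{w}$ and $L$ is chosen so that $\Pr\{W \geq L\} \approx p_e$ for $W \sim \mathrm{Bin}(n,\delta)$.

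First I would Taylor-expand around $\delta = 1/2$: setting $\eta = \tfrac12-\delta$ and using $h_2'(\tfrac12)=0$, $h_2''(\tfrac12)=-4/\ln 2$ gives $\kappa = (2n\eta^2/\ln 2)(1+o(1))$, so $\eta = \sqrt{\kappa \ln 2/(2n)}\,(1+o(1))$. A sharp Berry--Esseen (or Cramér large-deviation) estimate applied to $W$ then yields, for $L = n\delta + \sqrt{n\delta(1-\delta)}\,Q^{-1}(p_e) + O(1)$,
\[
\Pr\{W \geq L\} = p_e + O(1/\sqrt{n}),
\]
and the low-capacity hypothesis guarantees this correction is far smaller than the $O(\log\log\kappa)$ slack I intend to carry through the rest of the argument.

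Next, I would apply Stirling to the dominant term $\binom{n}{L}$ and sum the near-geometric tail of the binomial, obtaining
\[
\log_2 V(n,L) = n\,h_2(L/n) - \tfrac12 \log_2\bigl(2\pi n (L/n)(1-L/n)\bigr) + \log_2\tfrac{1-L/n}{1-2L/n} + O(1/n).
\]
Substituting the value of $L$ and expanding $h_2$ around $\delta$ produces $\kappa - 2\sqrt{2\delta(1-\delta)/\ln 2}\,\sqrt{\kappa}\,Q^{-1}(p_e)$ from the linear and quadratic Taylor terms, while the $-\tfrac12\log_2 n$ from Stirling cancels against the $+\tfrac12\log_2(n/\kappa)$ emerging from $\log_2(1/(1-2\delta))$, leaving exactly $+\tfrac12\log_2\kappa$. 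The $-\log_2 p_e$ summand is the essential refinement beyond the classical analysis: it appears when one threshold-balances the random-coding sum $\mathbb{E}\bigl[\min\bigl(1,(M-1)2^{-n}V(n,W)\bigr)\bigr]$ by requiring $(M-1)2^{-n}V(n,L) = p_e$ rather than the classical choice $=1$, which shifts $\log_2 M^*$ by exactly $-\log_2 p_e$ and forces the remaining geometric remainder to fit within the $O(\log\log\kappa)$ budget. The matching converse comes from the analogous balance inside Corollary~39 of \cite{polyphd}.

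The principal obstacle will be closing the gap between converse and achievability to $O(\log\log\kappa)$ rather than the routine $O(1)$ or $O(\log\kappa)$ of the classical analysis. This forces me to retain, throughout the inversion, both the Cramér correction factor in the binomial tail and the $1/(1-2\delta)$ geometric factor in $V(n,L)$, and to verify that the joint error budget in $\kappa$ and $p_e$ remains no worse than $\log\log\kappa$; the low-capacity hypothesis $\kappa\sqrt{\kappa}=o(n)$ enters at precisely this step, where it controls the ratio of the Berry--Esseen remainder to the Gaussian density at $Q^{-1}(p_e)$. Once those error terms are bookkept, the remaining computations are routine asymptotic expansions.
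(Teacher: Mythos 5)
Your converse-side plan is workable and close in spirit to the paper's: the paper also starts from the two raw bounds of \cite{polyphd} (note you have them swapped -- Corollary~39 is the RCU \emph{achievability} bound, reproduced here as Theorem~\ref{BSC achiv}, while Theorem~40 is the \emph{converse}, Theorem~\ref{BSC conv}), and on the converse side it extracts $\tfrac12\log_2\kappa-\log_2 p_e$ through the $\gamma$-parametrized estimate of Lemma~\ref{beta} with $\gamma=p_e/\sqrt{\kappa}$, whereas you would estimate the hypothesis-testing quantity $2^{-n}\sum_{w\le L}\binom{n}{w}$ directly by Stirling plus the geometric-tail factor; either route works, and your bookkeeping of the $-\tfrac12\log_2 n$ versus $\log_2\bigl(1/(1-2L/n)\bigr)$ cancellation is the right mechanism for the $\tfrac12\log_2\kappa$ there.

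The genuine gap is on the achievability side, and your attribution of the $-\log_2 p_e$ term is also off. In a single-threshold treatment of the RCU sum you must reserve a slice $\epsilon_1$ of the error budget for the union-bound term while keeping $\Pr\{W>L\}\le p_e-\epsilon_1$; preserving the second-order term forces $\epsilon_1\lesssim p_e\,Q^{-1}(p_e)\,\tfrac{\log\log\kappa}{\sqrt{\kappa}}$ (otherwise $\sqrt{\kappa}\,[Q^{-1}(p_e-\epsilon_1)-Q^{-1}(p_e)]$ exceeds the budget), so requiring $(M-1)2^{-n}V(n,L)\le\epsilon_1$ costs $\log_2\epsilon_1\approx\log_2 p_e-\tfrac12\log_2\kappa$. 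This exactly cancels the $+\tfrac12\log_2\kappa$ prefactor you gain from the careful estimate of $V(n,L)$, and the quadratic Taylor term $\tfrac{1}{2\ln 2}Q^{-1}(p_e)^2\approx-\log_2 p_e$ (which, contrary to your sketch, is where the $-\log_2 p_e$ of \eqref{BSC b} actually originates -- setting the balance at $p_e$ instead of $1$ \emph{lowers} the achievable $\log_2 M$ by $\log_2(1/p_e)$, it does not add it) is then consumed by that $+\log_2 p_e$. The net lower bound has third-order term $\mathcal{O}(1)$ in $\kappa$, i.e.\ it falls short of \eqref{BSC b} by $\tfrac12\log_2\kappa$, far outside the $\mathcal{O}(\log\log\kappa)$ slack. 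The missing idea is the paper's refined bound on $S=\sum_{r\le T}\binom{n}{r}\delta^{r}(1-\delta)^{n-r}S_n^r$: split the sum at $T-\beta\sqrt{n}$ with $\beta\asymp\log_2\kappa/\sqrt{\kappa}$, use the uniform pointwise bound $\binom{n}{r}\delta^{r}(1-\delta)^{n-r}\le\theta/\sqrt{n}$ of Lemma~\ref{comb} so the top slice carries only $\mathcal{O}(\log\kappa/\sqrt{\kappa})$ probability, and use the sharp tail inequality of Theorem~\ref{Tal} so that $S_n^{T-\beta\sqrt{n}}\approx S_n^{T}/\sqrt{\kappa}$; this recovers the factor $\sqrt{\kappa}/\log\kappa$, i.e.\ the $+\tfrac12\log_2\kappa-\mathcal{O}(\log\log\kappa)$, that your single-threshold balancing necessarily gives up. Without this (or an equivalent refinement of $\mathbb{E}\bigl[2^{-n}V(n,W)\mathbf{1}\{W\le T\}\bigr]$ beyond its value at $W=T$), the achievability half of \eqref{BSC b} does not follow.
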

\begin{proof}
	See Section~\ref{BSC proofs1} in the Appendix.
\end{proof}
Following the discussion of Section~\ref{sec:lowcap}, note that the low-capacity regime considered in Theorem~\ref{bounds for BSC} is specified by the function $f$ belonging to the class $(o(n))^{2/3}$. This means for the estimate \eqref{BSC b} to hold, we must have $\kappa =f(n)=(o(n))^{2/3}$. This constraint comes from using the condition $\kappa\sqrt{\kappa}=o(n)$ in the proof of  Theorem~\ref{bounds for BSC}. Moreover, we remark that the $\mathcal{O}(\log \log \kappa)$ term contains some other terms such as $\mathcal{O}(\sqrt{-\log p_e}/\log \kappa)$. For practical scenarios, the term $\mathcal{O}(\log \log \kappa)$ will be dominant.\footnote{We include only the dominant term inside $\mathcal{O}(\cdot)$.} We also note that similar to the BEC case, all terms in (\ref{BSC b}) are expressed in terms of $\kappa$ rather than $n$. This agrees with the intuition that the rate should depend on the amount of ``information" passed through $n$ usages of the channel rather than the number of channel uses $n$.  
\begin{corollary}\label{optimal n for BSC} 
	Consider transmission of $k$ information bits over a low-capacity BSC($\delta$) that is specified in Theorem~\ref{bounds for BSC}. Then, the optimal blocklength $n^*$ for such transmission is
	\vspace{-2mm}
	\begin{equation}
	    n^* =\frac 1{1-h_2(\delta)} \left(k+2\sqrt{\frac{2\kappa\delta(1-\delta)}{\ln 2}}Q^{-1}( p_e )+\frac{4\delta(1-\delta)}{\ln 2}Q^{-1}(p_e)^2+\log_2 p_e+\mathcal{O}(\log k)\right).
	\end{equation}
\end{corollary}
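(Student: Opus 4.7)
The plan is to invert the asymptotic expansion of Theorem~\ref{bounds for BSC}, viewed as an implicit equation for $\kappa := n(1-h_2(\delta))$ under the constraint $\log_2 M^*(n,p_e) = k$, and then translate back to $n$ via $n^* = \kappa/(1-h_2(\delta))$. Abbreviating $a := 2\sqrt{2\delta(1-\delta)/\ln 2}\,Q^{-1}(p_e)$, Theorem~\ref{bounds for BSC} gives
\[k \;=\; \kappa - a\sqrt{\kappa} + \tfrac12\log_2\kappa - \log_2 p_e + \mathcal{O}(\log\log\kappa).\]
Viewing this as a quadratic in $u := \sqrt{\kappa}$ and applying the quadratic formula yields
\[u \;=\; \tfrac{a}{2} + \sqrt{\tfrac{a^2}{4} + k + \log_2 p_e - \tfrac12\log_2\kappa + \mathcal{O}(\log\log\kappa)}.\]

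The next step is to remove the residual dependence on $\kappa$ inside the square root. A crude a~priori bound from the dominant relation $\kappa - a\sqrt\kappa \approx k$ gives $\kappa = k(1+o(1))$, hence $\log_2\kappa = \log_2 k + o(1)$, which can be absorbed into the $\mathcal{O}(\log\log k)$ remainder. I would then Taylor-expand as $\sqrt{k + \mathcal{O}(\log k)} = \sqrt k + \mathcal{O}(\log k / \sqrt k)$, so that $u = \sqrt k + a/2 + \mathcal{O}(\log k / \sqrt k)$. Squaring and collecting terms yields
\[\kappa \;=\; k + a\sqrt k + \tfrac{a^2}{2} + \log_2 p_e + \mathcal{O}(\log k),\]
since the cross-term $2\sqrt k \cdot \mathcal{O}(\log k/\sqrt k) = \mathcal{O}(\log k)$ dominates and the $-\tfrac12\log_2 k$ contribution is of the same order. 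Substituting $a^2/2 = 4\delta(1-\delta)Q^{-1}(p_e)^2/\ln 2$ and dividing by $1-h_2(\delta)$ gives the stated expression for $n^*$.

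The main step requiring care is the bookkeeping of error terms when inverting Theorem~\ref{bounds for BSC}: the Taylor expansion of the square root, the replacement $\log_2\kappa \to \log_2 k$, and the final squaring each contribute residual errors, and one must verify that all of them collapse into a single $\mathcal{O}(\log k)$ term without disturbing the explicit constant $a^2/2$, which arises entirely from squaring the $a/2$ shift produced by the quadratic formula. A minor technicality is that $n^*$ must be an integer while the inverted $\kappa$ is generally not; rounding up incurs an $\mathcal{O}(1)$ discrepancy that is readily absorbed into the $\mathcal{O}(\log k)$ error.
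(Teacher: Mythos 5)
Your proposal is correct and follows essentially the same route as the paper: set $\log_2 M^*(n^*,p_e)=k$ in Theorem~\ref{bounds for BSC}, solve the resulting quadratic in $\sqrt{\kappa}$, expand the square root, and absorb the $\log_2\kappa\approx\log_2 k$ replacement and residual terms into $\mathcal{O}(\log k)$. The only cosmetic looseness is your attribution of the constant $a^2/2$ solely to squaring the $a/2$ shift (half of it actually comes from the $a^2/4$ under the radical), but since the statement already carries an $\mathcal{O}(\log k)$ term this bookkeeping does not affect correctness.
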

\begin{proof}
\vspace{-2mm}
	See Section~\ref{BSC proofs2} in the Appendix.
\end{proof}

\subsection{The Additive White Gaussian Noise Channel}
\label{sec:awgn}

First, let us further clarify our description of coding over the AWGN channel.  We consider $n$ uses of a real AWGN channel in which the input $X_i$ and the output $Y_i$ at each $i=1,\dots,n$ are related as $Y_i=X_i+Z_i$. Here, the noise term $\set{Z_i}_{i=1}^n$ is a memoryless, stationary Gaussian process with zero mean and unit variance. 
Given an $(M,p_e)$-code for  $W^n$, where $W$ is the AWGN channel, a cost constraint on the codewords must be applied. The most commonly used cost is 
\vspace{-2mm}
\begin{equation}
    \forall m\in \mathcal{M}:\quad \norm{f_{enc}(m)}_2^2=\sum_{i=1}^n\left(f_{enc}(m)\right)_i^2\leq \eta\, n,
    \vspace{-2mm}
\end{equation}
where $\eta$, with a slight abuse of notation, refers to SNR. Since characterization of the code depends on the SNR $\eta$, we denote an $(M,p_e)$-code and $M^*(n,p_e)$ by $(M,p_e,\eta)$-code and $M^*(n,p_e,\eta)$, respectively.

Similar to BSC, the channel variations in low-capacity AWGN channels are best approximated by the central limit theorem. The following theorem is obtained by using the ideas in \cite[Theorem~73]{polyphd} with slight modifications. It turns out that coding bounds for AWGN in the low-capacity regime can be obtained as a limiting case of the state-of-the-art bounds in the moderate-capacity regime. The following theorem and corollary are resulted simply by repeating the same argument in a manner that remains valid for these limiting cases. This needs a tiny refinement of the analysis which is done in the Appendix.
\begin{theorem}[Finite-Length Coding Bounds for Low-Capacity AWGN] \label{bounds for AWGN} 
Consider transmission over AWGN($\eta$) in low-capacity regime and let $\kappa = \frac n2 \log_2(1+\eta)$. Then,
\vspace{-2mm}
	\begin{equation}\label{AWGN b}
		\log_2 M^*(n, p_e,\eta ) = \kappa -\frac{\sqrt{\eta+2}}{(\eta+1)\sqrt{\ln2}}\cdot \sqrt{\kappa}\,Q^{-1}( p_e )+\mathcal{E},
		\vspace{-2mm}
	\end{equation}
	\vspace{-2mm}
	where
	\vspace{-2mm}
	\begin{equation}
	\mathcal{O}(1)\leq \mathcal{E}\leq \frac 12\log_2 \kappa-\log_2p_e+\mathcal{O}\left(\frac{1}{\sqrt{-\log p_e}}\right).
	\vspace{-2mm}
	\end{equation}
\end{theorem}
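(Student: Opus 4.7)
The plan is to adapt the non-asymptotic achievability and converse bounds for the AWGN channel from Theorem~73 of \cite{polyphd} (referenced in the excerpt) to the low-capacity setting, tracking constants carefully and reparameterizing the expansion in terms of $\kappa=nC$ rather than $n$. The starting points are identical to that analysis: the achievability lower bound uses the $\kappa\beta$ random-coding bound with an i.i.d.\ Gaussian input distribution on the power sphere of radius $\sqrt{n\eta}$, and the converse upper bound uses the meta-converse against the product auxiliary output distribution $\prod_{i=1}^{n}\mathcal{N}(0,1+\eta)$. In both directions, the dominant random object is the sum of information densities $S_n=\sum_{i=1}^{n}i(X_i;Y_i)$, whose distribution is approximated by a Gaussian via Berry--Esseen; the $\tfrac{1}{2}\log_2 n$ and $-\log_2 p_e$ terms of the upper slack arise from a Petrov-type refinement applied to the binary hypothesis testing quantity $\beta_{1-p_e}$.

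The main technical step is to rewrite $\sqrt{nV}\,Q^{-1}(p_e)$ in terms of $\kappa$. Using $n=2\kappa/\log_2(1+\eta)$ together with $V=\frac{\eta(\eta+2)}{2(\eta+1)^{2}\ln^{2} 2}$ and $C=\frac{1}{2}\log_2(1+\eta)$, one finds
\begin{equation*}
\sqrt{nV}\;=\;\sqrt{\kappa}\,\sqrt{V/C}\;=\;\sqrt{\kappa}\,\sqrt{\frac{\eta(\eta+2)}{(\eta+1)^{2}\,\ln 2\,\ln(1+\eta)}}.
\end{equation*}
In the low-capacity regime $\eta$ is small, so the Taylor expansion $\ln(1+\eta)=\eta\bigl(1+O(\eta)\bigr)$ yields
\begin{equation*}
\sqrt{V/C}\;=\;\frac{\sqrt{\eta+2}}{(\eta+1)\sqrt{\ln 2}}\,\bigl(1+O(\eta)\bigr),
\end{equation*}
matching the coefficient stated in the theorem. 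The residual slack $O(\eta)\sqrt{\kappa}\,Q^{-1}(p_e)$ is absorbed into $\mathcal{E}$ because $\eta=O(\kappa/n)$ is small in the regime of interest. Similarly, $\tfrac{1}{2}\log_2 n=\tfrac{1}{2}\log_2\kappa+\tfrac{1}{2}\log_2(1/C)$, and the extra $\log_2(1/C)$ is charged against the $\mathcal{O}(1/\sqrt{-\log p_e})$ term. The lower bound $\mathcal{E}\geq\mathcal{O}(1)$ follows directly from the achievability side of the same argument.

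The main obstacle will be verifying that the Berry--Esseen and Petrov-type refinements remain uniform as $\eta\to 0$. Both the per-letter variance $V$ and the third central moment $\rho_3$ of $i(X_1;Y_1)$ vanish with $\eta$, so one must check that the skewness ratio $\rho_3/V^{3/2}$ stays bounded; otherwise the Berry--Esseen remainder would degrade from the harmless $O(1/\sqrt{\kappa})$ into something that no longer fits inside $\mathcal{E}$. A related care point is tracking how $\beta_{1-p_e}$ for the Gaussian auxiliary output depends on $\eta$, so that the final constants in the upper slack match the advertised expression. Once these uniformity checks are dispatched, the remainder of the argument is essentially a mechanical reparameterization from $n$ to $\kappa$, precisely the mild analysis refinement alluded to in the statement of the theorem.
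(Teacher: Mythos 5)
Your overall strategy is the same as the paper's: start from the meta-converse evaluated against the product output distribution (the paper's Theorem~\ref{AWGN conv.} plus Lemma~\ref{beta}) and the achievability of \cite[Theorem~73]{polyphd}, then reparameterize from $n$ to $\kappa$ via the Taylor expansion of $\ln(1+\eta)$; your computation of the second-order coefficient, $\sqrt{nV}=\sqrt{\kappa}\sqrt{V/C}=\frac{\sqrt{\eta+2}}{(\eta+1)\sqrt{\ln 2}}\sqrt{\kappa}\,(1+\mathcal{O}(\eta))$, is exactly the paper's Lemma~\ref{eta} step, and your worry about uniformity of Berry--Esseen as $\eta\to 0$ is legitimate but harmless (the information density fluctuation is dominated by $\frac{\sqrt{\eta}}{(1+\eta)\ln 2}Z_i$, so $B=T/V^{3/2}$ stays bounded, which is why the paper can dismiss $T,B$).

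The genuine gap is in how you produce the third-order term of the upper slack. You propose to take the standard expansion with $\tfrac12\log_2 n$ and write $\tfrac12\log_2 n=\tfrac12\log_2\kappa+\tfrac12\log_2(1/C)$, charging the extra $\tfrac12\log_2(1/C)$ against the $\mathcal{O}(1/\sqrt{-\log p_e})$ term. This cannot work: in the low-capacity regime $\log_2(1/C)$ is unbounded (e.g., $\Theta(\log n)$ when $C=\Theta(1/n)$), while $\mathcal{O}(1/\sqrt{-\log p_e})$ is a bounded, in fact small, quantity; keeping that term would leave you with an upper bound of order $\tfrac12\log_2 n$ rather than $\tfrac12\log_2\kappa-\log_2 p_e$, i.e., exactly the looseness the theorem is meant to remove. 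The paper never passes through an $n$-parameterized third-order result: it keeps the free parameter $\gamma$ in Lemma~\ref{beta}, so the converse reads $\log_2 M^*\le nD-\sqrt{nV}\,Q^{-1}\!\left(p_e+\gamma+\tfrac{B}{\sqrt n}\right)-\log_2\gamma$, and then chooses $\gamma=p_e/\sqrt{\kappa}$ (scaled with $\kappa$, not $n$). The term $-\log_2\gamma$ then delivers $\tfrac12\log_2\kappa-\log_2 p_e$ directly, while the perturbation inside $Q^{-1}$ is controlled by the Taylor expansion of $Q^{-1}$ together with the Gaussian-tail bounds of Lemma~\ref{Q}: the shift $\gamma+B/\sqrt n$ multiplied by $\sqrt{\kappa}\,\frac{d}{dx}Q^{-1}$ contributes only $\mathcal{O}(1/\sqrt{-\log p_e})$. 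To repair your argument, replace the conversion of $\tfrac12\log_2 n$ by this explicit choice of $\gamma$ in the $\beta_{1-p_e}$ lower bound (and note, relatedly, that the standard refinement does not hand you an explicit $-\log_2 p_e$ term to begin with; it only appears through this choice of $\gamma$).
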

\begin{proof}
%\vspace{-2mm}
	See Section~\ref{AWGN proofs1} in the Appendix.
\end{proof}
The same considerations about $\mathcal{O}(\cdot)$ notation, as discussed earlier, should be taken into account here. Also note that as for BEC and BSC, the optimal blocklength for the AWGN channel can be expressed in terms of other parameters in the low-capacity regime which is stated in the following corollary.
\begin{corollary}\label{optimal n for AWGN} 
	Consider transmission of $k$ information bits over a low-capacity AWGN($\eta$). Then, the optimal blocklength $n^*$ for such transmission is
	%\vspace{-2mm}
	\begin{align}
		n^* =\frac 2{\log_2(1+\eta)} \left(k+\frac{\sqrt{\eta+2}}{(\eta+1)\sqrt{\ln2}}Q^{-1}( p_e )\cdot \sqrt{k}+\mathcal{O}\left(\log \frac{1}{p_e}\right)\right).
	\end{align}
\end{corollary}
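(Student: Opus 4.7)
The plan is to invert the bound provided by Theorem~\ref{bounds for AWGN}. By definition, $n^*$ is the smallest blocklength for which the transmission of $k$ information bits at error probability $p_e$ is feasible, i.e.\ for which $\log_2 M^*(n,p_e,\eta)\ge k$. Treating this as an equation $\log_2 M^*(n^*,p_e,\eta)=k$ and using the substitution $\kappa^*=(n^*/2)\log_2(1+\eta)$, the target identity $n^*=2\kappa^*/\log_2(1+\eta)$ will transport any asymptotic expansion of $\kappa^*$ into one for $n^*$.

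For convenience I would set $\beta:=\tfrac{\sqrt{\eta+2}}{(\eta+1)\sqrt{\ln 2}}\,Q^{-1}(p_e)$, so that Theorem~\ref{bounds for AWGN} becomes
\[
\kappa^*-\beta\sqrt{\kappa^*}+\mathcal{E}^*=k,
\]
with $\mathcal{E}^*$ sandwiched by the two-sided bound from the theorem. Viewed as a quadratic in $u:=\sqrt{\kappa^*}$, this reads $u^2-\beta u-(k-\mathcal{E}^*)=0$, whose positive root is
\[
u=\tfrac{1}{2}\bigl(\beta+\sqrt{\beta^2+4(k-\mathcal{E}^*)}\bigr).
\]
For the regime of interest (large $k$ relative to $\beta^2$), a single-term Taylor expansion of $\sqrt{\beta^2+4(k-\mathcal{E}^*)}$ around $4(k-\mathcal{E}^*)$ followed by squaring produces
\[
\kappa^*=k-\mathcal{E}^*+\beta\sqrt{k-\mathcal{E}^*}+\tfrac{\beta^2}{4}+\mathcal{O}\!\bigl(1\bigr),
\]
and then a second Taylor expansion of $\sqrt{k-\mathcal{E}^*}$ yields $\kappa^*=k+\beta\sqrt{k}+\tfrac{\beta^2}{4}-\mathcal{E}^*+\text{(lower-order terms)}$. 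Scaling by $2/\log_2(1+\eta)$ then gives the claimed form of $n^*$, provided the residual pieces collapse into $\mathcal{O}(\log_2(1/p_e))$.

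The main technical step, and the only delicate part, is verifying this collapse for every admissible value of $\mathcal{E}^*$ in the two-sided range $\mathcal{O}(1)\le\mathcal{E}^*\le\tfrac{1}{2}\log_2\kappa^*-\log_2 p_e+\mathcal{O}(1/\sqrt{-\log p_e})$. I would use the standard tail behaviour $Q^{-1}(p_e)=\Theta(\sqrt{\log(1/p_e)})$ for small $p_e$, which makes $\tfrac{\beta^2}{4}=\Theta(\log(1/p_e))$ and so absorbs naturally into the stated error. To control the $\tfrac{1}{2}\log_2\kappa^*$ piece inside $\mathcal{E}^*$, I would run a one-shot bootstrap: plug a rough estimate $\kappa^*=k+\mathcal{O}(\sqrt{k\log(1/p_e)})$ back into the upper bound on $\mathcal{E}^*$ so that $\log_2\kappa^*$ depends only on $k$ and $p_e$, and then confirm that the resulting contribution to $n^*$ fits inside $\mathcal{O}(\log_2(1/p_e))$ under the low-capacity regime assumption relating $k$ and $p_e$. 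Once this verification is in place, the expansion above directly yields the corollary.
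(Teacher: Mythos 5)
Your proposal follows essentially the same route as the paper's own proof: set $\log_2 M^*(n^*,p_e,\eta)=k$ in Theorem~\ref{bounds for AWGN}, solve the resulting quadratic in $\sqrt{\kappa^*}=\sqrt{n^*C}$, Taylor-expand the root, and absorb the residual terms (including $\mathcal{E}$ and the $\beta^2/4$ piece) into $\mathcal{O}(\log_2(1/p_e))$ using $Q^{-1}(p_e)^2=\Theta(\log(1/p_e))$, i.e., Lemma~\ref{Q}(ii), together with $\log\kappa\approx\log k$ at the optimal blocklength. Your bootstrap step for handling the $\tfrac12\log_2\kappa^*$ term inside $\mathcal{E}$ is if anything slightly more explicit than the paper's "comparing the orders", so the proposal is correct.
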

\begin{proof}
	See Section~\ref{AWGN proofs2} in the Appendix. 
\end{proof}
As we mentioned earlier, Theorem~\ref{bounds for AWGN} is obtained directly by taking the limit of the AWGN results in \cite{polypaper}. The Corollary~\ref{optimal n for AWGN} is consequently a limiting case of the moderate-regime analysis in \cite{polypaper}. Therefore, the contribution of these results is merely the emphasis on the fact that when the underlying channel is AWGN, the moderate-capacity regime analysis in \cite{polypaper} may extend well to the low-capacity regime only by taking the limit. Having this, we do not numerically evaluate Corollary~\ref{optimal n for AWGN} on the AWGN channel.

\vspace*{-2ex}
\section{The Analysis of Practical Code Designs} \label{sec:design}

\vspace*{-2ex}
As we need to design codes with extremely low rates, some of the state-of-the-art codes may not be directly applicable. A notable instance is the class of iterative codes, e.g., turbo or low-density parity-check (LDPC) codes. It is well known that decreasing the design rate of iterative codes results in denser decoding graphs which further leads to highly complex iterative decoders with poor performance. E.g., an $(l,r)$-regular LDPC code with design rate $R = 0.01$ requires  $r,l \geq 99$. Hence, the Tanner graph will have a minimum degree of at least $99$ and even for code block lengths on the order of  tens of thousands, the Tanner graph will have many short cycles. To circumvent this issue, the current practical designs, e.g., the NB-IoT standard, use repetition coding.  More specifically, a low-rate repetition code is concatenated with a powerful moderate-rate code. For example, an iterative code of rate $R$ and length $n/r$ can be repeated $r$ times to construct a code of length $n$ with rate $R/r$. In Section~\ref{sec:repeat}, we will discuss the advantages and drawbacks of using repetition schemes along with trade-offs between the number of repetitions and the performance of the code.

Unlike iterative codes, polar codes and most algebraic codes (e.g., BCH or Reed-Muller codes) can be used without any modification for low-rate applications.  In Section~\ref{sec:polar}, we look into polar coding for low-capacity channels. In particular, we show that polar coding is advantageous in terms of inherently adopting an optimal number of repetitions. Theorems~\ref{r} and \ref{r_bms} provide tight bounds on the optimal number of repetitions in terms of the capacity, that are not specific to polar codes, and Theorem~\ref{polar_repetition} shows that the construction of polar codes naturally adopts a certain number of repetition blocks in the low-capacity regime that match the optimal number of repetitions (up to a constant multiplicative factor). 
%however, its encoding and decoding algorithms have to be carefully adjusted to reduce complexity and latency for practical applications. 

Throughout this section, we will consider code design for the class of binary memoryless symmetric (BMS) channels. A BMS channel $W$ has binary input and, letting $W(y\mid x)$ denotes the transition matrix, there exists a permutation $\pi$ on the output alphabet such that  $W(y \mid 0) = W(\pi(y) \mid 1)$. Notable exemplars of this class are BEC, BSC, and BAWGN channels.\footnote{For a relevant study of code design at a low-SNR scenario, see \cite{Posner67,ChaoLowSNR}. }

\subsection{How Much Repetition is Needed?} \label{sec:repeat}
As mentioned earlier, repetition is a straightforward way to design practical low-rate codes while utilizing the power of state-of-the-art code designs. Let $r$ be a divisor of $n$, where $n$ denotes the length of the code. 
Repetition coding consists of designing first a smaller outer code of length $n/r$ and then repeating each of the coded bits $r$ times (i.e., the inner code is a repetition code of rate $1/r$). The length of the overall code is then $n/r\cdot r = n$.  This is equivalent to transmitting the outer code over the \emph{$r$-repetition channel}, $W^r$, which takes a bit as the input and outputs an $r$-tuple which is the result of passing $r$ copies of the input bit independently through the original channel $W$.   E.g., if $W$ is BEC($\epsilon$) then its corresponding $r$-repetition channel is  $W^r = \text{BEC}(\epsilon^r)$. 

The main advantage of repetition coding is the reduction in computational complexity, especially when $r$ is large. This is because the encoding/decoding complexity is effectively reduced to that of  the outer code, i.e., once the outer code is constructed, at the encoding side, we just need to repeat each of its coded bits $r$ times, and at the decoding side the log-likelihood of an $r$-tuple consisting of $r$ independent transmissions of a bit is equal to the sum of the log-likelihoods of the individual channel outcomes. The computational latency of the encoding and decoding algorithms is reduced to that of the outer code in a similar fashion.    

The outer code has to be designed for reliable communication over the channel $W^r$. If $r$ is sufficiently large, then the capacity of $W^r$ will not be low anymore. In this case, the outer code can be picked from off-the-shelf practical codes designed for channels with moderate capacity values (e.g., iterative or polar codes). While this looks promising, one should note that the main drawback of repetition coding is the loss in capacity.  In general, we have $C(W^r) \leq rC(W)$ and the ratio vanishes by growing $r$.  As a result, if $r$ is very large then repetition coding might suffer from an unacceptable rate loss.  Thus, the main question that we need to answer is: how large $r$ can be made such that the rate loss is still \textit{negligible}?   

We note that the overall capacity corresponding to $n$ channel transmissions is $nC(W)$. With repetition coding, the capacity will be reduced to $n/r \cdot C(W^r)$ since we transmit $n/r$ times over the channel $W^r$.  For any $\beta \in [0,1]$, we ask what is the largest repetition size $r_\beta$ such that 
\vspace{-2mm}
\begin{equation} \label{r_constraint}
\frac{n}{r_\beta} C(W^{r_\beta}) \geq \beta n C(W).  
\vspace{-2mm}
\end{equation}  
Let us first assume that transmission takes place over BEC($\epsilon$). We thus have $W^r = \text{BEC}(\epsilon^r)$. If $\epsilon$ is not close to $1$, then even $r=2$ would result in a considerable rate loss, e.g., if $\epsilon = 0.5$, then $ C(W^2) = 0.75$ whereas $2C(W) = 1$.  However, when $\epsilon$ is close to $1$, then at least for small values of $r$ the rate loss can be negligible, e.g., for $r = 2$, we have $C(W^2) =    1-\epsilon^2 \approx 2(1-\epsilon) = 2C(W)$. The following theorem provides lower and upper bounds for the largest repetition size $r_\beta$ that satisfies \eqref{r_constraint}. 

\begin{theorem}[Maximum Repetition Length for BEC]
	\label{r}
	If $W = \text{BEC}(\epsilon)$, then for the largest repetition size $r_\beta$ that satisfies \eqref{r_constraint}, we have
	\begin{equation}
	\label{rbound}
	\frac{n(1-\epsilon)\ell}{2\left(1-\frac{\beta}{\ell}\right)}\cdot\left(\frac{\beta}{\ell}\right)^2 \le \frac{n}{r_{\beta}} \le \frac{n(1-\epsilon)\ell}{2\left(1-\frac{\beta}{\ell}\right)},
	\end{equation}
	where $\ell = -{(\ln \epsilon)}/{(1-\epsilon)} $. Equivalently, assuming $\kappa = n(1-\epsilon)$, \eqref{rbound} becomes
	\begin{equation}
	\frac{\kappa}{2\left(1-\beta\right)}\cdot\beta^2 (1 + \mathcal{O}(1-\epsilon)) \le \frac{n}{r_{\beta}} \le     \frac{\kappa}{2\left(1-\beta\right)} (1 + \mathcal{O}(1-\epsilon)).
	\end{equation}
\end{theorem}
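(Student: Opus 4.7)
\emph{Proof proposal.} The plan is to reduce the theorem to bounding the unique positive root of a single $\epsilon$-independent transcendental equation, and then extract that bound from two elementary inequalities for $1-e^{-u}$. I begin by substituting $C(W)=1-\epsilon$ and $C(W^r)=1-\epsilon^r$ into \eqref{r_constraint}, which becomes the scalar condition $(1-\epsilon^{r})/r\ge\beta(1-\epsilon)$; since the left-hand side is strictly decreasing in $r$, the largest admissible $r_\beta$ satisfies $(1-\epsilon^{r_\beta})/r_\beta=\beta(1-\epsilon)$ up to integer rounding. The change of variables $u:=-r_\beta\ln\epsilon=r_\beta(1-\epsilon)\ell$ recasts the equality as the $\epsilon$-free equation
\begin{equation*}
\ell\,(1-e^{-u})=\beta u,
\end{equation*}
while the quantity of interest transforms as $n/r_\beta=\kappa\ell/u$, with $\kappa=n(1-\epsilon)$. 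Hence I only need to bound the positive root $u_*$ of this equation from both sides and then invert.

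For the upper bound of \eqref{rbound} on $n/r_\beta$, I would use the elementary estimate $1-e^{-u}\ge u-u^2/2$ (equivalent to $e^{-u}\le 1-u+u^2/2$ for $u\ge 0$). Substituting into $\ell(1-e^{-u_*})=\beta u_*$ and dividing by $u_*>0$ yields $\ell(1-u_*/2)\le\beta$, i.e.\ $u_*\ge 2(1-\beta/\ell)$, which inverts to $n/r_\beta\le\kappa\ell/[2(1-\beta/\ell)]$.

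For the matching lower bound on $n/r_\beta$ I need an upper bound on $u_*$; the naive $1-e^{-u}\le 1$ gives only $u_*\le\ell/\beta$, which is too weak whenever $\beta$ is comparable to $\ell$. The right tool is the sharper rational inequality
\begin{equation*}
\frac{1-e^{-u}}{u}\le\frac{1}{1+u/2}\qquad(u\ge 0),
\end{equation*}
which I would prove by setting $q(u):=e^{-u}(1+u/2)-(1-u/2)$, checking $q(0)=q'(0)=0$, and verifying $q''(u)=(u/2)e^{-u}\ge 0$, so $q\ge 0$. Applied to the equation at $u_*$ it gives $\beta/\ell\le 1/(1+u_*/2)$, hence $u_*\le 2(\ell-\beta)/\beta\le 2\ell(\ell-\beta)/\beta^2$, the last step using $\beta\le 1\le\ell$ (where $\ell\ge 1$ follows from $-\ln\epsilon\ge 1-\epsilon$). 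Inverting back yields $n/r_\beta\ge\kappa\ell(\beta/\ell)^2/[2(1-\beta/\ell)]$, as claimed.

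The second form of \eqref{rbound} then follows from the Taylor expansion $\ell=1+(1-\epsilon)/2+\mathcal{O}((1-\epsilon)^2)=1+\mathcal{O}(1-\epsilon)$, which absorbs every factor of $\ell$ into a multiplicative $1+\mathcal{O}(1-\epsilon)$; the integer constraint on $r_\beta$ perturbs the continuous solution by at most one unit in $r$, i.e.\ $\mathcal{O}(1-\epsilon)$ in $u$, and is likewise absorbed. The main technical obstacle, I expect, will be identifying the sharp upper bound $(1-e^{-u})/u\le 1/(1+u/2)$: the next-order Taylor continuation $1-u/2+u^2/6$ produces an implicit quadratic inequality with no clean closed form, whereas the rational estimate above encodes the correct first two Taylor terms in exact form and delivers the stated power $(\beta/\ell)^2$ with a single line of algebra.
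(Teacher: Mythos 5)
Your argument is correct, and its skeleton coincides with the paper's proof: the same reduction of \eqref{r_constraint} to the scalar equation (your $\ell(1-e^{-u_*})=\beta u_*$ is exactly the paper's $1-\gamma z=e^{-z}$ with $z=u$, $\gamma=\beta/\ell$), and the same use of $e^{-u}\le 1-u+u^2/2$ to get $u_*\ge 2(1-\beta/\ell)$, hence the upper bound in \eqref{rbound}. Where you genuinely diverge is the other side: the paper takes logarithms of $1-\gamma z=e^{-z}$ and applies $\ln(1-w)\le -w-w^2/2$, which delivers $z\le 2(1-\gamma)/\gamma^2$ and therefore the stated lower bound $\kappa\ell(\beta/\ell)^2/\bigl(2(1-\beta/\ell)\bigr)$ directly; you instead invoke the rational estimate $(1-e^{-u})/u\le (1+u/2)^{-1}$ (your convexity proof of it checks out), which gives the sharper intermediate bound $u_*\le 2(\ell-\beta)/\beta$ and hence $n/r_\beta\ge \kappa\ell\beta/\bigl(2(\ell-\beta)\bigr)$, a lower bound stronger than \eqref{rbound} by a factor $\ell/\beta\ge 1$; you then deliberately weaken it via $\beta\le 1\le\ell$ to recover the stated form. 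So your route buys a slightly better constant in the lower bound at the cost of one extra inequality and the observation $\ell\ge 1$, while the paper's logarithmic trick lands on the advertised $(\beta/\ell)^2$ in one step; both treat the integrality of $r_\beta$ and the final $1+\mathcal{O}(1-\epsilon)$ rewriting (via $\ell=1+\mathcal{O}(1-\epsilon)$) at the same informal level, the paper in fact not spelling the latter out at all.
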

\begin{proof}
	See Section~\ref{rpp1} in the Appendix. 
\end{proof}

\begin{remark} \label{r_remark}
	Going back to the results of Theorem~\ref{Bounds for BEC}, in order to obtain similar finite-length guarantees with repetition-coding, a necessary condition is that the total rate loss due to repetition is $\mathcal{O}(1/n)$, i.e.,
	\vspace{-2mm}
	\begin{equation}
	    \frac{n}{r_\beta} C(W^{r_\beta}) =  n C(W) + \mathcal{O}(1).
	    \vspace{-2mm}
	\end{equation}
	If $W = \text{BEC}(\epsilon)$ and $\kappa = n(1 -\epsilon)$,  then the necessary condition implies plugging $\beta = 1 - \mathcal{O}(1/\kappa)$ into \eqref{r_constraint}. Moreover, from Theorem~\ref{r} we can conclude that, when $\epsilon$ is close to $1$, the maximum allowed repetition size is $\mathcal{O}\left(n / \kappa^2\right)$. Equivalently, the size of the outer code can be chosen as $\mathcal{O}(\kappa^2)$.
\end{remark}

A noteworthy conclusion from the above remark is that as having negligible rate loss implies the repetition size to be at most $\mathcal{O}(n/\kappa^2)$, then the outer code has to be designed for a BEC with erasure probability at least $\epsilon^{\mathcal{O}(n/\kappa^2)} = 1 - \mathcal{O}(1/\kappa)$.  This means that the outer code should still have a low rate even if $\kappa$ is as small as a few tens. Thus, the idea of using codes such as iterative codes as the outer code and repetition codes as the inner code will lead to an efficient low-rate design only if we are willing to tolerate non-negligible rate loss. 
%We refer to Section~\ref{sec:simulations} for a numerical case study on repetition coding.  
In contrast, the polar coding construction has implicitly a repetition block of optimal size $\mathcal{O}(n/\kappa^2)$ as we will see in the next section.   

%It turns out that the binary erasure channel has the smallest rate loss due to repetition among all the BMS channels. 
In Appendix~\ref{rpp2}, we show that the binary erasure channel has the smallest rate loss due to repetition among all the BMS channels. This property has been used in the following theorem to provide an upper bound on $r_\beta$ for any BMS channel. 

\begin{theorem}[Upper Bound on Repetition Length for any BMS] \label{r_bms}
	Among all BMS channels with the same capacity, BEC has the largest repetition length $r_\beta$ that satisfies \eqref{r_constraint}.   Hence, for any BMS channel with capacity $C$ and $\kappa = nC$, we have 
	\begin{equation}
	    \frac{n}{r_\beta} \geq  \frac{\kappa}{2(1-\beta)} \beta^2 (1 + \mathcal{O}(1 - C)).
	\end{equation}
\end{theorem}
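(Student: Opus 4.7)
I would derive Theorem~\ref{r_bms} from Theorem~\ref{r} by first establishing an extremal property of the BEC: among all BMS channels of a given capacity $C$, the BEC is the one whose $r$-fold repetition channel has the largest capacity. Concretely, my target inequality is
\[
C(W^r)\;\le\;1-(1-C)^r
\]
for every BMS $W$ with $C(W)=C$, with equality when $W=\text{BEC}(1-C)$. Once this is in hand, monotonicity of the defining constraint~\eqref{r_constraint} in $C(W^r)$ immediately gives $r_\beta^{W}\le r_\beta^{\text{BEC}(1-C)}$, so plugging $\epsilon=1-C$ into Theorem~\ref{r} yields the stated lower bound on $n/r_\beta$.

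\textbf{Step 1 --- two-channel combining.} The technical core is the base case: for any two BMS channels $W_1,W_2$ with capacities $C_1,C_2$ and a uniform binary $X$ observed independently through each,
\[
I(X;Y_1,Y_2)\;\le\;1-(1-C_1)(1-C_2),
\]
with equality for the BEC-BEC pair. This is the classical ``BEC-is-extremal for information combining'' inequality. My proof approach is to decompose each BMS channel into a mixture of BSCs indexed by the output LLR magnitude, write the joint posterior entropy as $H(X\mid Y_1,Y_2)=E[h_2(Q)]$ with $Q=P(X=1\mid Y_1,Y_2)$, and then run a Schur-concavity / majorization argument: replacing each BSC atom by a BEC atom with matching capacity contribution pushes the posterior mass toward $\{0,1/2,1\}$, which minimises $E[h_2(Q)]$ and hence maximises $I(X;Y_1,Y_2)$.

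\textbf{Step 2 --- iteration and conclusion.} Since $W^{r-1}$ is itself a BMS channel, I apply Step~1 inductively to $W^r = W\boxast W^{r-1}$:
\[
C(W^r)\;\le\;1-(1-C)\bigl(1-C(W^{r-1})\bigr)\;\le\;1-(1-C)^r.
\]
The left-hand side of~\eqref{r_constraint} is monotone in $C(W^r)$, so every $r$ satisfying the constraint for $W$ also satisfies it for $\text{BEC}(1-C)$, giving $r_\beta^{W}\le r_\beta^{\text{BEC}(1-C)}$ and hence $n/r_\beta^{W}\ge n/r_\beta^{\text{BEC}(1-C)}$. Invoking Theorem~\ref{r} with $\epsilon=1-C$ and $\kappa=nC$ then produces the desired lower bound.

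\textbf{Main obstacle.} The real work is Step~1; the rest is bookkeeping. Although the BEC-extremal information-combining inequality is classical in the polar-coding literature, producing a clean self-contained proof is delicate: one has to set up the LLR-magnitude decomposition of a BMS channel, track how two independent posteriors combine multiplicatively in the odds ratio, and verify the majorization step that converts BSC atoms into BEC atoms while preserving the per-channel capacity. That majorization step is where I would expect the bulk of the technical effort to lie.
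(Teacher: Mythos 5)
Your proposal is correct and follows essentially the same route as the paper: the paper likewise reduces the claim to the two-channel extremes-of-information-combining inequality $C(W_1 \circledast W_2) \le C\bigl(\mathrm{BEC}(1-C_1)\circledast \mathrm{BEC}(1-C_2)\bigr)$, iterates it $r$ times to get $C(W^r)\le C\bigl(\mathrm{BEC}(1-C)^r\bigr)$, and then concludes via Theorem~\ref{r} with $\epsilon=1-C$. The only difference is that the paper simply cites this combining inequality from the literature (Richardson--Urbanke, Chapter~4) instead of re-deriving it, whereas your Step~1 proposes a from-scratch LLR-decomposition/majorization proof that you leave as a sketch and would need to carry out in full.
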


\begin{proof}
	See Section~\ref{rpp2} in the Appendix. 
\end{proof}
\begin{remark} \label{r_remark_bms}
	Similar to Remark~\ref{r_remark}, we can conclude that for any BMS channel with low capacity, in order to have the total rate loss of order $\mathcal{O}(1)$, the repetition size should be at most $\mathcal{O}(n/\kappa^2)$.
\end{remark} 

\subsection{Polar Coding and Repetition at Low Capacity} \label{sec:polar}

We have shown in Section~\ref{sec:repeat} that the maximum allowed repetition size to have negligible capacity loss is $\mathcal{O}(n/\kappa^2)$. We show in this section that at low-capacity regime, the polar construction is enforced to have $\mathcal{O}(n/\kappa^2)$ repetitions. In other words, the resulting polar code is equivalent to a smaller polar code of size $\mathcal{O}(\kappa^2)$ followed by repetitions. Consequently, the encoder and decoder of the polar code could be implemented with much lower complexity taking into account the naturally adopted repetitions. That is, the encoding complexity can be reduced to $n + \mathcal{O}( \kappa^2 \log \kappa)$ and the decoding complexity using the successive cancellation list (SCL) decoder with list size $L$, proposed by Tal and Vardy \cite{tal2015list}, is reduced to $n + \mathcal{O}(L \kappa^2 \log \kappa)$. Recall that the original implementation of polar codes requires $\mathcal{O}(n \log n)$ encoding complexity and $\mathcal{O}(L n \log n)$ decoding complexity. Moreover, as the operations involving repeated blocks can all be done in parallel, the computational \emph{latency} of the encoding and decoding operations can be reduced to  $\mathcal{O}( \kappa^2 \log \kappa)$ and $\mathcal{O}( L\kappa^2 \log \kappa)$, respectively. To further reduce the complexity, the simplified SC decoder \cite{alamdar2011simplified} or relaxed polar codes \cite{el2017relaxed} can be invoked. Such complexity reductions are important for making polar codes a suitable candidate in practice. In a related work, designing low-rate codes for BSCs by concatenating high rate polar codes together with repetitions is considered \cite{dumer2017polar}. Furthermore, following the publication of the initial version of this paper, several works have looked into improving low-rate polar codes by either tweaking the repetition or concatenating them with other codes \cite{abbasi2022polar,dumer2021codes, abbasi2022hybrid}.

\begin{theorem} \label{polar_repetition}
	Consider using a polar code of length $n = 2^m$ for transmission over a BMS channel $W $. Let $m_0 =  \log_2 (4 \kappa^2) $ where $\kappa  = n C(W) $. 
	Then any synthetic channel $W_{n}^{(i)}$ whose Bhattacharyya value is less than $\frac 12$ has at least $m-m_0$ plus operations in the beginning.  As a result, the polar code constructed for $W$  is equivalent to the concatenation of a polar code of length (at most)
	$2^{m_0}$ followed by $2^{m - m_0}$ repetitions. 
\end{theorem}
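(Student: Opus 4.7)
My plan is to prove the contrapositive, interpreting the statement so that it is consistent with the stated structural decomposition: concretely, I will show that if $Z(W_n^{(i)}) < 1/2$, then the first $m-m_0$ operations in the composition $W_m^{(i)}=(((W^{b_1})^{b_2})^{\cdots})^{b_m}$ (those closest to $W$) must all be plus. So assume for contradiction that $b_j=0$ for some $j\le m-m_0$; the goal is to derive $Z(W_n^{(i)})\ge 1/2$.

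The first ingredient is the standard BMS bound $Z(W)+C(W)\ge 1$, which yields $\delta_0:=1-Z(W)\le \kappa/n$. Let $\delta_j:=1-Z(W_j)$ where $W_j$ is the channel after the first $j$ operations. The Bhattacharyya recursions then become
\[
\delta(W^+)=\delta(W)\bigl(2-\delta(W)\bigr)\le 2\,\delta(W),\qquad \delta(W^-)\ge \delta(W)^2,
\]
where the first identity is exact for every BMS channel and the second is an equality for the BEC.

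Focusing on the BEC case (which is the cleanest), among all sequences with $b_j=0$ fixed, the arrangement that maximizes $\delta_m$ is the one where every other $b_i=1$, since further minuses only shrink $\delta$ via additional squaring. Iterating the recursions in this one-minus arrangement gives
\[
\delta_{j-1}\le 2^{j-1}\delta_0\le 2^{j-1}\kappa/n,\qquad \delta_j=\delta_{j-1}^2\le 2^{2(j-1)}\kappa^2/n^2,\qquad \delta_m\le 2^{m-j}\delta_j\le 2^{m+j-2}\kappa^2/n^2.
\]
Plugging in $j\le m-m_0$ and $m_0=\log_2(4\kappa^2)$ yields $\delta_m\le 2^{2m-m_0-2}\kappa^2/n^2=1/16<1/2$, so $Z(W_n^{(i)})\ge 15/16$, the desired contradiction. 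The structural conclusion then follows: each good synthetic channel has $b_1=\cdots=b_{m-m_0}=1$, so the good indices form a set of size at most $2^{m_0}$, and they are exactly the synthetic channels of a polar code of length $2^{m_0}$ built over the effective channel $W^{+^{m-m_0}}$, which for the BEC is literally $2^{m-m_0}$-fold repetition of $W$.

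The main obstacle will be extending the calculation from BEC to a general BMS. For BMS only the lower bound $\delta(W^-)\ge \delta(W)^2$ holds, and in principle $\delta(W^-)$ could be as large as $\delta(W)$ with no shrinkage, inflating the final bound from $1/16$ up to $\kappa/2$, which fails to beat $1/2$ once $\kappa>1$. The remedy is to strengthen the minus bound in the low-capacity regime, for example by showing that $\delta(W^-)\le c\,\delta(W)^2$ for some constant $c$ holds for BMS channels with $Z(W)$ sufficiently close to $1$ (which can be verified directly for the BSC family, for which $\delta(W^-)\approx 2\delta(W)^2$ near capacity $0$). Propagating this refined single-step inequality through the $m$-step recursion while preserving the constant $m_0=\log_2(4\kappa^2)$ is the delicate technical step.
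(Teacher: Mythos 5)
Your overall strategy is the same as the paper's: track $\delta=1-Z$ through the synthesis, use that a plus step at most doubles $\delta$ (exactly, $\delta\mapsto\delta(2-\delta)$ since $Z(W\circledast W)=Z(W)^2$) while a minus step essentially squares it, combine this with $1-Z(W)\le C(W)$, and conclude that a zero among the first $m-m_0$ bits forces $Z(W_n^{(i)})\ge 1/2$. Your reading of the statement (the first $m-m_0$ operations must be plus) is exactly what the paper's proof establishes, namely that the unique zero position $t$ of a good single-zero index satisfies $2^{m-t+1}\le 4\kappa^2$, i.e.\ $t\ge m-m_0+1$. Your BEC computation, including the choice $m_0=\log_2(4\kappa^2)$ and the resulting bound $\delta_m\le 1/16$, is correct.

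The genuine gap is the general BMS case, which is what the theorem asserts and what is needed downstream (BSC, BAWGN). You only have $\delta(W^-)\ge\delta(W)^2$, which points the wrong way, and you leave the required upper bound as an unproven ``delicate technical step,'' proposing to establish $\delta(W^-)\le c\,\delta(W)^2$ only for $Z$ near $1$. No such restriction is needed: for every BMS channel the check-node operation satisfies $Z(W\boxast W)\ge\sqrt{2Z(W)^2-Z(W)^4}$ (the BSC is the extremal case; see \cite[Lemma~3.16]{korada2009polar}, \cite{hassani2013polarization}), and since $1-\sqrt{1-x}\le x$ for $x\in[0,1]$ this yields $1-Z(W\boxast W)\le\bigl(1-Z(W)^2\bigr)^2\le 4\bigl(1-Z(W)\bigr)^2$ universally. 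This is precisely the inequality the paper inserts into the same recursion; the extra factor $4$ is absorbed by $m_0=\log_2(4\kappa^2)$. A second, smaller issue: your ``extremal arrangement'' step (setting all other bits to $1$ maximizes $\delta_m$) is justified only for the BEC, where both maps are monotone; for general BMS the paper instead bounds the single-zero indices $i_t$ and handles an arbitrary good index by degradation (flipping any $1$ to a $0$ degrades the synthetic channel and can only increase its Bhattacharyya parameter). With the standard check-node bound and this monotonicity argument supplied, your proof closes and coincides with the paper's.
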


\begin{proof}
	See Section~\ref{proofpolar} in the Appendix.
\end{proof}
\begin{remark}
	Note that from Theorem~\ref{polar_repetition}, polar codes automatically perform repetition coding with $\mathcal{O}(n/\kappa^2)$ repetitions, where $\kappa = nC$. This matches the necessary (optimal) number of repetitions given in Remark~\ref{r_remark}~and~\ref{r_remark_bms}.
\end{remark}

\vspace*{-3ex}
\section{Numerical Analysis of Fundamental Limits} \label{sec:simulations}

In this section, we numerically evaluate our channel coding bounds from Section~\ref{sec:limits}. We report the numerical results on the BEC and the BSC cases. As mentioned earlier, we do not numerically implement our results on the AWGN channel  (Theorem~\ref{bounds for AWGN} and Corollary~\ref{optimal n for AWGN}) since their contribution is merely showing that the limit of the existing moderate-capacity regime analysis will lead to a low-capacity regime bound.

For the BEC, we have compared in Figure~\ref{fig:BEC}, the lower and upper bounds obtained from Theorem~\ref{Bounds for BEC} with the predictions of Formula~\eqref{poly-formula}. We have also plotted the performance of polar codes. The setting considered in Figure~\ref{fig:BEC} is as follows: We intend to send $k=40$ information bits over the BEC$(\epsilon$). The desired error probability is $p_e = 0.01$.   For erasure values between $0.96$ and $1$, Figure~\ref{fig:BEC} plots bounds on the smallest (optimal) blocklength $n$ needed for this scenario as well as the smallest length required by polar codes.  Note that in order to compute a lower bound on the shortest length from Theorem~\ref{Bounds for BEC}, we should fix $M^*(n,p_e)$ to $k=40$ and search for the smallest $n$ that satisfies equation~\eqref{M2} with $\kappa = n(1-\epsilon)$ and $p_e = 0.01$.  
\begin{figure}[h!]
	\centering 
	\begin{subfigure}[b]{0.49\textwidth}
		\centering
		% This file was created by matlab2tikz v0.0.5.
% Copyright (c) 2008--2010, Nico Schlömer <nico.schloemer@ua.ac.be>
% All rights reserved.
%
% The latest updates can be retrieved from
%  http://win.ua.ac.be/~nschloe/content/matlab2tikz/
% and
%  http://www.mathworks.com/matlabcentral/fileexchange/22022 .
% where you can also make suggestions and rate matlab2tikz.
\begin{NoHyper}
\begin{tikzpicture}

% defining custom colors
\definecolor{mycolor1}{rgb}{0.15,0.1,0.1}
\definecolor{mycolor2}{rgb}{0,0.447,0.741}
\definecolor{mycolor3}{rgb}{0.85,0.325,0.098}

\pgfplotsset{every axis grid/.style={style=solid}}

% Axis at [0.13 0.11 0.78 0.82]
\begin{axis}[
scale only axis,
every outer x axis line/.append style={mycolor1},
every x tick label/.append style={font=\color{mycolor1}},
every outer y axis line/.append style={mycolor1},
every y tick label/.append style={font=\color{mycolor1}},
width=2.5in,
height=1.8in,
xlabel={Erasure probability $\epsilon$},
ylabel={$n$},
xmin=0.965, xmax=0.995,
ymin=1000, ymax=9000,
xmajorgrids,
xminorgrids,
ymajorgrids,
yminorgrids,
legend entries={\scriptsize{upper bound from Theorem~\ref{Bounds for BEC}}, \scriptsize{lower bound from Theorem~\ref{Bounds for BEC}}, \scriptsize{upper bound from Theorem~\ref{BEC achiv}},\scriptsize{lower bound from Theorem~\ref{BEC conv}}, \scriptsize{prediction from \eqref{poly-formula} \cite[Theorem~44]{polyphd}}, \scriptsize{polar-CRC (L=16)}},
legend style={nodes={scale=0.85, transform shape}},
legend style={nodes={right}},
legend cell align=right,
legend pos=north west,
xtick = {0.965,0.97,0.975,0.98,0.985,0.99,0.995},
xticklabels={.965,.97,.975,.98,.985,.99,.995}
]

\addplot [
color=blue,
solid,
line width=2.0pt,
mark=triangle,
mark options={solid}
]
coordinates{
 (0.991,6619)
 (0.99,5962)
 (0.988,4976)
 (0.986,4259)
 (0.984,3730)
 (0.982,3333)
 (0.98,3006)
 (0.978,2740)
 (0.976,2514)
 (0.974,2320)
 (0.972,2160)
 (0.97,2021)
 (0.968,1895)
 (0.966,1795)

};

\addplot [
color=blue,
solid,
line width=2.0pt,
mark=square,
mark options={solid}
]
coordinates{
 (0.991,6172)
 (0.99,5542)
 (0.988,4598)
 (0.986,3915)
 (0.984,3420)
 (0.982,3035)
 (0.98,2722)
 (0.978,2465)
 (0.976,2251)
 (0.974,2070)
 (0.972,1916)
 (0.97,1783)
 (0.968,1664)
 (0.966,1562)

};
\addplot [
color=green,
solid,
mark=star,
line width=2.0pt,
]
coordinates{
	(0.966,1728)
	(0.968,1836)
	(0.97,1959)
	(0.972,2100)
	(0.974,2262)
	(0.976,2451)
	(0.978,2675)
	(0.98,2943)
	(0.982,3271)
	(0.984,3681)
	(0.986,4208)
	(0.988,4911)
	(0.99,5894)
	(0.991,6550)

};

\addplot [
color=green,
solid,
line width=2.0pt,
]
coordinates{
 (0.966,1618)
 (0.968,1720)
 (0.97,1835)
 (0.972,1967)
 (0.974,2119)
 (0.976,2296)
 (0.978,2505)
 (0.98,2757)
 (0.982,3064)
 (0.984,3448)
 (0.986,3942)
 (0.988,4600)
 (0.99,5521)
 (0.991,6136)
};

\addplot [
color=purple,
%dash pattern=on 1pt off 3pt on 3pt off 3pt,
dash pattern=on 5pt off 5pt on 5pt,
line width=2.0pt
]
coordinates{
 (0.991,4526)
 (0.99,4081)
 (0.988,3409)
 (0.986,2927)
 (0.984,2563)
 (0.982,2280)
 (0.98,2057)
 (0.978,1870)
 (0.976,1718)
 (0.974,1590)
 (0.972,1474)
 (0.97,1379)
 (0.968,1295)
 (0.966,1222)

};

\addplot [
color=mycolor3,
mark size=5.0pt,
only marks,
mark=asterisk,
mark options={solid,draw=red}
]
coordinates{
 (0.991,8192)
 (0.982,4096)
 (0.966,2048)

};

\end{axis}

\end{tikzpicture}
\end{NoHyper}
		\label{BEC_k40}
	\end{subfigure}
	\begin{subfigure}[b]{0.49\textwidth}\centering  
		% This file was created by matlab2tikz v0.0.5.
% Copyright (c) 2008--2010, Nico Schlömer <nico.schloemer@ua.ac.be>
% All rights reserved.
%
% The latest updates can be retrieved from
%  http://win.ua.ac.be/~nschloe/content/matlab2tikz/
% and
%  http://www.mathworks.com/matlabcentral/fileexchange/22022 .
% where you can also make suggestions and rate matlab2tikz.

\begin{tikzpicture}

% defining custom colors
\definecolor{mycolor1}{rgb}{0.15,0.15,0.15}
\definecolor{mycolor2}{rgb}{0,0.447,0.741}
\definecolor{mycolor3}{rgb}{0.85,0.325,0.098}

\pgfplotsset{every axis grid/.style={style=solid}}

% Axis at [0.13 0.11 0.78 0.82]
\begin{axis}[
scale only axis,
every outer x axis line/.append style={mycolor1},
every x tick label/.append style={font=\color{mycolor1}},
every outer y axis line/.append style={mycolor1},
every y tick label/.append style={font=\color{mycolor1}},
width=2.5in,
height=1.8in,
xlabel={Erasure probability $\epsilon$},
ylabel={Normalized $n$},
xmin=0.965, xmax=0.995,
ymin=0.7, ymax=1.4,
xmajorgrids,
xminorgrids,
ymajorgrids,
yminorgrids,
legend style={nodes=right},
legend cell align=right,
xtick = {0.965,0.97,0.975,0.98,0.985,0.99,0.995},
xticklabels={.965,.97,.975,.98,.985,.99,.995}
]

\addplot [
color=blue,
solid,
line width=2.0pt,
mark=triangle,
mark options={solid}
]
coordinates{
 (0.991,1.07242)
 (0.99,1.07578)
 (0.988,1.08221)
 (0.986,1.08787)
 (0.984,1.09064)
 (0.982,1.09819)
 (0.98,1.10434)
 (0.978,1.11156)
 (0.976,1.11684)
 (0.974,1.12077)
 (0.972,1.12735)
 (0.97,1.13348)
 (0.968,1.13882)
 (0.966,1.14917)

};

\addplot [
color=blue,
solid,
line width=2.0pt,
mark=square,
mark options={solid}
]
coordinates{
 (0.991,1)
 (0.99,1)
 (0.988,1)
 (0.986,1)
 (0.984,1)
 (0.982,1)
 (0.98,1)
 (0.978,1)
 (0.976,1)
 (0.974,1)
 (0.972,1)
 (0.97,1)
 (0.968,1)
 (0.966,1)

};

\addplot [
color=green,
solid,
line width=2.0pt
]
coordinates{
(0.966,1.03585)
(0.968,1.03365)
(0.97,1.02916)
(0.972,1.02662)
(0.974,1.02367)
(0.976,1.01999)
(0.978,1.01623)
(0.98,1.01286)
(0.982,1.00956)
(0.984,1.00819)
(0.986,1.0069)
(0.988,1.00043)
(0.99,0.996211)
(0.991,0.994167)

};

\addplot [
color=green,
solid,
mark = star,
line width=2.0pt
]
coordinates{
(0.966,1.10627)
(0.968,1.10337)
(0.97,1.09871)
(0.972,1.09603)
(0.974,1.09275)
(0.976,1.08885)
(0.978,1.08519)
(0.98,1.08119)
(0.982,1.07776)
(0.984,1.07632)
(0.986,1.07484)
(0.988,1.06807)
(0.99,1.06351)
(0.991,1.06124)
	
};

\addplot [
color=purple,
%dash pattern=on 1pt off 3pt on 3pt off 3pt,
dash pattern=on 5pt off 5pt on 5pt,
line width=2.0pt
]
coordinates{
 (0.991,0.733312)
 (0.99,0.736377)
 (0.988,0.741409)
 (0.986,0.747637)
 (0.984,0.749415)
 (0.982,0.751236)
 (0.98,0.755694)
 (0.978,0.758621)
 (0.976,0.763216)
 (0.974,0.768116)
 (0.972,0.769311)
 (0.97,0.773416)
 (0.968,0.778245)
 (0.966,0.78233)

};

\addplot [
color=mycolor3,
mark size=5.0pt,
only marks,
mark=asterisk,
mark options={solid,draw=red}
]
coordinates{
 (0.991,1.33944)
 (0.982,1.3545)
 (0.966,1.29456)

};

\end{axis}

\end{tikzpicture}
		\label{BEC_k40_normalized}
	\end{subfigure}
	\caption{Comparison for low-capacity BECs. The number of information bits is $k=40$ and the target error probability is $p_e = 0.01$. For the right plot, with the same legend entries as the left plot, all the blocklengths $n$ in the left plot are normalized by the value of the lower bound, obtained from Theorem~\ref{Bounds for BEC}. \label{fig:BEC}}
\end{figure}

Note that Theorem~\ref{BEC achiv} \cite[Corollary~42]{polyphd} and Theorem~\ref{BEC conv} \cite[Theoreem~43]{polyphd} (presented in the Appendix) are the raw upper and lower bounds for the optimal blocklength in BEC that both the classical estimation \eqref{poly-formula} (whose precise version for BEC can be found in \cite[Theorem~44]{polyphd}) and Theorem~\ref{Bounds for BEC} are estimating. As expected, the prediction obtained from \cite[Theorem~44]{polyphd} is not precise in the low-capacity regime and it becomes worse as the capacity approaches zero. On the other hand, the estimated bounds from Theorem~\ref{Bounds for BEC} converge to the original raw bounds of Theorem~\ref{BEC achiv} and Theorem~\ref{BEC conv} as the capacity approaches zero. Also, the performance of the polar code is shown in Figure~\ref{fig:BEC}. The polar code is concatenated with cyclic redundancy check (CRC) code of length $6$ and is decoded with the list-SC algorithm \cite{tal2015list} with list size $L=16$. 

Figure~\ref{fig:BSC} considers the scenario of sending $k=40$ bits of information over a low-capacity BSC with target error probability $p_e = 0.01$. 
We have compared in Figure~\ref{fig:BSC}, the predictions from Theorem~\ref{bounds for BSC} and Formula~\eqref{poly-formula} (we used a precise version of Formula~\eqref{poly-formula} for BSC given in \cite[Theorem~41]{polyphd}) together with the raw upper and lower bounds  from Theorem~\ref{BSC achiv} \cite[Corollary~39]{polyphd} and Theorem~\ref{BSC conv} \cite[Theorem~40]{polyphd} that are presented in the Appendix. Note that both Theorem~\ref{bounds for BSC} and \cite[Theorem~41]{polyphd} provide single predictions to estimate the true value of the optimal blocklength that lies between the aforementioned raw bounds.
Therefore, it is not abnormal if neither of the predictions from \cite[Theorem~41]{polyphd} or Theorem~\ref{bounds for BSC} lie between the raw bounds. In this way, Figure~\ref{fig:BSC} shows that, as we expected, the prediction from \cite[Theorem~41]{polyphd} is quite imprecise in the low-capacity regime, particularly in comparison to the prediction from Theorem~\ref{bounds for BSC} which is exact up to $\mathcal{O}(\log \log \kappa)$ terms. The performance of polar codes is also plotted in Figure~\ref{fig:BSC}. An interesting problem is to analyze the finite-length scaling of polar codes in the low-capacity regime \cite{hassani2014finite,mondelli2016unified,goldin2014improved,guruswami2015polar,mahdavifar2020polar}. 

\begin{figure}[h!]
	\centering 
	\begin{subfigure}[b]{0.49\textwidth}
		\centering
		% This file was created by matlab2tikz v0.0.5.
% Copyright (c) 2008--2010, Nico Schlömer <nico.schloemer@ua.ac.be>
% All rights reserved.
%
% The latest updates can be retrieved from
%  http://win.ua.ac.be/~nschloe/content/matlab2tikz/
% and
%  http://www.mathworks.com/matlabcentral/fileexchange/22022 .
% where you can also make suggestions and rate matlab2tikz.
\begin{NoHyper}
\begin{tikzpicture}

% defining custom colors
\definecolor{mycolor1}{rgb}{0.15,0.15,0.15}
\definecolor{mycolor2}{rgb}{0,0.447,0.741}
\definecolor{mycolor3}{rgb}{0.85,0.325,0.098}

\pgfplotsset{every axis grid/.style={style=solid}}

% Axis at [0.13 0.11 0.78 0.82]
\begin{axis}[
scale only axis,
every outer x axis line/.append style={mycolor1},
every x tick label/.append style={font=\color{mycolor1}},
every outer y axis line/.append style={mycolor1},
every y tick label/.append style={font=\color{mycolor1}},
width=2.5in,
height=1.8in,
xlabel={Crossover probability $\delta$},
ylabel={$n$},
xmin=0.34, xmax=0.46,
ymin=0, ymax=9000,
xmajorgrids,
xminorgrids,
ymajorgrids,
yminorgrids,
legend entries={\scriptsize{upper bound from Theorem~\ref{BSC achiv}}, \scriptsize{lower bound from Theorem~\ref{BSC conv}}, \scriptsize{prediction from Theorem~\ref{bounds for BSC}}, \scriptsize{prediction from \cite[Theorem~41]{polyphd}}, \scriptsize{polar-CRC (L=16)}},
legend style={nodes={scale=0.85, transform shape}},
legend style={nodes={right}},
legend cell align=right,
legend pos=north west,
xtick = {0.36,0.38,0.4,0.42,0.44,0.46},
xticklabels={.36,.38,.4,.42,.44,.46}
]

\addplot [
color=green,
solid,
line width=2.0pt,
mark=star,
mark options={solid}
]
coordinates{
(  0.34 ,  868  )
(  0.345 ,  927  )
(  0.35 ,  991  )
(  0.355 ,  1063  )
(  0.36 ,  1141  )
(  0.365 ,  1229  )
(  0.37 ,  1327  )
(  0.375 ,  1437  )
(  0.38 ,  1561  )
(  0.385 ,  1702  )
(  0.39 ,  1862  )
(  0.395 ,  2046  )
(  0.4 ,  2258  )
(  0.405 ,  2504  )
(  0.41 ,  2792  )
(  0.415 ,  3132  )
(  0.42 ,  3539  )
(  0.425 ,  4029  )
(  0.43 ,  4628  )
(  0.435 ,  5370  )
(  0.44 ,  6305  )
(  0.445 ,  7507  )

};

\addplot [
color=green,
solid,
line width=2.0pt,
]
coordinates{
(  0.34 ,  846  )
(  0.345 ,  903  )
(  0.35 ,  965  )
(  0.355 ,  1035  )
(  0.36 ,  1111  )
(  0.365 ,  1196  )
(  0.37 ,  1291  )
(  0.375 ,  1398  )
(  0.38 ,  1518  )
(  0.385 ,  1655  )
(  0.39 ,  1810  )
(  0.395 ,  1987  )
(  0.4 ,  2193  )
(  0.405 ,  2431  )
(  0.41 ,  2710  )
(  0.415 ,  3039  )
(  0.42 ,  3433  )
(  0.425 ,  3907  )
(  0.43 ,  4488  )
(  0.435 ,  5207  )
(  0.44 ,  6112  )
(  0.445 ,  7274  )
	
};

\addplot [
color=blue,
solid,
line width=2.0pt,
mark = square,]
coordinates{
         (0.34 ,      792.05)
        (0.345   ,    847.56)
         (0.35    ,   906.69)
        (0.355   ,    969.92)
         (0.36    ,   1047.9)
        (0.365   ,    1131.3)
        ( 0.37    ,   1220.9)
        (0.375   ,    1327.8)
        (0.38     ,  1443.3)
        (0.385   ,    1568.9)
         (0.39    ,   1716.3)
        (0.395   ,    1888.1)
         ( 0.4      ,   2087)
        (0.405    ,   2316.7)
         (0.41     ,  2582.2)
        (0.415    ,   2909.4)
         (0.42     ,  3286.8)
        (0.425    ,   3745.2)
         (0.43     ,  4299.9)
        (0.435    ,   4991.9)
         (0.44     ,  5871.5)
        (0.445    ,   6993.5)

};

\addplot [
color=purple,
%dash pattern=on 1pt off 3pt on 3pt off 3pt,
dash pattern=on 5pt off 5pt on 5pt,
line width=2.0pt
]
coordinates{
 (0.34,549.578)
 (0.345,584.578)
 (0.35,623.578)
 (0.355,666.578)
 (0.36,714.578)
 (0.365,768.578)
 (0.37,827.578)
 (0.375,894.578)
 (0.38,970.578)
 (0.385,1061.44)
 (0.39,1159.44)
 (0.395,1271.44)
 (0.4,1399.44)
 (0.405,1549.44)
 (0.41,1725.44)
 (0.415,1938.3)
 (0.42,2185.3)
 (0.425,2484.3)
 (0.43,2848.3)
 (0.435,3306.16)
 (0.44,3876.16)
 (0.445,4614.02)

};

\addplot [
color=mycolor3,
mark size=5.0pt,
only marks,
mark=asterisk,
mark options={solid,draw=red}
]
coordinates{
 (0.34,1024)
 (0.385,2048)
 (0.419,4096)
 (0.4425,8192)

};

\end{axis}

\end{tikzpicture}
\end{NoHyper}
		\label{BSC_k402}
	\end{subfigure}
	\begin{subfigure}[b]{0.49\textwidth}\centering  
		% This file was created by matlab2tikz v0.0.5.
% Copyright (c) 2008--2010, Nico Schlömer <nico.schloemer@ua.ac.be>
% All rights reserved.
%
% The latest updates can be retrieved from
%  http://win.ua.ac.be/~nschloe/content/matlab2tikz/
% and
%  http://www.mathworks.com/matlabcentral/fileexchange/22022 .
% where you can also make suggestions and rate matlab2tikz.

\begin{tikzpicture}

% defining custom colors
\definecolor{mycolor1}{rgb}{0.15,0.15,0.15}
\definecolor{mycolor2}{rgb}{0,0.447,0.741}
\definecolor{mycolor3}{rgb}{0.85,0.325,0.098}

\pgfplotsset{every axis grid/.style={style=solid}}

% Axis at [0.13 0.11 0.78 0.82]
\begin{axis}[
scale only axis,
every outer x axis line/.append style={mycolor1},
every x tick label/.append style={font=\color{mycolor1}},
every outer y axis line/.append style={mycolor1},
every y tick label/.append style={font=\color{mycolor1}},
width=2.5in,
height=1.8in,
xlabel={Crossover probability $\delta$},
ylabel={Normalized $n$},
xmin=0.34, xmax=0.46,
ymin=0.5, ymax=1.4,
xmajorgrids,
xminorgrids,
ymajorgrids,
yminorgrids,
legend style={nodes=right},
legend cell align=right,
legend pos=south east,
xtick = {0.36,0.38,0.4,0.42,0.44,0.46},
xticklabels={.36,.38,.4,.42,.44,.46}]

\addplot [
color=green,
solid,
line width=2.0pt,
mark=star,
mark options={solid}
]
coordinates{
(  0.34 ,  1.026  )
(  0.345 ,  1.0266  )
(  0.35 ,  1.0269  )
(  0.355 ,  1.0271  )
(  0.36 ,  1.027  )
(  0.365 ,  1.0276  )
(  0.37 ,  1.0279  )
(  0.375 ,  1.0279  )
(  0.38 ,  1.0283  )
(  0.385 ,  1.0284  )
(  0.39 ,  1.0287  )
(  0.395 ,  1.0297  )
(  0.4 ,  1.0296  )
(  0.405 ,  1.03  )
(  0.41 ,  1.0303  )
(  0.415 ,  1.0306  )
(  0.42 ,  1.0309  )
(  0.425 ,  1.0312  )
(  0.43 ,  1.0312  )
(  0.435 ,  1.0313  )
(  0.44 ,  1.0316  )
(  0.445 ,  1.032  )
};

\addplot [
color=green,
solid,
line width=2.0pt,
mark options={solid}
]
coordinates{
	(  0.34 ,  1  )
	(  0.345 ,  1  )
	(  0.35 ,  1  )
	(  0.355 ,  1  )
	(  0.36 ,  1  )
	(  0.365 ,  1  )
	(  0.37 ,  1  )
	(  0.375 ,  1  )
	(  0.38 ,  1  )
	(  0.385 ,  1  )
	(  0.39 ,  1  )
	(  0.395 ,  1  )
	(  0.4 ,  1  )
	(  0.405 ,  1  )
	(  0.41 ,  1  )
	(  0.415 ,  1  )
	(  0.42 ,  1  )
	(  0.425 ,  1  )
	(  0.43 ,  1  )
	(  0.435 ,  1  )
	(  0.44 ,  1  )
	(  0.445 ,  1  )
};

\addplot [
color=blue,
solid,
line width=2.0pt,
mark = square,
]
coordinates{
(  0.34 ,  0.93623  )
(  0.345 ,  0.9386  )
(  0.35 ,  0.93958  )
(  0.355 ,  0.93712  )
(  0.36 ,  0.9432  )
(  0.365 ,  0.9459  )
(  0.37 ,  0.9457  )
(  0.375 ,  0.94979  )
(  0.38 ,  0.95079  )
(  0.385 ,  0.94798  )
(  0.39 ,  0.94823  )
(  0.395 ,  0.95023  )
(  0.4 ,  0.95166  )
(  0.405 ,  0.95298  )
(  0.41 ,  0.95284  )
(  0.415 ,  0.95735  )
(  0.42 ,  0.95741  )
(  0.425 ,  0.95859  )
(  0.43 ,  0.95809  )
(  0.435 ,  0.95869  )
(  0.44 ,  0.96065  )
(  0.445 ,  0.96144  )
};

\addplot [
color=purple,
%dash pattern=on 1pt off 3pt on 3pt off 3pt,
dash pattern=on 5pt off 5pt on 5pt,
line width=2.0pt
]
coordinates{
(  0.34 ,  0.64962  )
(  0.345 ,  0.64737  )
(  0.35 ,  0.64619  )
(  0.355 ,  0.64404  )
(  0.36 ,  0.64318  )
(  0.365 ,  0.64262  )
(  0.37 ,  0.64104  )
(  0.375 ,  0.6399  )
(  0.38 ,  0.63938  )
(  0.385 ,  0.64135  )
(  0.39 ,  0.64057  )
(  0.395 ,  0.63988  )
(  0.4 ,  0.63814  )
(  0.405 ,  0.63737  )
(  0.41 ,  0.63669  )
(  0.415 ,  0.63781  )
(  0.42 ,  0.63656  )
(  0.425 ,  0.63586  )
(  0.43 ,  0.63465  )
(  0.435 ,  0.63495  )
(  0.44 ,  0.63419  )
(  0.445 ,  0.63432  )
};

\addplot [
color=mycolor3,
mark size=5.0pt,
only marks,
mark=asterisk,
mark options={solid,draw=red}
]
coordinates{
  (  0.34 ,  1.2104  )
  (  0.385 ,  1.2375  )
  (  0.419 ,  1.2231  )
  (  0.4425 ,  1.231  )

};

\end{axis}

\end{tikzpicture}
	\end{subfigure}
	\caption{Comparison for low-capacity BSC. The number of information bits is $k=40$ and the target error probability is $p_e = 0.01$. For the right plot, with the same legend entries as the left plot, all the blocklengths $n$ in the left plot are normalized by lower bound given by Theorem~\ref{BSC conv}. \label{fig:BSC}}
\end{figure}

%The well-known Gaussian approach \cite{polypaper} to approximate the channel coding bounds as can be seen in Figures~\ref{fig:BEC} and \ref{fig:BSC}, fails to predict well in the low-capacity regime. In other words, it is well-suited for moderate-rate scenarios (where equivalently implies that the capacity is away from zero with a sufficient margin) rather than low-rate scenarios. We can instead consider an all-rate approximation among which the saddlepoint approximation \cite{scarlett2014saddlepoint} is the most well-known. Figure~\ref{fig:saddle} and Figures~\ref{fig:saddle} present a comparison between our approximations and saddlepoint results in scenarios where the capacity is extremely low.

Figure~\ref{fig:saddle} represents the same setting as above, i.e., sending $k=40$ information bits over the BEC($\epsilon$) with the desired probability of error $p_e = 0.01$ for BEC (left) and BSC (right). The vertical axis in both cases is the normalized blocklength and the horizontal axis is the erasure probability $\epsilon$ and the crossover probability $\delta$ for the left and right plots, respectively. In both plots, the corresponding channel parameter varies in an extremely low-capacity region. In such a region, we plotted the RCU achievability (upper) and converse (lower) bounds, our approximation from Theorem~\ref{Bounds for BEC} (for BEC) and Theorem~\ref{bounds for BSC} (for BSC), the saddlepoint approximation \cite{ScarlettPaper,scarlett2014saddlepoint}, and  a simplified version of it known as the exact asymptotics which we simply refer to as the ``simplified saddlepoint approximation'' to avoid confusion. Since the simplified version of the saddlepoint approximation is by construction less precise than the original version, we focus only on the comparison of our results to the original saddlepoint approximation.

In the BEC case, i.e., Figure~\ref{fig:saddle} (left), it can be seen that our approximations are dramatically tight and converge to the true bounds as the capacity goes to zero. However, the saddlepoint approximation does not perform well enough in this regime. This behavior can be described by noticing that despite the fact that the saddlepoint approximation uses an innovative approach to estimate the RCU bounds in all rates, it is still based on Gaussian laws; however, the channel coding bounds for BEC, as this paper shows, are best described by the Poisson laws rather than Gaussian laws in this regime.

In the BSC case where we provide a single prediction of the optimal blocklength that satisfies both RCU achievability and converse bounds, Figures~\ref{fig:saddle} shows that this prediction remains close and slowly converges to the true bounds as the capacity decreases. On the other hand, although the saddlepoint approximation works well when the capacity is not small enough, it starts to break down in the region where the capacity becomes extremely small. Note that here both saddlepoint and our method enjoy the Gaussian laws as the baseline of the analysis, however, the weight pattern of the terms in the true bounds and the saddlepoint approximation starts to deviate as the capacity gets closer to zero and thus the approximation starts to lose precision.

\begin{figure}[h!]
	\centering 
	\begin{subfigure}[b]{0.49\textwidth}
		\centering
		% This file was created by matlab2tikz v0.0.5.
% Copyright (c) 2008--2010, Nico Schlömer <nico.schloemer@ua.ac.be>
% All rights reserved.
%
% The latest updates can be retrieved from
%  http://win.ua.ac.be/~nschloe/content/matlab2tikz/
% and
%  http://www.mathworks.com/matlabcentral/fileexchange/22022 .
% where you can also make suggestions and rate matlab2tikz.
\begin{NoHyper}
\begin{tikzpicture}

% defining custom colors
\definecolor{mycolor1}{rgb}{0.15,0.15,0.15}
\definecolor{mycolor2}{rgb}{0,0.447,0.741}
\definecolor{mycolor3}{rgb}{0.85,0.325,0.098}

\pgfplotsset{every axis grid/.style={style=solid}}

% Axis at [0.13 0.11 0.78 0.82]
\begin{axis}[
scale only axis,
every outer x axis line/.append style={mycolor1},
every x tick label/.append style={font=\color{mycolor1}},
every outer y axis line/.append style={mycolor1},
every y tick label/.append style={font=\color{mycolor1}},
width=2.5in,
height=1.8in,
xlabel={Erasure probability $\epsilon$},
ylabel={Normalized $n$},
xmin=0.966, xmax=.998,
ymin=0.9, ymax=2,
xmajorgrids,
xminorgrids,
ymajorgrids,
yminorgrids,
legend entries={\scriptsize{upper bound from Theorem~\ref{Bounds for BEC}}, \scriptsize{lower bound from Theorem~\ref{Bounds for BEC}}, \scriptsize{upper bound from Theorem~\ref{BEC achiv}},\scriptsize{lower bound from Theorem~\ref{BEC conv}}, \scriptsize{saddlepoint approximation \cite{scarlett2014saddlepoint}},\scriptsize{simplified saddlepoint approx. \cite{scarlett2014saddlepoint}} },
legend style={nodes={scale=0.85, transform shape}},
legend style={nodes={right}},
legend cell align=right,
legend pos=north east,
xtick = {0.966,       0.974,           0.982,       0.99,          0.998},
xticklabels={ .966 ,       .974,              .982,    .99,          .998}
]

\addplot [
color=blue,
solid,
line width=2.0pt,
mark=triangle,
mark options={solid}
]
coordinates{
(0.966000,1.099506)
(0.968000,1.097674)
(0.970000,1.095913)
(0.972000,1.093543)
(0.974000,1.091553)
(0.976000,1.090157)
(0.978000,1.088224)
(0.980000,1.085963)
(0.982000,1.084204)
(0.984000,1.082367)
(0.986000,1.080416)
(0.988000,1.078696)
(0.990000,1.076798)
(0.992000,1.074884)
(0.994000,1.072980)
(0.996000,1.071149)
(0.998000,1.069286)

};

\addplot [
color=blue,
solid,
line width=2.0pt,
mark=square,
mark options={solid}
]
coordinates{
(0.966000,0.947466)
(0.968000,0.950581)
(0.970000,0.953678)
(0.972000,0.956787)
(0.974000,0.959887)
(0.976000,0.963415)
(0.978000,0.966866)
(0.980000,0.969532)
(0.982000,0.972911)
(0.984000,0.975928)
(0.986000,0.978945)
(0.988000,0.982174)
(0.990000,0.985329)
(0.992000,0.988268)
(0.994000,0.991203)
(0.996000,0.994137)
(0.998000,0.997106)

};
\addplot [
color=green,
solid,
mark=star,
line width=2.0pt,
]
coordinates{
(0.966000,1.067985)
(0.968000,1.067442)
(0.970000,1.067575)
(0.972000,1.067616)
(0.974000,1.067485)
(0.976000,1.067509)
(0.978000,1.067864)
(0.980000,1.067465)
(0.982000,1.067559)
(0.984000,1.067575)
(0.986000,1.067478)
(0.988000,1.067609)
(0.990000,1.067560)
(0.992000,1.067497)
(0.994000,1.067441)
(0.996000,1.067458)
(0.998000,1.067477)

};

\addplot [
color=green,
solid,
line width=2.0pt,
]
coordinates{
(0.966000,1.000000)
(0.968000,1.000000)
(0.970000,1.000000)
(0.972000,1.000000)
(0.974000,1.000000)
(0.976000,1.000000)
(0.978000,1.000000)
(0.980000,1.000000)
(0.982000,1.000000)
(0.984000,1.000000)
(0.986000,1.000000)
(0.988000,1.000000)
(0.990000,1.000000)
(0.992000,1.000000)
(0.994000,1.000000)
(0.996000,1.000000)
(0.998000,1.000000)
};

\addplot [
color=black,
dash pattern=on 3pt off 3pt on 3pt,
line width=2.0pt
]
coordinates{
(0.966000,1.235476)
(0.968000,1.284884)
(0.970000,1.280654)
(0.972000,1.211998)
(0.974000,1.305333)
(0.976000,1.271341)
(0.978000,1.293014)
(0.980000,1.207472)
(0.982000,1.291123)
(0.984000,1.290893)
(0.986000,1.198123)
(0.988000,1.259348)
(0.990000,1.263720)
(0.992000,1.272161)
(0.994000,1.161707)
(0.996000,1.149175)
(0.998000,1.198162)

};

\addplot [
color=mycolor3,
%dash pattern=on 1pt off 3pt on 3pt off 3pt,
dash pattern=on 1pt off 1pt on 1pt,
line width=2.0pt
]
coordinates{
(0.966000,1.437577)
(0.968000,1.436047)
(0.970000,1.434332)
(0.972000,1.442298)
(0.974000,1.441246)
(0.976000,1.439895)
(0.978000,1.418762)
(0.980000,1.384113)
(0.982000,1.450718)
(0.984000,1.450116)
(0.986000,1.395231)
(0.988000,1.358696)
(0.990000,1.326390)
(0.992000,1.318076)
(0.994000,1.330148)
(0.996000,1.357484)
(0.998000,1.272260)
};

\end{axis}

\end{tikzpicture}
\end{NoHyper}
	\end{subfigure}
	\begin{subfigure}[b]{0.49\textwidth}
	\centering  
		% This file was created by matlab2tikz v0.0.5.
% Copyright (c) 2008--2010, Nico Schlömer <nico.schloemer@ua.ac.be>
% All rights reserved.
%
% The latest updates can be retrieved from
%  http://win.ua.ac.be/~nschloe/content/matlab2tikz/
% and
%  http://www.mathworks.com/matlabcentral/fileexchange/22022 .
% where you can also make suggestions and rate matlab2tikz.
\begin{NoHyper}
\begin{tikzpicture}

% defining custom colors
\definecolor{mycolor1}{rgb}{0.15,0.15,0.15}
\definecolor{mycolor2}{rgb}{0,0.447,0.741}
\definecolor{mycolor3}{rgb}{0.85,0.325,0.098}

\pgfplotsset{every axis grid/.style={style=solid}}

% Axis at [0.13 0.11 0.78 0.82]
\begin{axis}[
scale only axis,
every outer x axis line/.append style={mycolor1},
every x tick label/.append style={font=\color{mycolor1}},
every outer y axis line/.append style={mycolor1},
every y tick label/.append style={font=\color{mycolor1}},
width=2.5in,
height=1.8in,
xlabel={Crossover probability $\delta$},
ylabel={Normalized $n$},
xmin=0.445, xmax=0.495,
ymin=0.92, ymax=1.20,
xmajorgrids,
xminorgrids,
ymajorgrids,
yminorgrids,
legend entries={\scriptsize{prediction from Theorem~\ref{bounds for BSC}}, \scriptsize{upper bound from Theorem~\ref{BSC achiv}}, \scriptsize{lower bound from Theorem~\ref{BSC conv}},  \scriptsize{saddlepoint approximation \cite{scarlett2014saddlepoint}},\scriptsize{simplified saddlepoint approx. \cite{scarlett2014saddlepoint}} },
legend style={nodes={scale=0.85, transform shape}},
legend style={nodes={right}},
legend cell align=right,
legend pos=north west,
xtick = {.445 ,          .455    ,        .465  ,        .475 ,            .485 ,         .495},
xticklabels={.445 ,          .455    ,        .465  ,        .475 ,            .485 ,         .495}
]

\addplot [
color=blue,
solid,
line width=2.0pt,
mark=triangle,
mark options={solid}
]
coordinates{
(0.445000,0.960407)
(0.448125,0.960748)
(0.451250,0.961343)
(0.454375,0.961789)
(0.457500,0.962413)
(0.460625,0.962806)
(0.463750,0.963281)
(0.466875,0.963803)
(0.470000,0.964255)
(0.473125,0.964681)
(0.476250,0.965058)
(0.479375,0.965427)
(0.482500,0.965833)
(0.485625,0.966183)
(0.488750,0.966523)
(0.491875,0.966855)
(0.495000,0.967164)
};

\addplot [
color=green,
solid,
mark=star,
line width=2.0pt,
]
coordinates{
(0.445000,1.032032)
(0.448125,1.032159)
(0.451250,1.032286)
(0.454375,1.032441)
(0.457500,1.032663)
(0.460625,1.032756)
(0.463750,1.032897)
(0.466875,1.033056)
(0.470000,1.033250)
(0.473125,1.033415)
(0.476250,1.033532)
(0.479375,1.033684)
(0.482500,1.033847)
(0.485625,1.034005)
(0.488750,1.034151)
(0.491875,1.034307)
(0.495000,1.034459)

};

\addplot [
color=green,
solid,
line width=2.0pt,
]
coordinates{
(0.445000,1.000000)
(0.448125,1.000000)
(0.451250,1.000000)
(0.454375,1.000000)
(0.457500,1.000000)
(0.460625,1.000000)
(0.463750,1.000000)
(0.466875,1.000000)
(0.470000,1.000000)
(0.473125,1.000000)
(0.476250,1.000000)
(0.479375,1.000000)
(0.482500,1.000000)
(0.485625,1.000000)
(0.488750,1.000000)
(0.491875,1.000000)
(0.495000,1.000000)

};

=

\addplot [
color=black,
%dash pattern=on 1pt off 3pt on 3pt off 3pt,
dash pattern=on 3pt off 3pt on 3pt,
line width=2.0pt
]
coordinates{
(0.445000,1.033682)
(0.448125,1.034116)
(0.451250,1.033366)
(0.454375,1.033671)
(0.457500,1.035618)
(0.460625,1.035926)
(0.463750,1.033912)
(0.466875,1.035898)
(0.470000,1.035418)
(0.473125,1.035582)
(0.476250,1.035840)
(0.479375,1.045228)
(0.482500,1.047032)
(0.485625,1.044031)
(0.488750,1.046561)
(0.491875,1.092023)
(0.495000,1.137523)
};

\addplot [
color=mycolor3,
dash pattern=on 1pt off 1pt on 1pt,
line width=2.0pt
]
coordinates{
(0.445000,1.053478)
(0.448125,1.054170)
(0.451250,1.053558)
(0.454375,1.053911)
(0.457500,1.055560)
(0.460625,1.055368)
(0.463750,1.056780)
(0.466875,1.055243)
(0.470000,1.055458)
(0.473125,1.058887)
(0.476250,1.057195)
(0.479375,1.059537)
(0.482500,1.070354)
(0.485625,1.073854)
(0.488750,1.087699)
(0.491875,1.092023)
(0.495000,1.137523)
};

\end{axis}

\end{tikzpicture}
\end{NoHyper}
	\end{subfigure}
	\caption{Sending $k=40$ information bits over BEC (left) and BSC (right) in the low-capacity regime for the target error probability $p_e=0.01$ to compare our approximations from Theorem~\ref{Bounds for BEC} and  Theorem~\ref{bounds for BSC} or equivalently Corollary~\ref{optimal n for BSC} with saddlepoint approximations \cite{scarlett2014saddlepoint}. \label{fig:saddle}}
\end{figure}

  \vspace*{-3ex}

\section{Conclusion and Future Work} \label{sec:con}
    \vspace*{-2ex}

In this paper, we specified a notion of the low-capacity regime for channel coding and studied channel coding at such regimes from two major perspectives, namely, finite-length fundamental limits and code constructions. More specifically, finite-length analysis specific to the low-capacity regime was carried out for several types of channels including binary erasure channels (BECs), binary symmetric channels (BSCs), and additive white Gaussian noise (AWGN) channels. Furthermore, in the context of code construction, the optimal number of repetitions was characterized for transmission over binary memoryless symmetric (BMS) channels, in terms of the code block length and the underlying channel capacity. It was further shown that capacity-achieving polar codes naturally adopt the aforementioned optimal number of repetitions. 

There are several directions for future work. In terms of fundamental limits, it is interesting to study different classes of discrete memoryless channels beyond the ones considered in this paper to characterize their fundamental non-asymptotic laws of channel coding in the low-capacity regime. In terms of code constructions, it is important to study concatenation schemes with low-complexity decoding algorithms comparable to those of straightforward repetition schemes, which can potentially lead to higher rates. From the practical implementation perspective, there are various other challenges besides channel coding in order to enable reliable communications at very low rate regimes. This includes detection, synchronization, and multi-user communication. For instance, synchronization requires transmission of certain pilot signals which would be difficult to detect due to having a very low power. Alternatively, one can explore non-coherent communications at very low rates which is another interesting research direction. Studying low-rate communications together with multi-user schemes is another important problem. This becomes relevant especially for IoT applications where massive numbers of low-power users are present in the field. To this end, various approaches including grant-free and uncoordinated multiple access \cite{liu2018sparse, polyanskiy2017perspective, amalladinne2020coded} as well as scalable coded non-orthogonal techniques \cite{jamali2021massive} can be explored. 

\singlespace    
\bibliographystyle{IEEEtran}
\bibliography{ref}

\newpage
\appendix
%\section{Appendix: Proofs}
\section{Proofs for BEC}\label{BEC proofs}
In this section, we prove the converse and achievability bounds of Theorem~\ref{Bounds for BEC}. We first state Theorems~\ref{BE for Poisson}--\ref{BEC achiv} which are necessary for these proofs. For the results in coding theory, we generally refer to \cite{polyphd} as it has well collected and presented the corresponding proofs. See also \cite{Tan}, \cite{Wolfowitz}, \cite{Gal}, \cite{Poor}, and \cite{Elias}.

\begin{theorem}[Strong Poisson Convergence]\label{BE for Poisson}
	For $1\le i \le n$, let $X_i$'s be independent random variables with $P(X_i =1)=1-P(X_i=0)=p_i$. Define $S_n = \sum_{i=1}^n X_i$ . Also define $\lambda_k=\sum_{i=1}^np_i^k$ and let $\lambda=\lambda_1$ and $\theta=\lambda_2/\lambda_1$. Then, 
	\begin{align}\label{feq}
	\bigg|\text{Pr} \{S_n\leq m\}-\text{Pr} \{X\leq m\}\bigg|\leq \frac{\sqrt{2}\,(\sqrt{e}-1)\,\theta}{(1-\theta)^{3/2}}\sqrt{\psi(m)},
	\end{align}
	and
	\begin{equation}
	\bigg|\text{Pr} \{S_n=m\}-\text{Pr} \{X=m\}\bigg|\leq \frac{\sqrt{6}\,(\sqrt{e}-1)\,\theta}{(1-\theta)^2\sqrt{\lambda}}\sqrt{\psi(m)},
	\end{equation}
	where $X\sim \text{Poisson}(\lambda)$ and the function $\psi(m)$ is defined as
	\begin{align}\label{defpsi}
	\psi(m):=\min\bigg\{\text{Pr} \{X\leq m\},\text{Pr} \{X> m\}\bigg\}.
	\end{align}
\end{theorem}
\begin{proof}
	See \cite[Theorem~3.4, Lemma~3.7]{Zach}.
\end{proof}

\begin{theorem}[Poisson Convergence]
	For $1\le i \le n$, let $X_i$'s be independent random variables with $P(X_i =1)=1-P(X_i=0)=p_i$. Define $S_n = \sum_{i=1}^n X_i$ and $\lambda_k=\sum_{i=1}^np_i^k$. Let $\mu_n$ be the distribution of $S_n$ and $\nu_n$ be the Poisson distribution with mean $\lambda_1$. Then the following holds.
	\begin{equation}
	    \sup_A \bigg|\mu_n(A)-\nu_n(A)\bigg| \leq \frac{\lambda_2}{\lambda_1}
	\end{equation}
\end{theorem}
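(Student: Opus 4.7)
The plan is to prove this total variation bound via the Stein--Chen method for Poisson approximation, which is the standard tool that delivers the $\lambda_2/\lambda_1$ constant (a coupling argument \textit{à la} Le Cam would only give the weaker bound $\lambda_2$).

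First I would set up the Stein equation for the Poisson$(\lambda)$ distribution with $\lambda = \lambda_1$. For each set $A$ of nonnegative integers, let $g = g_A$ be the unique bounded solution of
\[
\lambda\, g(k+1) - k\, g(k) = \mathbf{1}_A(k) - \nu_n(A), \qquad k \geq 0,
\]
with $g(0) = 0$. Evaluating at $S_n$ and taking expectations gives the identity
\[
\mu_n(A) - \nu_n(A) = \mathbb{E}\bigl[\lambda\, g(S_n+1) - S_n\, g(S_n)\bigr],
\]
so the problem reduces to bounding the right-hand side uniformly in $A$.

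Next I would exploit the Bernoulli sum structure. Writing $S_n = \sum_i X_i$ and $S_{n,-i} = S_n - X_i$, and using the fact that $X_i$ is independent of $S_{n,-i}$ together with the identity $X_i\, g(S_{n,-i}+X_i) = X_i\, g(S_{n,-i}+1)$ (since $X_i\in\{0,1\}$), one obtains
\[
\mathbb{E}[S_n\, g(S_n)] = \sum_{i=1}^n p_i\, \mathbb{E}[g(S_{n,-i}+1)],
\]
and, since $\lambda = \sum_i p_i$,
\[
\lambda\, \mathbb{E}[g(S_n+1)] = \sum_{i=1}^n p_i\, \mathbb{E}[g(S_{n,-i}+X_i+1)].
\]
Subtracting and using independence of $X_i$ and $S_{n,-i}$ yields
\[
\mu_n(A) - \nu_n(A) = \sum_{i=1}^n p_i^2\, \mathbb{E}\bigl[g(S_{n,-i}+2) - g(S_{n,-i}+1)\bigr].
\]

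Finally I would invoke the Stein factor estimate, which states that the solution $g_A$ satisfies
\[
\sup_{k\geq 0}\, |g_A(k+1) - g_A(k)| \;\leq\; \frac{1 - e^{-\lambda}}{\lambda} \;\leq\; \frac{1}{\lambda_1}.
\]
This is a classical bound derived from the explicit representation
\(
g_A(k+1) = \frac{k!}{\lambda^{k+1}} e^{\lambda} \sum_{j=0}^{k} \frac{\lambda^j}{j!}(\mathbf{1}_A(j) - \nu_n(A)),
\)
by comparing the partial sums against their Poisson tails. Plugging this into the previous display gives
\[
|\mu_n(A) - \nu_n(A)| \;\leq\; \frac{1}{\lambda_1}\sum_{i=1}^n p_i^2 \;=\; \frac{\lambda_2}{\lambda_1},
\]
and taking the supremum over $A$ finishes the proof.

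The main obstacle is the Stein factor estimate, which is the only nonalgebraic input; the rest is bookkeeping. Getting the sharp constant $1/\lambda_1$ (rather than the trivial $\|g\|_\infty \leq 1$) is what allows the factor $\lambda_1$ to appear in the denominator and thereby converts the naive Le Cam bound $\lambda_2$ into the claimed $\lambda_2/\lambda_1$.
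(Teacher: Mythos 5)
Your Stein--Chen argument is correct and is essentially the proof the paper relies on: the paper offers no proof of its own here, citing Stein's monograph, where exactly this computation --- the Stein equation, the Bernoulli decomposition $\mathbb{E}[S_n g(S_n)]=\sum_i p_i\,\mathbb{E}[g(S_{n,-i}+1)]$, and the Stein factor bound $\sup_k|g_A(k+1)-g_A(k)|\le (1-e^{-\lambda})/\lambda\le 1/\lambda_1$ --- produces the constant $\lambda_2/\lambda_1$. One cosmetic slip: your explicit formula for $g_A(k+1)$ carries a spurious factor $e^{\lambda}$ (the inner sum should involve the Poisson probabilities $e^{-\lambda}\lambda^j/j!$, or equivalently drop the outer $e^{\lambda}$), but this does not affect the Stein factor estimate or the rest of the argument.
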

\begin{proof}
	See \cite[page 89]{stein}.
\end{proof}

\begin{theorem}[Converse Bound for BEC]\label{BEC conv}
	For any $(M,p_e)$-code over the BEC$^n$($\epsilon$), we have
	\begin{equation}
	\label{BEC converse}
	p_e \ge \sum_{r\,\, < \log_2 M}\binom{n}{r}\epsilon^{n-r}(1-\epsilon)^{r}\bigg(1-\frac{2^{r}}{M}\bigg).
	\end{equation}
\end{theorem}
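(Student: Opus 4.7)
The plan is to derive this converse via a conditioning/pigeonhole argument on the erasure pattern of the channel. Let $E \subseteq \{1,\ldots,n\}$ denote the (random) set of coordinates that are \emph{not} erased by the channel. Since BEC$^n(\epsilon)$ erases each position independently with probability $\epsilon$, the cardinality $|E|$ is distributed as $\text{Binomial}(n,1-\epsilon)$, so $\text{Pr}[|E|=r] = \binom{n}{r}\epsilon^{n-r}(1-\epsilon)^r$.

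First, I would fix an arbitrary realization $E=S$ with $|S|=r$ and study the conditional error probability. Given that message $m$ is transmitted, the non-erased portion of the received vector is exactly the projection $\pi_S(f_{enc}(m))$ of the codeword onto the coordinates in $S$; the erased coordinates carry no information beyond their (known) positions. Hence any decoder is effectively a function of the observed pair $(S,\pi_S(f_{enc}(m)))$, and two messages whose codewords share the same projection onto $S$ are fundamentally indistinguishable.

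Next, I would apply a pigeonhole/packing argument. Let $N_S$ denote the number of distinct projections $\pi_S(f_{enc}(m))$ for $m\in\mathcal{M}$; clearly $N_S \leq \min(M,2^r)$. For uniform messages, partition $\mathcal{M}$ into equivalence classes by their projection onto $S$, with class sizes $k_1,\ldots,k_{N_S}$ summing to $M$. The decoder can correctly decode at most one message per class, so the conditional probability of correct decoding is at most $N_S/M$, giving
\[
\text{Pr}[\text{error}\mid E=S] \;\geq\; 1 - \frac{N_S}{M} \;\geq\; \max\!\left\{0,\,1 - \frac{2^r}{M}\right\}.
\]
Since this bound depends on $S$ only through $r=|S|$, taking the expectation over $E$ yields
\[
p_e \;\geq\; \sum_{r=0}^{n} \binom{n}{r}\epsilon^{n-r}(1-\epsilon)^r \max\!\left\{0,\,1 - \frac{2^r}{M}\right\} \;=\; \sum_{r<\log_2 M} \binom{n}{r}\epsilon^{n-r}(1-\epsilon)^r \left(1-\frac{2^r}{M}\right),
\]
where the final equality drops the terms with $r\geq \log_2 M$, which contribute zero.

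The argument is mostly bookkeeping; the single conceptual step is identifying, for each fixed $S$, the best achievable probability of correct decoding. The cleanest justification is the indistinguishability observation above: maximum-likelihood decoding under a uniform prior picks one representative from each projection-equivalence class, so at most $N_S$ out of $M$ messages can ever be decoded correctly. No sharper probabilistic tools are needed, because the binomial weighting of $|E|$ arises directly from the channel law and all inequalities above are pointwise in $S$.
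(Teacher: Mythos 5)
Your argument is correct. Note that the paper itself does not prove this statement but simply cites \cite[Theorem~43]{polyphd}; your conditioning-on-the-erasure-pattern argument (at most $\min(M,2^r)$ messages remain distinguishable when $r$ coordinates survive, hence conditional error at least $1-2^r/M$, then average over the Binomial$(n,1-\epsilon)$ law of the number of unerased positions) is exactly the standard proof underlying that cited result, and it correctly handles the average-error criterion since at most one message per projection-equivalence class can be decoded correctly.
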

\begin{proof}
	See \cite[Theorem~43]{polyphd}.    
\end{proof}

\begin{theorem}[RCU Achievability Bound for BEC]\label{BEC achiv} 
	There exists an $(M,p_e)$-code over BEC$^n$($\epsilon$) such that
	\begin{equation}
	\label{BEC ach.}
	p_e \le \sum_{r=0}^n\binom{n}{r}\epsilon^{n-r}(1-\epsilon)^{r}2^{-[r-\log_2(M-1)]^+}.
	\end{equation}
\end{theorem}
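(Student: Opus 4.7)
The plan is to prove this via a random coding argument tailored to the BEC, which is the standard route to the RCU bound for the erasure channel. First I would fix an integer $M$ and draw $M$ codewords $\{c_1,\dots,c_M\}$ independently and uniformly from $\{0,1\}^n$. For the decoder I would use the natural BEC decoder: upon receiving $y\in\{0,1,?\}^n$, let $\mathcal{R}(y)\subseteq\{1,\dots,n\}$ be the set of non-erased coordinates, and declare the transmitted message to be $m$ if $c_m$ is the unique codeword in the random codebook that agrees with $y$ on $\mathcal{R}(y)$; otherwise declare an error.

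Next I would condition on the erasure pattern $\mathcal{E}\subseteq\{1,\dots,n\}$ and its complement $\mathcal{R}$, and on the transmitted message being $m$. Given these, the transmitted codeword $c_m$ is automatically consistent with $y$ on $\mathcal{R}$. The decoder errs only if some other codeword $c_{m'}$, $m'\neq m$, happens to also agree with $y$ on $\mathcal{R}$. Since the $c_{m'}$ are independent uniform over $\{0,1\}^n$ (independent of $c_m$ and of the channel), the probability that a particular $c_{m'}$ agrees with $c_m$ on $|\mathcal{R}|=r$ prescribed coordinates is exactly $2^{-r}$. A union bound together with the trivial bound $1$ then gives, conditional on $|\mathcal{R}|=r$,
\[
\Pr\bigl\{\text{error}\ \big|\ |\mathcal{R}|=r\bigr\}\ \le\ \min\!\bigl(1,(M-1)2^{-r}\bigr)\ =\ 2^{-[r-\log_2(M-1)]^+},
\]
using the identity $\min(1,2^{a-r})=2^{-[r-a]^+}$ with $a=\log_2(M-1)$.

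Finally I would average over the channel. The number of non-erasures $|\mathcal{R}|$ is Binomial$(n,1-\epsilon)$ under BEC$^n(\epsilon)$ and is independent of the random codebook, so
\[
\overline{p_e}\ \le\ \sum_{r=0}^{n}\binom{n}{r}\epsilon^{\,n-r}(1-\epsilon)^{r}\,2^{-[r-\log_2(M-1)]^+}.
\]
Since this upper-bounds the codebook-averaged probability of error (averaged also over the uniform message), by the usual expurgation-free probabilistic argument there must exist at least one codebook whose average error probability is no larger than this bound, which yields the claimed $(M,p_e)$-code.

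There is no single hard step; the only subtlety is being careful that the ``$M-1$'' (rather than $M$) appears because one is bounding the probability that some \emph{other} codeword collides with the transmitted one on $\mathcal{R}$, and that the independence of the codewords from the channel realization is what lets the binomial weight factor out cleanly. The rewriting $\min(1,(M-1)2^{-r})=2^{-[r-\log_2(M-1)]^+}$ is what puts the bound into the exact form stated in the theorem.
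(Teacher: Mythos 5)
Your argument is correct. The paper does not prove this statement itself; it simply cites \cite[Corollary~42]{polyphd}, where the bound is obtained by specializing the general RCU achievability bound to the BEC with equiprobable inputs --- which is exactly the random-coding, union-bound computation you carry out directly: condition on the non-erasure set of size $r$, bound the probability that one of the other $M-1$ independent uniform codewords agrees on those $r$ coordinates by $\min\bigl(1,(M-1)2^{-r}\bigr)=2^{-[r-\log_2(M-1)]^+}$, and average over the Binomial$(n,1-\epsilon)$ law of the number of non-erasures before invoking the existence of a codebook at least as good as the ensemble average. Your self-contained derivation is in substance the same proof as the cited one, just written out for the BEC from scratch rather than invoked as a corollary of the general RCU bound.
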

\begin{proof}
	See \cite[Corollary~42]{polyphd}.
\end{proof}

\begin{proof}[{\bf Achievability Bound of Theorem~\ref{Bounds for BEC}}]
	Consider $n$ transmissions over the BEC($\epsilon$) which are indexed by $i=1,\cdots,n$. For the $i$-th transmission, we let $X_i$ be a Bernoulli random variable which is $0$ if the output of the $i^\text{th}$ channel is an erasure and is $1$ otherwise, i.e., $\text{Pr}\{X_i=1\}=1-\epsilon$. Suppose $S_n = \sum_{i=1}^n X_i$ and denote $\kappa=n(1-\epsilon)$. We will use the result of Theorem~\ref{BEC achiv} and show that if a number $M_1$ satisfies (\ref{M1}), then it will dissatisfy  the inequality in \eqref{BEC ach.}. As a result, we obtain $M_1 \leq M^*(n,p_e)$.  Now, by considering \eqref{BEC ach.}, we define
	\begin{align}
	I_1 &= \sum_{r\, < \,\log_2 M_1}\binom{n}{r}\epsilon^{n-r}(1-\epsilon)^r,\\
	I_2&=\sum_{\log_2 M_1\,\leq\,r\,\leq \,n}\binom{n}{r}\epsilon^{n-r}(1-\epsilon)^r\frac{M_1}{2^{r}}.
	\end{align}    
	Note that $I_1 = \text{Pr}\left\{S_n < \log_2 M_1\right\}$. Let $X$ be an arbitrary Poisson random variable with mean $\kappa$, that is $X\sim \text{Poisson}(\kappa)$. Having this, we can write the following 
	\begin{align}
	\label{ff0}I_1 &=\text{Pr}\left\{S_n < \log_2 M_1\right\} \leq \text{Pr}\set{X < \log_2 M_1} + \bigg|  \text{Pr}\set{S_n < \log_2 M_1} -  \text{Pr}\set{X < \log_2 M_1} \bigg|,\\
	\nonumber I_2 &= \sum_{\log_2 M_1\,\leq\,r\,\leq \,n} \text{Pr}\set{S_n = r} \frac{M_1}{2^r} \\
	\label{fff} &\leq \sum_{\log_2 M_1\,\leq\,r\,\leq \,n} \text{Pr}\set{X =  r} \frac{M_1}{2^r} +  \sum_{\log_2 M_1\,\leq\,r\,\leq \,n}\bigg| \text{Pr}\set{X =  r} - \text{Pr}\set{S_n = r} \bigg| \frac{M_1}{2^r}\\
	\label{ff1}&\leq \sum_{\log_2 M_1\,\leq\,r\,<\, \infty} \text{Pr}\set{X =  r} \frac{M_1}{2^r} +  \sum_{\log_2 M_1\,\leq\,r\,< \,\infty}\bigg| \text{Pr}\set{X =  r} - \text{Pr}\set{S_n = r} \bigg| \frac{M_1}{2^r}.
	\end{align}
	Note that in both \eqref{ff0} and \eqref{fff}, we used a simple fact since that for any two real numbers $a$ and $b$, we have $a \leq b+|a-b|$. 
	Putting \eqref{ff0} and \eqref{ff1} together, we obtain
	\begin{align}
	\label{I1I2} I_1 + I_2  \leq \mathbb{E}\bigg[\mathbf{1}\left(X < \log_{2}M_1\right)+\mathbf{1}\left(X \geq \log_{2}M_1\right)\frac{M_1}{2^X}\bigg]  +J_1+J_2,
	\end{align}
	where
	\begin{align}
	J_1&=\bigg|  \text{Pr}\set{S_n < \log_2 M_1} -  \text{Pr}\set{X < \log_2 M_1} \bigg|, \\
	J_2&= \sum_{\log_2 M_1\,\leq\,r\,< \,\infty}\bigg| \text{Pr}\set{X =  r} - \text{Pr}\set{S_n = r} \bigg| \frac{M_1}{2^r}.
	\end{align}
	Using Theorem~\ref{BE for Poisson}, we have
	\begin{align}
	\label{J1}J_1=& \bigg|  \text{Pr}\set{S_n < \log_2 M_1} -  \text{Pr}\set{X < \log_2 M_1} \bigg| \leq \alpha_1\sqrt{ \text{Pr}\set{X < \log_2 M_1}},
	\end{align}
	and
	\begin{align}
	\label{J2p}&  \bigg|\text{Pr}\set{X =  r} - \text{Pr}\set{S_n = r} \bigg|  \leq  \alpha_2 \sqrt{ \text{Pr}\set{X \leq r}},
	\end{align}
	where
	\begin{align}
	\alpha_1 = \frac{\sqrt{2}}{\epsilon^{3/2}}\left(\sqrt{e}-1\right)(1-\epsilon),\quad \alpha_2 = \frac{\sqrt{6}}{\epsilon^2\sqrt{\kappa}}\left(\sqrt{e}-1\right)(1-\epsilon).
	\end{align}
	Before continuing, let us clearly explain how applying Theorem~\ref{BE for Poisson} leads to \eqref{J1}. Deriving \eqref{J2p} from Theorem~\ref{BE for Poisson} is similar. Only for the sake of this explanation and only in the following paragraph, we borrow the local notation of Theorem~\ref{BE for Poisson}, i.e., $\theta,\lambda,\lambda_1,\lambda_2$, and $\psi(\cdot)$. Here, we consider the previously defined $S_n=\sum_{i=1}^{n} X_i$ where $p_i = \operatorname{Pr} (X_i = 1) = 1-\epsilon$ which leads to $\lambda=\lambda_1=\sum_{i=1}^{n} p_i = n(1-\epsilon) = \kappa$ in Theorem~\ref{BE for Poisson}. Hence, our previously defined $X\sim\operatorname{Poisson}(\kappa)$ matches with the $X\sim \operatorname{Poisson}(\lambda)$ in Theorem~\ref{BE for Poisson}. Moreover, $\lambda_2 = n(1-\epsilon)^2$ and thus $\theta = \lambda_2/\lambda_1 = 1-\epsilon$. Hence, \eqref{feq} becomes 
	\begin{align}\label{o1}
	\nonumber\bigg|\text{Pr} \{S_n\leq m\}-\text{Pr} \{X\leq m\}\bigg|&\leq \frac{\sqrt{2}\,(\sqrt{e}-1)\,(1-\epsilon)}{\epsilon^{3/2}}\sqrt{\psi(m)} \\
    &= \alpha_1\sqrt{\psi(m)} \leq \alpha_1\sqrt{\operatorname{Pr}(X\leq m)}.
	\end{align}
	The leftmost inequality is due to the fact that, by the definition of $\psi(\cdot)$ in \eqref{defpsi}, we have $\psi(m) \leq \operatorname{Pr}(X\leq m)$. Finally, note that \eqref{o1} holds for every real number $m$ and put $m=\lfloor \log_2 M_1 \rfloor +1/2$. This makes the inequalities strict, i.e., $\operatorname{Pr}(S_n \leq m) = \operatorname{Pr}(S_n < \log_2 M_1) $ and  $\operatorname{Pr}(X \leq m) = \operatorname{Pr}(X < \log_2 M_1) $. Replacing these into \eqref{o1} gives \eqref{J1}.
	
	We now upper-bound $J_2$ using \eqref{J2p}, as follows:
	\begin{align}
	\nonumber J_2&= 2\sum_{\log_2 M_1\,\leq\,r\,< \,\infty}\frac{M_1}{2^{r+1}}\bigg| \text{Pr}\set{X =  r} - \text{Pr}\set{S_n = r} \bigg|\\
	\nonumber & \leq 2\alpha_2 \sum_{\log_2 M_1\,\leq\,r\,< \,\infty}\frac{M_1}{2^{r+1}}\sqrt{ \text{Pr}\set{X \leq r}}\\
	\label{last}&\leq 2\alpha_2 \sqrt{\sum_{\log_2 M_1\,\leq\,r\,< \,\infty} \frac{M_1}{2^{r+1}}\text{Pr}\set{X \leq r}}.
	\end{align}
	For obtaining the last part, note that $\sqrt{x}$ is a concave function and
	\begin{equation}
	    \sum_{\log_2 M_1\,\leq\,r\,< \,\infty} \frac{M_1}{2^{r+1}}\leq 1.
	\end{equation}
	Thus, (\ref{last}) follows from Jensen inequality for $\sqrt{x}$, that is, $\mathbb{E}\left[\sqrt{Z}\right]\leq \sqrt{\mathbb{E}\left[Z\right]}$.
	Also, consider
	\begin{align}
	\nonumber \sum_{\log_2 M_1\,\leq\,r\,< \,\infty} \frac{M_1}{2^{r+1}}\text{Pr}\set{X \leq r}&=\sum_{\log_2 M_1\,\leq\,r\,< \,\infty} \frac{M_1}{2^{r+1}}\sum_{i=0}^re^{-\kappa} \frac{\kappa^i}{i!}\\
	\nonumber&=\sum_{i=0}^{\infty}e^{-\kappa} \frac{\kappa^i}{i!}\left(\mathbf{1}\left(i< \log_2 M_1\right)+\frac{\mathbf{1}\left(i\geq\log_2M_1\right)}{2^{i-\log_2 M_1}}\right)\\
	\nonumber&=\sum_{i<\log_2 M_1}e^{-\kappa} \frac{\kappa^i}{i!}+M_1\sum_{i\geq\log_2 M_1}e^{-\kappa} \frac{\kappa^i}{i!}\frac{1}{2^i}\\
	\nonumber&=\mathcal{P}_{\kappa}(\log_2 M_1)+M_1e^{-\kappa/2}\sum_{i\geq\log_2 M_1}e^{-\kappa/2} \frac{(\kappa/2)^i}{i!}\\
	\label{tt}&=\mathcal{P}_{\kappa}(\log_2 M_1)+M_1e^{-\kappa/2}\left(1-\mathcal{P}_{\kappa/2}(\log_2M_1)\right).
	\end{align}
	From \eqref{last} and \eqref{tt}, we get
	\begin{align}
	\label{J2}J_2&\leq 2\alpha_2 \sqrt{\mathcal{P}_{\kappa}(\log_2 M_1)+M_1e^{-\kappa/2}\left(1-\mathcal{P}_{\kappa/2}(\log_2M_1)\right)}.
	\end{align}
	Also, considering the notation in \eqref{Poisson cdf}, we can write
	\begin{align}
	\nonumber&\mathbb{E}\bigg[\mathbf{1}\left(X < \log_{2}M_1\right)+\mathbf{1}\left(X \geq \log_{2}M_1\right)\frac{M_1}{2^X}\bigg] \\
	\nonumber&\qquad\qquad\qquad=\mathbb{E}\bigg[\mathbf{1}\left(X < \log_{2}M_1\right)\bigg] +\mathbb{E}\bigg[\frac{M_1}{2^X}\bigg]-\mathbb{E}\bigg[\left(X < \log_{2}M_1\right)\frac{M_1}{2^X}\bigg] \\
	\nonumber&\qquad\qquad\qquad= \mathcal{P}_{\kappa}( \log_{2}M_1)+\sum_re^{-\kappa} \frac{\kappa^r}{r!}\cdot\frac{M_1}{2^r}-\sum_{r\,<\,\log_{2}M_1}e^{-\kappa} \frac{\kappa^r}{r!}\cdot\frac{M_1}{2^r}\\
	\nonumber&\qquad\qquad\qquad= \mathcal{P}_{\kappa}( \log_{2}M_1)+M_1e^{-\kappa/2}\sum_re^{-\kappa/2} \frac{(\kappa/2)^r}{r!}-M_1e^{-\kappa/2}\sum_{r\,<\,\log_{2}M_1}e^{-\kappa/2} \frac{(\kappa/2)^r}{r!}\\
	&\qquad\qquad\qquad= \mathcal{P}_{\kappa}( \log_{2}M_1)+M_1e^{-\kappa/2}-M_1e^{-\kappa/2}\mathcal{P}_{\kappa/2}(\log_2 M_1). \label{we4}
	\end{align}
	Now, \eqref{I1I2}, \eqref{J1}, \eqref{J2}, and \eqref{we4} together result in
	\begin{align}
	\nonumber I_1 + I_2 &\leq \mathcal{P}_{\kappa}( \log_{2}M_1)+M_1e^{-\kappa/2}\left(1-\mathcal{P}_{\kappa/2}(\log_2 M_1) \right)\\
	\nonumber &+\alpha_1\sqrt{\mathcal{P}_{\kappa}(\log_2M_1)}+2\alpha_2 \sqrt{\mathcal{P}_{\kappa}(\log_2 M_1)+M_1e^{-\kappa/2}\left(1-\mathcal{P}_{\kappa/2}(\log_2M_1)\right)}\\
	\nonumber&= \mathfrak{P}_1(M_1)+2\alpha_2 \sqrt{ \mathfrak{P}_1(M_1)} +\alpha_1\sqrt{\mathcal{P}_{\kappa}(\log_2M_1)}\\
	\label{ff}&\leq p_e,
	\end{align}
	where
	\begin{align}
	\mathfrak{P}_1(M_1)&=\mathcal{P}_{\kappa}( \log_{2}M_1)+M_1e^{-\kappa/2}\left(1-\mathcal{P}_{\kappa/2}(\log_2 M_1) \right).
	%\alpha&=\alpha_1+2\alpha_2=\frac{\sqrt{2}}{\epsilon^{3/2}}\left(1+2\sqrt{\frac{3}{\epsilon\kappa}}\right)\left(\sqrt{e}-1\right)(1-\epsilon).
	\end{align} 
	Note that \eqref{ff} holds by the definition of $M_1$ in \eqref{M1}. Hence, we showed
	\begin{align}
	\sum_{r=0}^n\binom{n}{r}\epsilon^{r}(1-\epsilon)^{n-r}2^{-[r-\log_2(M_1-1)]^+} \leq  I_1+I_2 \leq p_e,
	\end{align}
	which means $M_1$ dissatisfies the inequality in \eqref{BEC ach.}. Hence, $M_1\leq M^*(n,p_e) $.
\end{proof}

\begin{proof}[{\bf Converse Bound of Theorem~\ref{Bounds for BEC} }]
	Consider $n$ transmissions over the BEC($\epsilon$) which are indexed by $i=1,\cdots,n$. For the $i$-th transmission, we let $X_i$ be a Bernoulli random variable which is $0$ if the output of the $i^\text{th}$ channel is an erasure and is $1$ otherwise, i.e., $\text{Pr}\{X_i=1\}=1-\epsilon$. Suppose $S_n = \sum_{i=1}^n X_i$ and denote $\kappa=n(1-\epsilon)$. We will use the result of Theorem~\ref{BEC conv} and show that if a number $M_2$ satisfies \eqref{M2}, then it will dissatisfy  the inequality in \eqref{BEC converse}. As a result, we obtain $M^*(n,p_e)\leq M_2$. Now, by considering \eqref{BEC converse}, we define 
	\begin{align}
	I_1 &= \sum_{r\, < \,\log_2 M_2}\binom{n}{r}\epsilon^{n-r}(1-\epsilon)^r,\\
	I_2&=\sum_{r\, < \,\log_2 M_2}\binom{n}{r}\epsilon^{n-r}(1-\epsilon)^r\frac{2^{r}}{M_2}.
	\end{align}    
	Note that $I_1 = \text{Pr}\left\{S_n < \log_2 M_2\right\}$. Let $X$ be an arbitrary Poisson random variable with mean $\kappa$, that is $X\sim \text{Poisson}(\kappa)$. Having this, we can write the following 
	\begin{align}
	\label{ff0s}I_1 &\geq \text{Pr}\set{X < \log_2 M_2} - \bigg|  \text{Pr}\set{S_n < \log_2 M_2} -  \text{Pr}\set{X < \log_2 M_2} \bigg|,\\
	\label{fffs} I_2 &= \sum_{r\,<\,\log_2 M_2} \text{Pr}\set{S_n = r} \frac{2^r}{M_2} \\
	\label{ff1s}&\leq \sum_{r\,<\,\log_2 M_2} \text{Pr}\set{X =  r} \frac{2^r}{M_2}  +  \sum_{r\,<\,\log_2 M_2} \bigg| \text{Pr}\set{X =  r} - \text{Pr}\set{S_n = r} \bigg| \frac{2^r}{M_2}. 
	\end{align}
	Note that in both \eqref{ff0s} and \eqref{fffs}, we used a simple fact since that for any two real numbers $a$ and $b$, we have $a \geq b-|a-b|$ and $a\leq b+|a-b|$. Putting \eqref{ff0s} and \eqref{ff1s} together, we obtain
	\begin{align}\label{I1I2s}
	I_1 - I_2  \geq \mathbb{E}\bigg[\mathbf{1}\left(X < \log_{2}M_2\right)\left(1-\frac{2^X}{M_2}\right)\bigg]-J_1-J_2,
	\end{align}
	where
	\begin{align}
	J_1 & =   \bigg|  \text{Pr}\set{S_n < \log_2 M_2} -  \text{Pr}\set{X < \log_2 M_2} \bigg|,\\
	J_2&= \sum_{r\,<\,\log M_2} \bigg| \text{Pr}\set{X =  r} - \text{Pr}\set{S_n = r} \bigg| \frac{2^r}{M_2} . 
	\end{align}
	Using Theorem~\ref{BE for Poisson}, we have
	\begin{align}
	\label{J1s}J_1=& \bigg|  \text{Pr}\set{S_n < \log_2 M_2} -  \text{Pr}\set{X < \log_2 M_2} \bigg| \leq\alpha_1\sqrt{ \text{Pr}\set{X < \log_2 M_2}},\\
	\label{J2sp}&  \bigg|\text{Pr}\set{X =  r} - \text{Pr}\set{S_n = r} \bigg|  \leq \alpha_2 \sqrt{ \text{Pr}\set{X \leq r}},
	\end{align}
	where
	\begin{align}
	\alpha_1 = \frac{\sqrt{2}}{\epsilon^{3/2}}\left(\sqrt{e}-1\right)(1-\epsilon),\quad \alpha_2 = \frac{\sqrt{6}}{\epsilon^2\sqrt{\kappa}}\left(\sqrt{e}-1\right)(1-\epsilon).
	\end{align}
	Before continuing, let us clearly explain how applying Theorem~\ref{BE for Poisson} leads to \eqref{J1s}\footnote{The application of Theorem~\ref{BE for Poisson} is totally similar to the one explained in the achievability proof of BEC but it is repeated here to make each proof self-contained.}. Deriving \eqref{J2sp} from Theorem~\ref{BE for Poisson} is similar. Only for the sake of this explanation and only in the following paragraph, we borrow the local notation of Theorem~\ref{BE for Poisson}, i.e., $\theta,\lambda,\lambda_1,\lambda_2$, and $\psi(\cdot)$. Here, we consider the previously defined $S_n=\sum_{i=1}^{n} X_i$ where $p_i = \operatorname{Pr} (X_i = 1) = 1-\epsilon$ which leads to $\lambda=\lambda_1=\sum_{i=1}^{n} p_i = n(1-\epsilon) = \kappa$ in Theorem~\ref{BE for Poisson}. Hence, our previously defined $X\sim\operatorname{Poisson}(\kappa)$ matches with the $X\sim \operatorname{Poisson}(\lambda)$ in Theorem~\ref{BE for Poisson}. Moreover, $\lambda_2 = n(1-\epsilon)^2$ and thus $\theta = \lambda_2/\lambda_1 = 1-\epsilon$. Hence, \eqref{feq} becomes 
	\begin{align}\label{o1s}
	\bigg|\text{Pr} \{S_n\leq m\}-\text{Pr} \{X\leq m\}\bigg|\leq \frac{\sqrt{2}\,(\sqrt{e}-1)\,(1-\epsilon)}{\epsilon^{3/2}}\sqrt{\psi(m)} = \alpha_1\sqrt{\psi(m)} \leq \alpha_1\sqrt{\operatorname{Pr}(X\leq m)}.
	\end{align}
	The leftmost inequality is due to the fact that, by the definition of $\psi(\cdot)$ in \eqref{defpsi}, we have $\psi(m) \leq \operatorname{Pr}(X\leq m)$. Finally, note that \eqref{o1} holds for every real number $m$ and put $m=\lfloor \log_2 M_2 \rfloor +1/2$. This makes the inequalities strict, i.e., $\operatorname{Pr}(S_n \leq m) = \operatorname{Pr}(S_n < \log_2 M_2) $ and  $\operatorname{Pr}(X \leq m) = \operatorname{Pr}(X < \log_2 M_2) $. Replacing these into \eqref{o1s} gives \eqref{J1s}.
	
	We now upper-bound $J_2$ using \eqref{J2sp}, as follows:
	\begin{align}
	\nonumber J_2&= \sum_{r\,<\,\log M_2}\frac{2^r}{M_2}\bigg| \text{Pr}\set{X =  r} - \text{Pr}\set{S_n = r} \bigg|\\
	\nonumber & \leq \alpha_2 \sum_{r\,<\,\log M_2}\frac{2^r}{M_2}\sqrt{ \text{Pr}\set{X \leq r}}\\
	\label{lasts}&\leq \alpha_2 \sqrt{\sum_{r\,<\,\log M_2}\frac{2^r}{M_2}\text{Pr}\set{X \leq r}}.
	\end{align}
	For obtaining the last part, note that $\sqrt{x}$ is a concave function and
	\begin{equation}
	    \sum_{r\,<\,\log M_2} \frac{2^r}{M_1}\leq1.
	\end{equation}
	Thus, (\ref{lasts}) follows from Jensen inequality for $\sqrt{x}$, that is, $\mathbb{E}\left[\sqrt{Z}\right]\leq \sqrt{\mathbb{E}\left[Z\right]}$.
	Also, consider
	\begin{align}
	\nonumber &\sum_{r\,<\,\log M_2}\frac{2^r}{M_2}\text{Pr}\set{X \leq r}=\sum_{r\,<\,\log M_2}\frac{2^r}{M_2}\left(1-\text{Pr}\set{X > r} \right)\\
	\nonumber&= 1-\frac{1}{M_2}-\sum_{r\,<\,\log M_2}\frac{2^r}{M_2}\sum_{i=r+1}^{\infty}e^{-\kappa} \frac{\kappa^i}{i!}\\
	\nonumber&=1-\frac{1}{M_2} -\sum_{i=0}^{\infty}e^{-\kappa} \frac{\kappa^i}{i!}\left(\mathbf{1}\left(i\geq\log_2 M_2\right)+\mathbf{1}\left(i<\log_2M_2\right)\frac{2^i-1}{M_2}\right)\\
	\nonumber&=1-\frac{1}{M_2}- \left(1-\mathcal{P}_{\kappa}(\log_2 M_2)+\frac{e^{\kappa}}{M_2}\sum_{i\,<\,\log M_2}e^{-2\kappa} \frac{(2\kappa)^i}{i!}-\frac{\mathcal{P}_{\kappa}(\log_2 M_2)}{M_2}\right)\\
	\nonumber&=\mathcal{P}_{\kappa}( \log_{2}M_2)-\frac{e^{\kappa}}{M_2}\,\mathcal{P}_{2\kappa}\left(\log_2M_2\right)-\frac{1-\mathcal{P}_{\kappa}(\log_2 M_2)}{M_2}\\
	\label{tts}&\leq \mathcal{P}_{\kappa}( \log_{2}M_2)-\frac{e^{\kappa}}{M_2}\,\mathcal{P}_{2\kappa}\left(\log_2M_2\right).
	\end{align}
	From \eqref{lasts} and \eqref{tts}, we obtain
	\begin{align}
	\label{J2s}J_2&\leq \alpha_2 \sqrt{\mathcal{P}_{\kappa}( \log_{2}M_2)-\frac{e^{\kappa}}{M_2}\,\mathcal{P}_{2\kappa}\left(\log_2M_2\right)}.
	\end{align}
	Also, considering the notation in \eqref{Poisson cdf}, we can write
	\begin{align}
	\nonumber\mathbb{E}\bigg[\mathbf{1}\left(X < \log_{2}M_2\right)\left(1-\frac{2^X}{M_2}\right)\bigg]  &= \mathbb{E}\bigg[\mathbf{1}\left(X < \log_{2}M_2\right)\bigg] -\mathbb{E}\bigg[\left(X < \log_{2}M_2\right)\frac{2^X}{M_2}\bigg] \\
	\nonumber&= \mathcal{P}_{\kappa}( \log_{2}M_2)-\sum_{r\,<\,\log_{2}M_2}e^{-\kappa} \frac{\kappa^r}{r!}\cdot\frac{2^r}{M_2}\\
	\nonumber&=\mathcal{P}_{\kappa}( \log_{2}M_2)-\frac{e^{\kappa}}{M_2}\sum_{r\,<\,\log_{2}M_2}e^{-2\kappa} \frac{(2\kappa)^r}{r!}\\
	&=\mathcal{P}_{\kappa}( \log_{2}M_2)-\frac{e^{\kappa}}{M_2}\,\mathcal{P}_{2\kappa}\left(\log_2M_2\right). \label{we2}
	\end{align}
	Now, \eqref{I1I2s}, \eqref{J1s}, \eqref{J2s}, and \eqref{we2} together result in
	\begin{align}
	\nonumber I_1 - I_2 &\geq  \mathcal{P}_{\kappa}( \log_{2}M_2)-\frac{e^{\kappa}}{M_2}\,\mathcal{P}_{2\kappa}\left(\log_2M_2\right)-\alpha_1\sqrt{ \mathcal{P}_{\kappa}(\log_2M_2)}\\
 \nonumber&\qquad\qquad\qquad\qquad\qquad-\alpha_2\sqrt{\mathcal{P}_{\kappa}( \log_{2}M_2)-\frac{e^{\kappa}}{M_2}\,\mathcal{P}_{2\kappa}\left(\log_2M_2\right)}\\
	\nonumber    &\geq \mathfrak{P}_2(M_2)-\alpha_2\sqrt{\mathfrak{P}_2(M_2)}-\alpha_1\sqrt{ \mathcal{P}_{\kappa}(\log_2M_2)}\\
	\label{ee}&\geq p_e,
	\end{align}
	where
	\begin{align}
	\mathfrak{P}_2(M_2)&= \mathcal{P}_{\kappa}( \log_{2}M_2)-\frac{e^{\kappa}}{M_2}\,\mathcal{P}_{2\kappa}\left(\log_2M_2\right).
% 	\alpha&=\alpha_1+2\alpha_2=\frac{\sqrt{2}}{\epsilon^{3/2}}\left(1+2\sqrt{\frac{3}{\epsilon\kappa}}\right)\left(\sqrt{e}-1\right)(1-\epsilon).
	\end{align}
	Note that \eqref{ee} holds by the definition of $M_2$ in \eqref{M2}. Hence, we showed
	\begin{align}
	\sum_{r\,< \,\log_2 M_2}\binom{n}{r}\epsilon^{n-r}(1-\epsilon)^{r}\bigg(1-\frac{2^{r}}{M_2}\bigg)=I_1-I_2\geq p_e,
	\end{align}
	which means $M_2$ dissatisfies the inequality in \eqref{BEC converse}. Hence, $M^*(n,p_e) \leq M_2$.
\end{proof}

\subsection{How to Compute the Bounds in Theorem~\ref{Bounds for BEC}} \label{ramanujan}
The problem essentially boils down to accurate computation of the probabilities $\rm{Pr}(X = s)$ when $X$ is a Poisson random variable with average $\kappa$. We have 
${\rm{Pr}}(X = s) = e^{-\kappa} \kappa^s/s!$. The value of $s!$ can be approximated using the refined Ramanujan's formula \cite{hirschhorn2014refinement}: 
\begin{equation}
    s!  = \sqrt{\pi} \left( \frac{s}{e} \right)^s \left(8s^3 + 4s^2 + s+ \frac{\theta(s)}{30} \right)^{\frac{1}{6}} ,  
\end{equation}
where 
\begin{equation}
    \theta_1(s):=1 - \frac{11}{8s}+\frac{79}{112s^2} \leq \theta(s) \leq \theta_2(s):= 1 -\frac{11}{8s}+\frac{79}{112s^2} + \frac{20}{33s^3}.
\end{equation}
By plugging-in the lower bound for $s!$ from the above formula  (for $s \leq 10$ we can use the exact value of $s!$) we obtain
\begin{equation}
    {\rm Pr}(X = s) = \frac{e^{s-\kappa + s \ln(\kappa/s)}}{\sqrt{\pi} \left(8s^3 + 4s^2 + s+ \frac{\theta_1(s)}{30}\right)^{\frac{1}{6}}}.
\end{equation}
Note that this formula is exact up to a multiplicative factor of $1+s^{-6}$ which for $s \geq 10$ gives us a $(1+ 10^{-6})$-approximation. Moreover, one can use the lower and upper bound approximations for $s!$ to bound the Poisson probability from above and below and hence obtain bounds for $M_1, M_2$ in Theorem~\ref{Bounds for BEC}.  

Once the Poisson probabilities are approximated (or bounded) suitably, we can compute values of $M_1$ and $M_2$ up to any precision by truncating the expectation and using the bisection method to solve the equations. 

\section{Proofs for BSC}\label{BSC proofs}
In this section, we prove the converse and achievability bounds of Theorem~\ref{bounds for BSC}. In the proofs, we will be using Theorems~\ref{BE}--\ref{BSC conv} as well as Lemmas~\ref{est}--\ref{beta} which are stated at the beginning of this section. For the results in coding theory, we generally refer to \cite{polyphd} as it has well collected and presented the corresponding proofs. See also \cite{Tan}, \cite{Wolfowitz}, \cite{Gal}, \cite{Poor}, and \cite{Elias}.

\begin{theorem}[Berry-Esseen]
	\label{BE}
	For $1\le i \le n$, let $X_i$'s be i.i.d. random variables with $\mu=\mathbb{E}[X_i]$, $\sigma^2 = \mathbb{V}ar[X_i]$, and $\rho=\mathbb{E}\left[\abs{X_i-\mu}^3\right] < \infty$. Then, 
	\begin{equation}
	\abs{\Pr\bigg\{\sum_{i=1}^n \frac{X_i-\mu}{\sigma\sqrt{n}} > x\bigg\}-Q(x)} \le \frac{3\rho}{\sigma^3\sqrt{n}}.
	\end{equation}
\end{theorem}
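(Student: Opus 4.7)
This is the classical Berry--Esseen theorem, which I would attribute to standard references (e.g., Feller, Vol.~II). My plan is to outline the standard characteristic-function route due to Esseen. First, I would reduce to the case $\mu = 0$, $\sigma = 1$ by replacing $X_i$ with $(X_i - \mu)/\sigma$, so that only the dimensionless quantity $\rho/\sigma^3$ enters the final bound. Let $\phi$ denote the common characteristic function of the standardized $X_i$, let $\psi_n(t) = \phi(t/\sqrt{n})^n$ be the characteristic function of $S_n = n^{-1/2}\sum_i X_i$, and let $F_n$ be its distribution function, to be compared against the standard normal CDF $\Phi$.

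The first technical ingredient is Esseen's smoothing inequality, which for any $T > 0$ yields
\[\sup_x |F_n(x) - \Phi(x)| \le \frac{1}{\pi}\int_{-T}^{T}\left|\frac{\psi_n(t) - e^{-t^2/2}}{t}\right| dt + \frac{24}{\pi T \sqrt{2\pi}}.\]
The second ingredient is a Taylor expansion of $\phi$: the assumption $\mathbb{E}|X_i - \mu|^3 < \infty$ gives $\phi(s) = 1 - s^2/2 + R(s)$ with $|R(s)| \le \rho |s|^3/(6\sigma^3)$. Substituting $s = t/\sqrt{n}$, taking the $n$-th power, and combining elementary inequalities such as $|a^n - b^n| \le n \max(|a|,|b|)^{n-1}|a-b|$ with $|\log(1-u)+u| \le |u|^2$ for small $u$, I would derive a pointwise estimate of the form $|\psi_n(t) - e^{-t^2/2}| \lesssim (\rho/\sigma^3)\,|t|^3 n^{-1/2}\, e^{-t^2/4}$, valid for $|t|$ in a window proportional to $\sqrt{n}\sigma^3/\rho$.

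Finally, I would insert this pointwise bound into Esseen's inequality and integrate. The first term then contributes $\mathcal{O}(\rho/(\sigma^3\sqrt{n}))$, since $\int |t|^2 e^{-t^2/4} dt$ is a finite absolute constant; choosing $T$ of order $\sqrt{n}\sigma^3/\rho$ makes the $24/(\pi T\sqrt{2\pi})$ term of the same order, producing the desired $\mathcal{O}(\rho/(\sigma^3\sqrt{n}))$ rate. The main obstacle is keeping the constant sharp: a direct execution of this scheme produces some universal constant, but achieving the specific value $3$ stated in the theorem requires the refined integral estimates and careful case analysis due to van Beek, Shevtsova, and others. For the application in this paper only the rate $\rho/(\sigma^3\sqrt{n})$ is used, so any Berry--Esseen-type bound with a universal constant would suffice.
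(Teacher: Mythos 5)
The paper does not prove this statement at all: it simply cites \cite[Theorem~3.4.9]{durrett}, whose proof is exactly the Esseen smoothing inequality plus characteristic-function expansion that you outline, so your route is essentially the same as the cited source. One minor correction: the constant $3$ does not require the van Beek/Shevtsova refinements (those yield much smaller constants); the classical textbook computation (Feller, Durrett) already gives $3$, with small $n$ handled trivially because $\rho \ge \sigma^3$ makes the right-hand side at least $3/\sqrt{n} \ge 1$ there.
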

\begin{proof}
	See \cite[Theorem~3.4.9]{durrett}.
\end{proof}

\begin{theorem}[A Sharp Tail Inequality]
	\label{Tal}
	For $1\le i \le n$, let $X_i$ be independent centered random variables such that $\abs{X_i} \le b$. Define $\sigma^2 = \mathbb{V}ar\bigg[\sum_{i =1}^n X_i\bigg]$ and let $\gamma>0$ be an arbitrary constant. Then, for $0\le t\le \frac{\sigma}{\gamma b}$ we have
	\begin{equation}
	\Pr\bigg\{\sum_{i=1}^n X_i \ge t\bigg\} \le \bigg(\frac{1}{\sqrt{2\pi}}\frac{\sigma}{t}+\gamma\frac{b}{\sigma}\bigg)e^{-nH(\frac{\sigma^2}{n},b,\frac{t}{n})},
	\end{equation}
	where 
	\begin{equation}
	H(\nu,b,x)=\bigg(1+\frac{bx}{\nu}\bigg)\frac{\nu}{b^2+\nu}\ln\bigg(1+\frac{bx}{\nu^2}\bigg)+\bigg(1-\frac{x}{b}\bigg)\frac{b^2}{b^2+\nu}\ln\bigg(1-\frac{x}{b}\bigg).
	\end{equation}
\end{theorem}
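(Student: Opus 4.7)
\textbf{Proof proposal for Theorem~\ref{Tal}.} The plan is to combine Cram\'er's exponential change of measure with a local Berry-Esseen refinement. Introduce the cumulant generating functions $\psi_i(\lambda)=\log \mathbb{E}[e^{\lambda X_i}]$ and, for $\lambda>0$, the tilted probability measure $\tilde{P}_\lambda$ with Radon-Nikodym derivative
$$\frac{d\tilde{P}_\lambda}{dP}=\prod_{i=1}^n e^{\lambda X_i-\psi_i(\lambda)}.$$
Then the standard change-of-measure identity yields
$$\Pr\Bigl\{\sum_{i=1}^n X_i\ge t\Bigr\}=e^{-\lambda t+\sum_i\psi_i(\lambda)}\,\mathbb{E}_{\tilde{P}_\lambda}\bigl[e^{-\lambda(S_n-t)}\mathbf{1}\{S_n\ge t\}\bigr],$$
where $S_n=\sum_i X_i$. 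The first factor is what produces the exponential $e^{-nH(\cdot)}$ after we choose the optimal $\lambda^\ast$.

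First I would optimize $\lambda$ by the saddle-point choice $\lambda^\ast$ making $S_n$ have $\tilde{P}_{\lambda^\ast}$-mean equal to $t$, i.e. $\sum_i \psi_i'(\lambda^\ast)=t$. The resulting exponent $-\lambda^\ast t+\sum_i\psi_i(\lambda^\ast)$ is the Legendre transform of the cumulant function; a direct computation, using the classical fact that the worst-case bounded centered distribution on $[-b,b]$ with variance $\nu$ is supported on two points at $-\nu/b$ and $b$, shows that this Legendre transform equals exactly $-nH(\sigma^2/n,b,t/n)$ for the function $H$ stated in the theorem. This is the same calculation that underlies the Bennett inequality, and the two-point worst case is what forces the particular logarithmic form in $H$.

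Next, I would handle the residual expectation $\mathbb{E}_{\tilde{P}_{\lambda^\ast}}[e^{-\lambda^\ast(S_n-t)}\mathbf{1}\{S_n\ge t\}]$ by rewriting it via the identity
$$\mathbb{E}_{\tilde{P}_{\lambda^\ast}}\bigl[e^{-\lambda^\ast(S_n-t)}\mathbf{1}\{S_n\ge t\}\bigr]=\int_0^\infty \lambda^\ast e^{-\lambda^\ast u}\,\tilde{P}_{\lambda^\ast}\{0\le S_n-t\le u\}\,du.$$
Under $\tilde{P}_{\lambda^\ast}$, the sum $S_n-t$ is a sum of $n$ independent centered bounded random variables with variance $\tilde\sigma_n^2=\sum_i\psi_i''(\lambda^\ast)$. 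In the regime $t\le \sigma/(\gamma b)$, the tilt $\lambda^\ast$ is small enough that $\psi_i''(\lambda^\ast)$ is within a controlled multiple of $\psi_i''(0)=\mathrm{Var}(X_i)$, so $\tilde\sigma_n^2\asymp\sigma^2$. Applying the Berry-Esseen theorem (Theorem~\ref{BE}) to $(S_n-t)/\tilde\sigma_n$ splits the integral into a Gaussian principal part and a Berry-Esseen remainder: the Gaussian part evaluates to $\tfrac{1}{\sqrt{2\pi}}\tfrac{1}{\lambda^\ast\tilde\sigma_n}$ and, after using the saddle-point identity $\lambda^\ast\tilde\sigma_n^2\approx t$ (valid in the small-tilt regime), produces the $\tfrac{1}{\sqrt{2\pi}}\cdot\tfrac{\sigma}{t}$ term; the Berry-Esseen error is of order $\tilde\rho/\tilde\sigma_n^3\sqrt{n}$ and, using the boundedness $|X_i|\le b$ to dominate $\tilde\rho\le b\tilde\sigma_n^2$, collapses to the $\gamma\,b/\sigma$ contribution.

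The main obstacle will be bookkeeping the replacement of the tilted variance $\tilde\sigma_n^2$ by the untilted $\sigma^2$ and of $\lambda^\ast\tilde\sigma_n$ by $t/\sigma$, both of which are justified only because the hypothesis $t\le\sigma/(\gamma b)$ keeps $\lambda^\ast$ small enough that second-order Taylor expansion of each $\psi_i$ around $0$ is accurate up to a multiplicative constant captured in $\gamma$. Once these approximations are controlled uniformly in $i$, assembling the pieces gives the two-term prefactor $\bigl(\tfrac{1}{\sqrt{2\pi}}\tfrac{\sigma}{t}+\gamma\tfrac{b}{\sigma}\bigr)$ multiplying the Cram\'er exponential $e^{-nH(\sigma^2/n,b,t/n)}$, completing the proof.
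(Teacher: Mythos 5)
First, a point of comparison: the paper does not prove Theorem~\ref{Tal} at all --- its ``proof'' is the single line citing Talagrand's missing-factor paper \cite{Talag}, Theorem~1.1. So there is no in-paper argument to match; what you have written is a sketch of how one would reprove Talagrand's result, and your skeleton --- exponential change of measure at the saddle point, then a Berry--Esseen correction for the residual expectation $\mathbb{E}_{\tilde P_{\lambda^\ast}}[e^{-\lambda^\ast(S_n-t)}\mathbf{1}\{S_n\ge t\}]$ --- is indeed the strategy underlying that theorem.

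As a proof, however, the proposal has two genuine gaps. (1) Your claim that $-\lambda^\ast t+\sum_i\psi_i(\lambda^\ast)$ ``equals exactly'' $-nH(\sigma^2/n,b,t/n)$ is false in general; equality holds only if every $X_i$ is itself the two-point extremal variable. What you actually need is the inequality $\sup_\lambda\bigl[\lambda t-\sum_i\psi_i(\lambda)\bigr]\ge nH(\sigma^2/n,b,t/n)$, which requires two ingredients you do not supply: the Bennett--Hoeffding domination $\psi_i(\lambda)\le\psi^{\mathrm{2pt}}_{\nu_i}(\lambda)$ for each $i$, and a separate step (Hoeffding's AM--GM/concavity argument) allowing the unequal variances $\nu_i$ to be replaced by their average $\sigma^2/n$, since $H$ is evaluated at the average variance. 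Note also that the comparison must be made at the level of the suprema (it works because $\lambda^\ast$ maximizes the true Legendre problem); substituting $\lambda^\ast$ into the two-point bound at fixed $\lambda$ gives an inequality in the useless direction. (2) The quantitative heart of the theorem --- that the Gaussian term keeps the exact constant $1/\sqrt{2\pi}$ while every error from replacing $\tilde\sigma_n$ by $\sigma$ and $\lambda^\ast\tilde\sigma_n$ by $t/\sigma$ is absorbed into a single term $\gamma\,b/\sigma$, with the same constant $\gamma$ that delimits the admissible range of $t$ --- is asserted rather than derived. Carrying it out needs, e.g., two-sided bounds such as $e^{-2\lambda b}\nu_i\le\psi_i''(\lambda)\le e^{2\lambda b}\nu_i$, an argument that the hypothesis on $t$ really forces $\lambda^\ast b$ to be small (the saddle-point relation alone gives $\lambda^\ast b\,e^{-2\lambda^\ast b}\le\mathrm{const}$, which does not by itself exclude a large $\lambda^\ast$; one must use a genuine lower bound on $\sum_i\psi_i'(\lambda)$), and a Berry--Esseen bound for independent, non-identically distributed summands --- Theorem~\ref{BE} as stated in the paper is for i.i.d.\ variables, whereas the tilted variables are only independent. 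These pieces are all obtainable, but they are precisely the technical content of Talagrand's proof, so as written your proposal is an outline of the right approach rather than a complete argument.
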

\begin{proof}
	See \cite[Theorem~1.1]{Talag}.
\end{proof}

\begin{theorem}[RCU Achievability Bound for BSC]\label{BSC achiv} 
	There exists an $(M,p_e)$-code over BSC$^n$($\delta$) such that
	\begin{equation}
	\label{BSC ach.}
	p_e \leq \sum_{r=0}^n\binom{n}{r}\delta^{r}(1-\delta)^{n-r}\min\bigg\{1,(M-1) S_n^r\bigg\},
	\end{equation}
	where
	\begin{equation}
	    S_n^r=\sum_{s=0}^r\binom{n}{s}2^{-n}.
	\end{equation}
\end{theorem}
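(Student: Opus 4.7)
The plan is to establish the bound by a standard random coding argument combined with the union bound, refined by the trivial ``clip at $1$'' inequality, which is exactly what produces the $\min\{1, (M-1)S_n^r\}$ form. I will draw the $M$ codewords $C_1,\dots,C_M$ independently and uniformly from $\{0,1\}^n$, and use maximum likelihood decoding, which for BSC($\delta$) with $\delta<1/2$ is equivalent to minimum Hamming distance decoding.

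First I would fix a sent codeword, say $C_1$, and condition on the noise vector $E$ (so the received word is $Y=C_1\oplus E$); by symmetry of the BSC and the uniform codebook distribution, the error probability does not depend on the transmitted message, so the average probability of error equals the conditional error probability given $C_1$ is sent. The decoding error event (under any tie-breaking rule) is contained in the event that some other codeword $C_j$, $j\ge 2$, lies at Hamming distance at most $r:=w_H(E)$ from $Y$. For a single $j\ge 2$, since $C_j$ is uniform on $\{0,1\}^n$ and independent of $(C_1,E)$, the vector $C_j\oplus Y$ is uniform on $\{0,1\}^n$, so
\[
\Pr\{d_H(C_j,Y)\le r \mid E=e,\,C_1\}=\frac{1}{2^n}\sum_{s=0}^{r}\binom{n}{s}=S_n^r.
\]
Taking a union bound over the $M-1$ competitors $j=2,\dots,M$ gives $\Pr\{\text{error}\mid w_H(E)=r\}\le (M-1)S_n^r$, and since any probability is at most $1$, we get $\Pr\{\text{error}\mid w_H(E)=r\}\le \min\{1,(M-1)S_n^r\}$.

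Next I would take expectation over the noise weight $r$, which is binomial$(n,\delta)$, to obtain the average error probability over the random codebook:
\[
\mathbb{E}[p_e] \le \sum_{r=0}^n \binom{n}{r}\delta^r(1-\delta)^{n-r}\min\{1,(M-1)S_n^r\}.
\]
Finally, since the average over codebooks is bounded by this quantity, there must exist at least one realization of the codebook that achieves an error probability at most the right-hand side, which is the desired $(M,p_e)$-code. I do not anticipate a serious obstacle here: the only subtle point is justifying, via the symmetry of BSC under additive noise and of the uniform codebook distribution, that the conditional probability that a random competitor falls inside the Hamming ball of radius $r$ around $Y$ equals $S_n^r$ independently of $C_1$ and $E$; everything else is union bound and linearity of expectation.
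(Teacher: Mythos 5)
Your proof is correct and is essentially the same argument as the one behind the result the paper invokes: the paper gives no proof of its own but cites \cite[Corollary~39]{polyphd}, whose derivation is exactly this random-coding union bound clipped at $1$, with the competitor probability $S_n^r$ computed from the uniform codeword distribution and the expectation taken over the binomial noise weight. The only cosmetic caveat is your implicit assumption $\delta<1/2$ (so that ML decoding is minimum-distance decoding), which is immaterial since the case $\delta\geq 1/2$ reduces to it by flipping the channel outputs, and the low-capacity regime of interest has $\delta$ near, but below, $1/2$.
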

\begin{proof}
	See \cite[Corollary~39]{polyphd}.
\end{proof}

\begin{theorem}[Converse Bound for BSC]\label{BSC conv}
	For any $(M,p_e)$-code over the BSC$^n$($\delta$), we have
	\begin{equation}
	\label{converse}
	M \le \frac{1}{\beta_{1-p_e}^n}.
	\end{equation}
	where $\beta_{\alpha}^n$ for a real $\alpha\in[0,1]$ is defined below based on values of $\beta_{\ell}$ where $\ell$ is an integer:
	\begin{align}
	\beta_{\alpha}^n&=(1-\lambda)\beta_L+\lambda\beta_{L+1},\\
	\beta_{\ell}&=\sum_{r=0}^{\ell}\binom{n}{r}2^{-n},
	\end{align}
	such that $\lambda\in [0,1)$ and integer $L$ satisfy the following:
	\begin{align}
	\alpha&=(1-\lambda)\alpha_L+\lambda\alpha_{L+1},\\
	\alpha_{\ell}&=\sum_{r=0}^{\ell-1}\binom{n}{r}\delta^r(1-\delta)^{n-r}.
	\end{align}
\end{theorem}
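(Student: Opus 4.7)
The plan is to derive this bound as a specialization of the meta-converse theorem (Polyanskiy--Poor--Verd\'u) to BSC. The starting point is the general hypothesis-testing converse: for any $(M,p_e)$-code with codeword distribution $P_{X^n}$ transmitted over channel $W^n$, and for any auxiliary distribution $Q_{Y^n}$ on the output alphabet, one has $M \cdot \beta_{1-p_e}(P_{X^nY^n},\,P_{X^n}Q_{Y^n}) \leq 1$, where $\beta_{\alpha}$ denotes the smallest Type-II error of a binary hypothesis test whose Type-I success probability is at least $\alpha$. I would instantiate this with $Q_{Y^n}$ equal to the uniform distribution on $\{0,1\}^n$, the natural symmetric choice for BSC.

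With this choice, the likelihood ratio $W^n(y\mid x)/Q_{Y^n}(y) = 2^n \delta^{d_H(x,y)}(1-\delta)^{n-d_H(x,y)}$ depends on $(x,y)$ only through the Hamming distance $d_H(x,y)$, and since $\delta<1/2$ it is strictly monotone decreasing in $d_H(x,y)$. By the Neyman--Pearson lemma the optimal test therefore accepts $P_{X^nY^n}$ whenever $d_H(x,y) < L$ and with probability $\lambda$ when $d_H(x,y)=L$, for an integer $L$ and $\lambda \in [0,1)$ chosen to meet the Type-I constraint. Under $P_{X^nY^n}$ the Hamming distance equals the weight of the BSC noise and is Binomial$(n,\delta)$, so the Type-I success probability is exactly $\alpha_L + \lambda(\alpha_{L+1}-\alpha_L)$; setting this equal to $1-p_e$ determines $L$ and $\lambda$. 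Under $P_{X^n}Q_{Y^n}$ the distance is instead Binomial$(n,1/2)$, which (after aligning index conventions) yields the piecewise-linear formula $\beta_{1-p_e}^n = (1-\lambda)\beta_L + \lambda \beta_{L+1}$ stated in the theorem.

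The remaining ingredient is that the bound must hold regardless of the actual (possibly non-uniform) codebook distribution. For this I would invoke the symmetry of BSC: for any code one may XOR each codeword by an independent uniform random vector to obtain a code with the same average error probability whose codeword distribution is uniform on $\{0,1\}^n$, so without loss of generality $P_{X^n}$ is uniform. The main obstacle I anticipate is not conceptual but bookkeeping: making the randomization parameter $\lambda$ in the Neyman--Pearson test mesh exactly with the piecewise-linear definition used in the statement, in particular the slight off-by-one convention difference between $\alpha_\ell = \sum_{r=0}^{\ell-1}\binom{n}{r}\delta^r(1-\delta)^{n-r}$ and $\beta_\ell = \sum_{r=0}^{\ell}\binom{n}{r}2^{-n}$. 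Once the indices are reconciled, combining the meta-converse inequality with the explicit Neyman--Pearson computation yields $M \leq 1/\beta_{1-p_e}^n$, recovering \cite[Theorem~40]{polyphd}.
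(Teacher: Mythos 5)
Your route is the same one the paper relies on: the paper gives no self-contained argument and simply cites \cite[Theorem~40]{polyphd}, whose proof is exactly the meta-converse with the uniform output distribution $Q_{Y^n}$ followed by the Neyman--Pearson computation for binomial statistics, which is what you reconstruct. Two caveats, one minor and one substantive.

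The minor one: the XOR symmetrization is unnecessary, and as phrased it is a bit shaky. With $Q_{Y^n}$ uniform the likelihood ratio $W^n(y\mid x)/Q_{Y^n}(y)$ depends on $(x,y)$ only through the Hamming distance, whose conditional law given $X^n=x$ is Binomial$(n,\delta)$ under $P_{X^nY^n}$ and Binomial$(n,\tfrac12)$ under $P_{X^n}Q_{Y^n}$ for \emph{every} $x$; hence $\beta_{1-p_e}(P_{X^nY^n},P_{X^n}Q_{Y^n})$ is the same for every codeword distribution and no symmetrization is needed. If you do insist on it, note that XORing all codewords by a common random vector produces a randomized code (so you need the meta-converse in its randomized-code form), while XORing each codeword by an independent vector does not preserve the code at all.

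The substantive one: the ``off-by-one bookkeeping'' you defer cannot actually be reconciled with the statement as printed. The Neyman--Pearson test achieving Type-I success $\alpha=(1-\lambda)\alpha_L+\lambda\alpha_{L+1}$ (accept when $d<L$, randomize with probability $\lambda$ on $d=L$) has Type-II error $(1-\lambda)\sum_{r=0}^{L-1}\binom{n}{r}2^{-n}+\lambda\sum_{r=0}^{L}\binom{n}{r}2^{-n}$; that is, the $\beta_\ell$ must be truncated at $\ell-1$ so as to align with the $\alpha_\ell$ convention (equivalently, both truncated at $\ell$). The mixed convention displayed in the theorem, with $\beta_\ell=\sum_{r=0}^{\ell}\binom{n}{r}2^{-n}$ against $\alpha_\ell=\sum_{r=0}^{\ell-1}\binom{n}{r}\delta^r(1-\delta)^{n-r}$, inflates $\beta^n_\alpha$ by one binomial term and would assert a strictly stronger ``bound'' that is in fact false: for $n=1$ and $p_e=\delta$ it forces $L=1$, $\lambda=0$, $\beta^1_{1-\delta}=1$ and hence $M\le 1$, while the two-word code $\{0,1\}$ with maximum-likelihood decoding has average error exactly $\delta$. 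So your argument, carried out carefully, proves the correctly indexed version of the bound (the one stated in \cite[Theorem~40]{polyphd}, which is clearly what is intended here); the displayed pairing is a transcription slip in the statement, not something the Neyman--Pearson computation can be massaged into producing, and your closing ``once the indices are reconciled'' glosses over exactly this point.
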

\begin{proof}
	See \cite[Theorem~40]{polyphd}.
\end{proof}

\begin{lemma}\label{est}
	Consider transmission over BSC($\delta$) in low-capacity regime and let $\kappa=n(1-h_2(\delta))$. Then the following hold:
	\begin{enumerate} [i)]
		\item $\sqrt{n}\left(\frac{1}{2}-\delta\right)  = \sqrt{\frac{\ln 2}{2}\,\kappa}+\mathcal{O}\left(\frac{\kappa\sqrt{\kappa}}{n}\right).$
		\item $\sqrt{n}\,\log_2 \frac{1-\delta}{\delta}  =2\sqrt{\frac{2}{\ln 2}\,\kappa}+\mathcal{O}\left(\frac{\kappa\sqrt{\kappa}}{n}\right).$
	\end{enumerate}
\end{lemma}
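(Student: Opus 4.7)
The plan is to work entirely by Taylor expansion around the symmetric point $\delta = 1/2$, which is the natural limit as the BSC capacity approaches zero. Let $u := \tfrac{1}{2} - \delta$, so that in the low-capacity regime we have $u \to 0$. I will first express $1-h_2(\delta)$ as an even power series in $u$, then invert that relation to read off $u$ in terms of $\kappa/n$, and finally plug the result into the two target identities.

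For part (i), I would start from the identities $h_2(1/2)=1$, $h_2'(1/2)=0$, $h_2''(1/2) = -\tfrac{4}{\ln 2}$, and $h_2'''(1/2)=0$, which follow from direct differentiation of $h_2$. Taylor's theorem with remainder then gives
\begin{equation*}
1 - h_2(\delta) \;=\; \frac{2}{\ln 2}\, u^2 \;+\; \mathcal{O}(u^4).
\end{equation*}
Multiplying by $n$ and using $\kappa = n(1-h_2(\delta))$ yields $u^2 = \tfrac{\ln 2}{2}\cdot\tfrac{\kappa}{n} + \mathcal{O}(u^4)$. Since $u = \mathcal{O}(\sqrt{\kappa/n})$ to leading order, we get $u^4 = \mathcal{O}(\kappa^2/n^2)$, and taking the square root (via $\sqrt{A+B} = \sqrt{A} + \mathcal{O}(B/\sqrt{A})$ with $A = \tfrac{\ln 2}{2}\cdot\tfrac{\kappa}{n}$ and $B = \mathcal{O}(\kappa^2/n^2)$) produces
\begin{equation*}
u \;=\; \sqrt{\frac{\ln 2}{2}\cdot\frac{\kappa}{n}} \;+\; \mathcal{O}\!\left(\frac{\kappa^{3/2}}{n^{3/2}}\right).
\end{equation*}
Multiplying by $\sqrt{n}$ gives part (i). The main (mild) obstacle here is carefully propagating the error through the square root; the key point to verify is that the low-capacity assumption $\kappa = o(n)$ ensures $u\to 0$ so that the Taylor remainder bound is valid.

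For part (ii), I would write $\tfrac{1-\delta}{\delta} = \tfrac{1/2+u}{1/2-u} = \tfrac{1+2u}{1-2u}$ and expand
\begin{equation*}
\log_2 \frac{1+2u}{1-2u} \;=\; \frac{1}{\ln 2}\bigl(\ln(1+2u) - \ln(1-2u)\bigr) \;=\; \frac{4u}{\ln 2} + \mathcal{O}(u^3).
\end{equation*}
Multiplying by $\sqrt{n}$ and substituting the expansion of $\sqrt{n}\,u$ from part (i) gives $\sqrt{n}\,\tfrac{4u}{\ln 2} = 2\sqrt{\tfrac{2\kappa}{\ln 2}} + \mathcal{O}(\kappa^{3/2}/n)$, while the residual $\sqrt{n}\,\mathcal{O}(u^3) = \mathcal{O}(\kappa^{3/2}/n)$ falls into the same error class. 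Combining these yields part (ii).

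In summary, both statements reduce to two elementary Taylor expansions — one for $h_2$ around $1/2$ and one for $\log_2\tfrac{1+2u}{1-2u}$ around $u=0$ — connected by inverting the quadratic relation between $\kappa/n$ and $u^2$. No step is genuinely difficult; the only care required is bookkeeping to confirm that the leading error term in both identities is the same $\mathcal{O}(\kappa\sqrt{\kappa}/n)$, which is the reason the low-capacity assumption enters only through $u\to 0$.
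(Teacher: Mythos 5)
Your proof is correct and follows essentially the same route as the paper: a Taylor expansion of $h_2$ about $\delta=\tfrac12$, inversion of the resulting quadratic relation between $1-h_2(\delta)$ and $\left(\tfrac12-\delta\right)^2$ to obtain part (i), and then the expansion of $\log_2\frac{1-\delta}{\delta}$ combined with part (i) to obtain part (ii). The only cosmetic difference is that the paper keeps the quartic Taylor term and solves the quadratic in $(1-2\delta)^2$ exactly before expanding, whereas you absorb it into the error and propagate it through the square root; both yield the same $\mathcal{O}\!\left(\kappa\sqrt{\kappa}/n\right)$ term.
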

\begin{proof}
	\begin{enumerate}[i)]
		\item Taylor expansion of $h_2(\delta)$ around $\frac{1}{2}$ is given by
		\begin{equation}
		h_2(\delta) = 1-\frac{1}{2\ln 2}\sum_{n=1}^\infty \frac{(1-2\delta)^{2n}}{n(2n-1)}.
		\end{equation}
		Thus, the estimation of $h_2(\delta)$ up to the third order will be the following:
		\begin{equation}
		    h_2(\delta)=1-\frac{1}{2\ln 2}(1-2\delta)^2-\frac{1}{2\ln 2}\cdot\frac16(1-2\delta)^4+\mathcal{O}\left((1-2\delta)^6\right).
		\end{equation}
		Therefore, 
		\begin{equation}
		    C=1-h_2(\delta) = \frac{1}{2\ln 2}(1-2\delta)^2+\frac{1}{2\ln 2}\cdot\frac16(1-2\delta)^4+\mathcal{O}\left((1-2\delta)^6\right).
		\end{equation}
		Now assuming $x=(1-2\delta)^2$ leads to the following equation:
		\begin{equation}
		\frac{x^2}{6}+x-2C\ln 2=0.
		\end{equation}
		Solving this equation gives
		\begin{equation}
		x=3\left(-1+\sqrt{1+\frac43C\ln 2}\right)= 2C\ln 2+\mathcal{O}\left(C^2\right).
		\end{equation}
		Therefore,
		\begin{equation}\label{eq1}
		\frac 12-\delta=\frac{\sqrt{x}}{2}= \sqrt{\frac{\ln 2}{2}C}+\mathcal{O}\left(C^{\frac 32}\right).
		\end{equation}
		We want to obtain $\sqrt{n}(1/2-\delta)$. Therefore, we multiply both sides of \eqref{eq1} by $\sqrt{n}$ to obtain
		\begin{equation}
		\sqrt{n}\left(\frac{1}{2}-\delta\right) = \sqrt{\frac{\ln 2}{2}nC}+\mathcal{O}\left(nC^{\frac 32}\right).
		\end{equation}
		Note that $\kappa=nC$. As a result,
		\begin{equation}\label{q}
		\sqrt{n}\left(\frac{1}{2}-\delta\right)= \sqrt{\frac{\ln 2}{2}\kappa}+\mathcal{O}\left(\frac{\kappa\sqrt{\kappa}}{n}\right).
		\end{equation}
		\item The first order estimation of the function $\log_2 ({1-\delta})/{\delta}$ around $1/2$ is
		\begin{equation}
		    \sqrt{n}\,\log_2 \frac{1-\delta}{\delta} = \frac{4}{\ln 2} \left(\frac 12 - \delta\right)+\mathcal{O}\left(\left(\frac 12-\delta\right)^2\right).
		\end{equation}
		Hence, using \eqref{q}, we arrive at
		\begin{equation}
		    \sqrt{n}\,\log_2 \frac{1-\delta}{\delta} = 2\sqrt{\frac{2}{\ln 2}\kappa}+\mathcal{O}\left(\frac{\kappa\sqrt{\kappa}}{n}\right).
		\end{equation}
	\end{enumerate}
\end{proof}

\begin{lemma}
	\label{Q}
	Suppose $0<p<Q(1)\approx0.16$.\footnote{Note that similar results can be obtained when $0<p<1$ but taking the assumption $0<p<Q(1)\approx0.16$ into account leads to a very simple format for bounds and will not cause any conflict because we are interested in small bit and block error probabilities. More specifically, typical values of $p$ would be around $10^{-3}-10^{-6}$.} Then the following hold:
	\begin{enumerate}[i)]
		\item[(i)] $\sqrt{2\pi}\,p\, Q^{-1}(p)<e^{-\frac{Q^{-1}(p)^2}{2}}<\sqrt{2\pi}\,p\, \left(Q^{-1}(p)+1\right).$
		\item[(ii)] $\sqrt{8\pi+2-2\ln p}-2\sqrt{2\pi}<Q^{-1}(p) <\sqrt{-\ln2\pi-2\ln p}.$
	\end{enumerate}
\end{lemma}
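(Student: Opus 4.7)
\textbf{Proof Proposal for Lemma~\ref{Q}.} The plan is to derive both parts from standard Mills-ratio-type bounds on the Q-function, then translate to bounds on $Q^{-1}$ by monotonicity.

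First, I would establish the following two-sided Mills bound for $x > 1$:
\begin{equation*}
\frac{e^{-x^2/2}}{\sqrt{2\pi}\,(x+1)} \;<\; Q(x) \;<\; \frac{e^{-x^2/2}}{\sqrt{2\pi}\, x}.
\end{equation*}
The upper bound comes from integration by parts: $\int_x^\infty e^{-t^2/2}\,dt = \int_x^\infty \tfrac{1}{t}\cdot t e^{-t^2/2}\,dt = \tfrac{e^{-x^2/2}}{x} - \int_x^\infty \tfrac{e^{-t^2/2}}{t^2}\,dt < \tfrac{e^{-x^2/2}}{x}$. The lower bound follows from the classical sharper estimate $Q(x) > \tfrac{x}{1+x^2}\cdot \tfrac{e^{-x^2/2}}{\sqrt{2\pi}}$ (again via integration by parts on $\int_x^\infty (1+\tfrac{1}{t^2})e^{-t^2/2}\,dt$), combined with the elementary inequality $\tfrac{x}{1+x^2} > \tfrac{1}{x+1}$ which holds exactly when $x > 1$.

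Second, part (i) is immediate. Substituting $p = Q(x)$ with $x = Q^{-1}(p)$ (so $x>1$ since $p < Q(1)$) into the Mills bound above and rearranging yields
\begin{equation*}
\sqrt{2\pi}\,p\, Q^{-1}(p) \;<\; e^{-Q^{-1}(p)^2/2} \;<\; \sqrt{2\pi}\,p\,\bigl(Q^{-1}(p)+1\bigr).
\end{equation*}

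Third, I would obtain part (ii) by taking logarithms of part (i) and using crude but sufficient bookkeeping. Writing $x = Q^{-1}(p)$: from the left inequality of (i), $\tfrac{1}{2}\ln(2\pi) + \ln p + \ln x < -x^2/2$, hence $x^2 < -\ln(2\pi) - 2\ln p - 2\ln x < -\ln(2\pi) - 2\ln p$ (since $x>1$), which gives the upper bound on $Q^{-1}(p)$. For the lower bound, from the right inequality of (i), $-x^2/2 < \tfrac{1}{2}\ln(2\pi) + \ln p + \ln(x+1)$, i.e.\ $x^2 > -\ln(2\pi) - 2\ln p - 2\ln(x+1)$. I then just need $(x + 2\sqrt{2\pi})^2 > 8\pi + 2 - 2\ln p$, i.e.\ $x^2 + 4\sqrt{2\pi}\,x > 2 - 2\ln p$; substituting the lower bound on $x^2$ reduces this to verifying $4\sqrt{2\pi}\,x - 2\ln(x+1) - \ln(2\pi) > 2$, a single-variable inequality which is easily checked at $x=1$ (the left side is $\approx 6.8$) and whose derivative $4\sqrt{2\pi} - 2/(x+1)$ is positive throughout $x>1$, so it is increasing.

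The only mildly delicate point is keeping track of which Mills bound is tight enough to give the claimed form $\sqrt{8\pi+2-2\ln p}-2\sqrt{2\pi}$ as opposed to some weaker variant; this is what forces the use of the $(x+1)^{-1}$ lower bound rather than the more standard $x/(1+x^2)$, and is precisely why the hypothesis $p < Q(1)$ (equivalently $x>1$) appears. Everything else is routine calculus.
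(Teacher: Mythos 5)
Your proposal is correct and follows essentially the same route as the paper: the two-sided Mills-type bound $\phi(x)/(x+1) < Q(x) < \phi(x)/x$ for $x>1$, substitution $x = Q^{-1}(p)$ to get (i), and logarithms plus completing the square for (ii). The only differences are minor bookkeeping inside part (ii) (you keep the $\ln(x+1)$ term and check a monotone one-variable inequality, whereas the paper weakens $1/(x+1)$ to $1/(2x)$ and uses $\ln t \le t-1$), which does not change the substance of the argument.
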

\begin{proof}
	\begin{enumerate}[i)]
		\item For $x>0$, it is well known that 
		\begin{equation}
		\label{jj}
		\frac{\phi(x)}{x+\frac{1}{x}}=\left(\frac{x}{1+x^2}\right)\phi(x)<Q(x)<\frac{\phi(x)}{x},
		\end{equation}
		where $\phi(x)$ is the probability distribution function of the Gaussian distribution, i.e., $\phi(x):= e^{-x^2/2}/\sqrt{2\pi}$.  Note that for $x>1$, \eqref{jj} becomes
		\begin{equation}\label{Qj}
		\frac{\phi(x)}{x+1}<Q(x)<\frac{\phi(x)}{x}.
		\end{equation}
		Now, define $p=Q(x)$. Thus, $x=Q^{-1}(p)>1$. Therefore,
		\begin{equation}
		\frac{e^{-\frac{Q^{-1}(p)^2}{2}}}{\sqrt{2\pi}\,\left(Q^{-1}(p)+1\right)}<p<\frac{e^{-\frac{Q^{-1}(p)^2}{2}}}{\sqrt{2\pi}\,Q^{-1}(p)}.
		\end{equation}
		Simplify to get
		\begin{equation}
		\sqrt{2\pi}\,p\, Q^{-1}(p)<e^{-\frac{Q^{-1}(p)^2}{2}}<\sqrt{2\pi}\,p\, \left(Q^{-1}(p)+1\right).
		\end{equation}
		\item From \eqref{Qj}, for $x>1$ we have 
		\begin{equation}
		    Q(x)<\frac{   e^{-\frac{x^2}{2}}   }{   \sqrt{2\pi} }.
		\end{equation}
		Therefore,
		\begin{equation}
		    x<\sqrt{    -2\ln  \left( \sqrt{2\pi}Q(x)   \right) }.
		\end{equation}
		Put $x=Q^{-1}(p)$ to get
		\begin{equation}
		    Q^{-1}(p) < \sqrt{2\ln\frac{1}{\sqrt{2\pi}\,p}}=\sqrt{-\ln2\pi-2\ln p}.
		\end{equation}
		For the other side, note that from \eqref{Qj}, for $x>1$, we also have
		\begin{equation}
		    Q(x)>\frac{\phi(x)}{2x}=\frac{e^{-\frac{x^2}{2}}}{2x\sqrt{2\pi}}.
		\end{equation}
		Assume $p = Q(x)$ and take the $\ln (\cdot)$ from both sides to get 
		\begin{equation}
		  \frac{x^2}{2}+\ln (2\sqrt{2\pi}\,x)+\ln p >0.
		\end{equation}
		Now using the inequality $\ln (2\sqrt{2\pi}\,x)\le 2\sqrt{2\pi}\,x-1$ leads to 
		\begin{equation}
		    \frac{x^2}{2}+2\sqrt{2\pi}\,x+\ln p-1 >0.
		\end{equation}
		Hence, 
		\begin{equation}
		    Q^{-1}(p)=x>-2\sqrt{2\pi}+\sqrt{8\pi+2-2\ln p}.
		\end{equation}
	\end{enumerate}
\end{proof}

\begin{lemma}
	\label{comb}
	Suppose $0<\delta<1$ such that $n\delta$ is an integer. Then, for all $0\leq r\leq n$, we have
	\begin{equation}
	    \binom{n}{r}\delta^{r}(1-\delta)^{n-r}  \leq \frac{\theta}{\sqrt{n}}.
	\end{equation}
	where 
	\begin{equation}
	    \theta = \frac{e}{2\pi \sqrt{\delta(1-\delta)}}.
	\end{equation}
\end{lemma}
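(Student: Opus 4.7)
The plan is to reduce the inequality, which is claimed uniformly in $r$, to a single evaluation at the mode, and then control that mode using Stirling's approximation. First I would inspect the ratio of consecutive binomial terms,
\[
\frac{\binom{n}{r+1}\delta^{r+1}(1-\delta)^{n-r-1}}{\binom{n}{r}\delta^{r}(1-\delta)^{n-r}} = \frac{(n-r)\,\delta}{(r+1)(1-\delta)},
\]
and observe that this ratio exceeds $1$ precisely when $r < (n+1)\delta - 1$. Hence the pmf is unimodal, and since $n\delta$ is assumed to be an integer, the maximum is attained exactly at $r = n\delta$. Therefore it suffices to establish the bound at the single value $r = n\delta$.

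Next I would apply the standard Stirling envelopes $\sqrt{2\pi k}\,(k/e)^{k} \leq k! \leq e\sqrt{k}\,(k/e)^{k}$ — using the upper form on $n!$ in the numerator of the binomial coefficient and the lower form on $(n\delta)!$ and $(n(1-\delta))!$ in the denominator. After substitution, all of the $(\cdot/e)^{(\cdot)}$ factors cancel against $\delta^{n\delta}(1-\delta)^{n(1-\delta)}$ via the identity $n^{n} = n^{n\delta} \cdot n^{n(1-\delta)}$, and the remaining square-root factors consolidate to
\[
\frac{e\sqrt{n}}{\sqrt{2\pi n\delta}\cdot \sqrt{2\pi n(1-\delta)}} \;=\; \frac{e}{2\pi\sqrt{n\,\delta(1-\delta)}},
\]
which is exactly $\theta/\sqrt{n}$.

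The whole argument is elementary, so there is no real obstacle; the only point requiring a bit of care is selecting the direction of Stirling's inequality on each factorial (upper on $n!$, lower on the two others) so that the leading constant lands precisely at $e/(2\pi)$ rather than a slightly different value. A sharper constant is available by invoking Robbins' refinement $n! \leq \sqrt{2\pi n}(n/e)^{n} e^{1/(12n)}$, but the cruder bound $n! \leq e\sqrt{n}(n/e)^{n}$ is already tight enough to reproduce the stated $\theta$ exactly.
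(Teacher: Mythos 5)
Your proof is correct and follows essentially the same route as the paper: reduce to the mode $r=n\delta$ and apply the Stirling envelopes $\sqrt{2\pi k}\,(k/e)^k \leq k! \leq e\sqrt{k}\,(k/e)^k$ with the upper form on $n!$ and the lower form on $(n\delta)!$ and $(n(1-\delta))!$, yielding exactly $\theta/\sqrt{n}$. The only (harmless) difference is that you verify the location of the mode via the ratio of consecutive terms, whereas the paper simply cites the standard formula $\lfloor (n+1)\delta\rfloor = n\delta$ for the mode of the binomial distribution.
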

\begin{proof}
	For $0\leq r\leq n$ define
	\begin{equation}
	    A(r)=\binom{n}{r}\delta^{r}(1-\delta)^{n-r},
	\end{equation}
	and 
	\begin{equation}
	    r^*=\arg \max_{0 \le r \le n} A(r).
	\end{equation}
	First of all, note that from the Mode of Binomial distribution, we know that $r^*= \lfloor(n+1)\delta\rfloor=n\delta$. Also from Stirling formula, for any integer $n$, we have
	\begin{equation}
	    \sqrt{2\pi n}\left(\frac{n}{e}\right)^n\leq n! \leq e\sqrt{ n}\left(\frac{n}{e}\right)^n.
	\end{equation}
	Therefore,
	\begin{align}
	\nonumber A(r^*) &=\binom{n}{n\delta}\delta^{n\delta}(1-\delta)^{n-n\delta}= \frac{   n!\,\delta^{n\delta}(1-\delta)^{n-n\delta}    }{   n\delta!\,(n-n\delta)!   }\\
	\nonumber &\leq \frac {e\sqrt{ n}\left(\frac{n}{e}\right)^n  \delta^{n\delta}(1-\delta)^{n-n\delta}  }{   \sqrt{2\pi n\delta}\left(\frac{n\delta}{e}\right)^{n\delta}     \sqrt{2\pi n(1-\delta)}\left(\frac{n(1-\delta)}{e}\right)^{n(1-\delta)}   }\\
	&=\frac{e}{2\pi \sqrt{\delta(1-\delta)}}\cdot\frac{1}{\sqrt{n}}.
	\end{align}	
\end{proof}

\begin{definition}
	Consider the following binary hypothesis test:
	\begin{align}
	&H_0: X \sim P,\\
	&H_1: X \sim Q,
	\end{align}
	where $P$ and $Q$ are two probability distributions on the same space $\mathcal{X}$. Suppose  a continuous decision rule $\zeta$ as a mapping from observation space $\mathcal{X}$ to $[0,1]$ with $\zeta(x) \approx 0$ corresponding to reject $H_1$ and $\zeta(x) \approx 1$ corresponding to reject $H_0$.
	Now, define the smallest type-II error in this binary hypothesis test given that type-I error is not greater than $p_e$ as the following:
	\begin{equation}\label{beta def}
	\beta_{1-p_e}(P,Q)=\inf_{\zeta: \mathcal{X}\rightarrow[0,1]}\bigg\{\mathbb{E}_{X \sim Q}[1-\zeta(X)]:\mathbb{E}_{X\sim P}[\zeta(X)]\le p_e\bigg\}.
	\end{equation}
\end{definition}

\begin{lemma}
	\label{beta}
	Consider two discrete probability distributions $P$ and $Q$ on a measure space $\mathcal{X}$. Define the product distributions $P^n$ and $Q^n$ as
	\begin{equation}
	    P^n(\mathbf{x})=\prod_{i=1}^{n}P(x_i),\qquad Q^n(\mathbf{x})=\prod_{i=1}^{n}Q(x_i),
	\end{equation}
	where $\mathbf{x}=(x_1,\dots,x_n)\in \mathcal{X}^n$. Then, for $p_e \in (0,1)$ and any $\gamma$, we have
	\begin{equation}
	\log_2\beta_{1-p_e}(P^n,Q^n)\ge -nD+\sqrt{nV}Q^{-1}\left(p_e+\gamma+\frac{B}{\sqrt{n}}\right)+\log_2\gamma,
	\end{equation}
	where 
	\begin{align}
	D&=D(P\|Q),\\
	V&=\int_{\mathcal{X}}\left(\log_2\frac{dP(x)}{dQ(x)}-D\right)^2dP(x),\\
	T&=\int_{\mathcal{X}}\abs{\log_2\frac{dP(x)}{dQ(x)}-D}^3dP(x),\\
	B&=\frac{T}{V^{\frac{3}{2}}}.
	\end{align}
\end{lemma}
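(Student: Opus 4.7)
The plan is to combine a single-shot hypothesis-testing converse (via change of measure) with the Berry-Esseen theorem applied to the i.i.d.\ sum of log-likelihood ratios. Let $L(x) := \log_2 \frac{dP(x)}{dQ(x)}$ denote the per-letter log-likelihood ratio. Under $P$, the random variable $L(X)$ has mean $D$, variance $V$, and third absolute central moment $T$ (by the very definitions given in the lemma). Consequently, under $P^n$ the sum
\[
L^n(\mathbf{x}) := \sum_{i=1}^n L(x_i) = \log_2 \frac{dP^n(\mathbf{x})}{dQ^n(\mathbf{x})}
\]
has mean $nD$, variance $nV$, and total third absolute central moment $nT$.

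First I would establish the standard single-shot bound: for every $\tau \in \mathbb{R}$ and every test $\zeta$ with $\mathbf{E}_P[\zeta] \le p_e$,
\begin{align*}
\mathbf{E}_Q[1-\zeta]
&\ge \int_{\{L \le \tau\}}(1-\zeta(x))\, dQ(x)
= \int_{\{L \le \tau\}}(1-\zeta(x))\, 2^{-L(x)}\, dP(x) \\
&\ge 2^{-\tau}\int_{\{L \le \tau\}}(1-\zeta(x))\, dP(x)
\ge 2^{-\tau}\bigl(\text{Pr}_P\{L \le \tau\} - p_e\bigr).
\end{align*}
Taking the infimum over admissible $\zeta$, and repeating the identical argument with $(P^n,Q^n,L^n)$ in place of $(P,Q,L)$, yields
\[
\beta_{1-p_e}(P^n, Q^n) \;\ge\; 2^{-\tau}\,\bigl(\text{Pr}_{P^n}\{L^n \le \tau\} - p_e\bigr) \qquad \text{for all } \tau.
\]

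Next, I would apply the Berry-Esseen theorem (Theorem~\ref{BE}) to the standardized sum $\frac{L^n-nD}{\sqrt{nV}}$ under $P^n$ to obtain
\[
\text{Pr}_{P^n}\!\left\{\frac{L^n - nD}{\sqrt{nV}} \le z\right\} \;\ge\; 1 - Q(z) - \frac{B}{\sqrt{n}},
\]
where the Berry-Esseen constant has been absorbed into $B = T/V^{3/2}$. I would then choose $\tau$ so that this lower bound equals exactly $p_e + \gamma$, namely $z = Q^{-1}\bigl(1 - p_e - \gamma - B/\sqrt{n}\bigr) = -Q^{-1}\bigl(p_e + \gamma + B/\sqrt{n}\bigr)$, which gives
\[
\tau = nD - \sqrt{nV}\,Q^{-1}\!\left(p_e + \gamma + \frac{B}{\sqrt{n}}\right).
\]
Substituting this $\tau$ into the single-shot bound produces $\beta_{1-p_e}(P^n,Q^n) \ge 2^{-\tau}\gamma$, and taking $\log_2$ of both sides yields the claimed inequality.

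The main technical care is in harmonizing the universal Berry-Esseen constant (the paper's Theorem~\ref{BE} carries an explicit $3$) with the clean definition $B=T/V^{3/2}$ used in the lemma statement; the cleanest route is to absorb the constant into $B$, as is standard in the Polyanskiy-style meta-converse literature. One should also verify that the chosen $\tau$ keeps the argument of $Q^{-1}$ in $(0,1)$, otherwise $Q^{-1}$ is undefined; outside that regime the stated bound is either vacuous or trivially true (recall $\beta \ge 0$ always), so no loss of generality arises. Everything else is routine algebra and does not require a separate argument.
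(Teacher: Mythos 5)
Your strategy is sound and is in fact the standard proof of this bound: the paper's own ``proof'' is a one-line specialization of \cite[Lemma~14]{polyphd} (setting $\alpha=1-p_e$, $\Delta=\gamma\sqrt{n}$), and what you have written is essentially a self-contained reconstruction of the proof of that cited lemma. The change-of-measure single-shot step is correct (one nit: the equality $\int_{\{L\le\tau\}}(1-\zeta)\,dQ=\int_{\{L\le\tau\}}(1-\zeta)2^{-L}\,dP$ should be ``$\ge$'' on points where $Q>0=P$, but this only helps the direction you need), the tensorization of $D$, $V$, $T$ is right, and the choice of threshold $\tau=nD-\sqrt{nV}\,Q^{-1}(p_e+\gamma+B/\sqrt{n})$ followed by $\beta_{1-p_e}(P^n,Q^n)\ge 2^{-\tau}\gamma$ gives exactly the claimed form. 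Your handling of the degenerate case where the argument of $Q^{-1}$ leaves $(0,1)$ is also fine.

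The one genuine gap is the constant in the Berry--Esseen step, and your proposed fix does not close it. The paper's Theorem~\ref{BE} carries the constant $3$, so your argument as written yields the inequality with $3B/\sqrt{n}$ inside $Q^{-1}$, which (since $Q^{-1}$ is decreasing) is a strictly weaker lower bound than the one asserted with $B=T/V^{3/2}$. ``Absorbing the constant into $B$'' is not available here, because the lemma statement pins $B$ to $T/V^{3/2}$; redefining $B$ means proving a different statement. To obtain the lemma exactly as stated you must invoke a Berry--Esseen bound for i.i.d.\ sums with absolute constant at most $1$ (e.g., Esseen's $0.7975$ or Shevtsova's $0.4748$), which is legitimate but is not supplied by Theorem~\ref{BE}. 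It is worth noting that the source result \cite[Lemma~14]{polyphd} builds the absolute constant into its definition of $B$ (there $B$ is $6T/V^{3/2}$), so the paper's restatement is itself loose about this factor, and downstream the discrepancy is harmless because such terms are swallowed by the $\mathcal{O}(\cdot)$ error terms; still, for the statement as written you should either cite a sharp-constant Berry--Esseen theorem or restate the lemma with the constant included in $B$.
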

\begin{proof}
	In the proof of \cite[Lemma~14]{polyphd}, put $\alpha=1-p_e, \Delta=\gamma\sqrt{n},P_i=P, Q_i=Q$ and consider the logarithm in base $2$.
\end{proof}

\subsection{Proof of Theorem~\ref{bounds for BSC}} \label{BSC proofs1}
\begin{proof}[{\bf Achievability Bound of Theorem~\ref{bounds for BSC}}]
	Considering $S_n^r:=\sum_{s=0}^r\binom{n}{s}2^{-n}$, define $T$  and $S$ as follows:
	\begin{align}
	T &:= n\delta+\sqrt{n\delta(1-\delta)}Q^{-1}\bigg(p_e-\frac{p_e}{\sqrt{\kappa}}-\frac{c_1}{\sqrt{n}}\bigg) ,    \label{a0}\\
	S&:=  \sum_{r\,\le\, T}\binom{n}{r}\delta^r(1-\delta)^{n-r}S_n^r \label{S},
	\end{align}
	where $c_1=\frac{3(2\delta^2-2\delta+1)}{\sqrt{\delta(1-\delta)}}$ is a constant. Suppose that we choose some $M$ that satisfies
	\begin{equation}
	\label{a1}
	M \le \frac{p_e}{\sqrt{\kappa}\,S}.
	\end{equation}
	Now define
	\begin{align}
	I_1 &= \sum_{r\,\le\, T}\binom{n}{r}\delta^r(1-\delta)^{n-r}MS_n^r,\\
	I_2&=\sum_{T<\,r\leq\, n}\binom{n}{r}\delta^r(1-\delta)^{n-r}.
	\end{align}
	From \eqref{S} and \eqref{a1}, we have
	\begin{equation}
	\label{I1}
	I_1=MS\leq \frac{p_e}{\sqrt{\kappa}}.
	\end{equation}
	For $1\leq i\leq n$, define $V_i$ to be a Bernoulli random variable with $\text{Pr}\{V_i=1\}=1-\text{Pr}\{V_i=0\}=\delta$. Then define 
	\begin{equation}
	\mu=\mathbb{E}\left[V_i\right]=\delta,\quad \sigma^2 = \mathbb{V}ar[V_i]=\delta(1-\delta),\quad \rho=\mathbb{E}\left[\abs{V_i-\mu}^3\right]=\delta(1-\delta)\left(2\delta^2-2\delta+1\right).
	\end{equation}
	Now, using Theorem~\ref{BE}, we can write
	\begin{equation}
	\label{a3}
	I_2=\text{Pr}\left\{\sum_{i=1}^nV_i>T\right\} \le Q\bigg(       \frac{T-n\mu}{   \sigma\sqrt{n} }   \bigg)+\frac{3\rho}{\sigma^3\,\sqrt{n}} = Q\bigg(\frac{T-n\delta}{\sqrt{n\delta(1-\delta)}}\bigg)+\frac{c_1}{\sqrt{n}} \overset{(\ref{a0})}{=} p_e-\frac{p_e}{\sqrt{\kappa}}.
	\end{equation} 
	Therefore, \eqref{I1} and \eqref{a3} together give
	\begin{align}
	I_1+I_2 \leq p_e-\frac{p_e}{\sqrt{\kappa}}+\frac{p_e}{\sqrt{\kappa}}=p_e.
	\end{align}
	Now, we can write
	\begin{align}
	\sum_{r=0}^n\binom{n}{r}\delta^{r}(1-\delta)^{n-r}\min\bigg\{1,(M-1) S_n^r\bigg\}\
	\nonumber&\le \sum_{r=0}^n\binom{n}{r}\delta^{r}(1-\delta)^{n-r}\min\bigg\{1,M S_n^r\bigg\}\\
	&\le I_1+I_2
	\le p_e. 
	\end{align}
	Thus, $M$ dissatisfies the inequality in \eqref{BSC ach.} and as a result, $M^*(n,p_e)\ge M$. Note that $M$ was arbitrarily chosen to satisfy \eqref{a1}, This means for any $M$ satisfying \eqref{a1}, we have $M^*(n,p_e)\ge M$. Hence,
	\begin{equation}
	\label{lb}
	M^*(n,p_e)\ge\sup\set{M: M \le \frac{p_e}{\sqrt{\kappa}\,S}}=\frac{p_e}{\sqrt{\kappa}\,S}.
	\end{equation}
	Now, in order to find a lower bound for $M^*(n,p_e)$, it suffices to find an upper bound for $S$. This is our main goal for the rest of the proof. Note that due to Lemma \ref{comb}, there exists a constant $\theta$ such that 
	\begin{equation}\label{theta}
	\binom{n}{r}\delta^r(1-\delta)^{n-r}\leq \frac{\theta}{\sqrt{n}}.
	\end{equation}
	Moreover, note that $S_n^r$ is increasing with respect to $r$. Define 
	\begin{equation}
	\beta=\frac 14 \sqrt{\frac{\ln 2}{2}}\cdot \frac{\log_2 \kappa}{\sqrt{\kappa}}.
	\end{equation}
	With this choice of $\beta$, we continue as follows:
	\begin{align}
	\nonumber S &= \sum_{r\,\leq \,T - \beta \sqrt{n} } \binom{n}{r}\delta^{r}(1-\delta)^{n-r}S_{n}^r +\sum_{T - \beta \sqrt{n} \,<\, r \,\leq \,T} \binom{n}{r}\delta^{r}(1-\delta)^{n-r}S_{n}^r   \\
	\nonumber &\leq S_{n}^{T - \beta \sqrt{n} }\sum_{r\,\leq \,T - \beta \sqrt{n} } \binom{n}{r}\delta^{r}(1-\delta)^{n-r} +S_{n}^T\sum_{T - \beta \sqrt{n} \,<\, r \,\leq \,T} \binom{n}{r}\delta^{r}(1-\delta)^{n-r}\\
	&\overset{(\ref{theta})}{\leq} S_{n}^{T - \beta \sqrt{n} }+S_{n}^T\beta\theta. \label{a2}
	\end{align} 
	In order to find an upper bound for $S$, it then suffices to find an upper bound for the right-hand side in \eqref{a2}. For $1 \le i \le n$, let $Y_i \sim$ Bernoulli($\frac{1}{2}$) and $X_i = \frac{1}{2}-Y_i$. Note that $\mathbb{E}[X_i]=0$, $\mathbb{V}ar[X_i]=\frac{1}{4}$ and $\abs{X_i} \le \frac{1}{2}$. Then Theorem~\ref{Tal} implies
	\begin{align}
	S_n^r &= \Pr\left\{\sum_{i=1}^nY_i \le r\right\}=\Pr\left\{\sum_{i=1}^nX_i \ge \frac{n}{2}-r\right\}\le \bigg(\frac{1}{2\sqrt{2\pi}}\cdot\frac{\sqrt{n}}{\frac{n}{2}-r}+\frac{\gamma}{\sqrt{n}}\bigg)e^{-nH(\frac{1}{4},\frac{1}{2},\frac{1}{2}-\frac{r}{n})}.
	\end{align}
	A simple calculation shows that $H(\frac{1}{4},\frac{1}{2},\frac{1}{2}-\frac{r}{n})=\left(1-h_2\left(\frac{r}{n}\right)\right) \ln2$. Thus,
	\begin{equation}
	\label{snt}
	S_n^r \le \bigg(\frac{1}{2\sqrt{2\pi}}\cdot\frac{1}{\frac{\sqrt{n}}{2}-\frac{r}{\sqrt{n}}}+\frac{\gamma}{\sqrt{n}}\bigg)2^{-n\left(1-h_2\left(\frac{r}{n}\right)\right)}.
	\end{equation}
	Due to part (i) of Lemma~\ref{est} and part (ii) of Lemma~\ref{Q}, the first term in the right-hand side of \eqref{snt}, when $r=T$, can be written as
	\begin{align}
	\nonumber  \left(   \frac{\sqrt{n}}{2}-\frac{T}{\sqrt{n}}    \right)^{-1}  &=  \left(   \sqrt{n}\left(\frac{1}{2}-\delta\right)-\sqrt{\delta(1-\delta)}Q^{-1}\left(p_e-\frac{p_e}{\sqrt{\kappa}}-\frac{c_1}{\sqrt{n}}\right)  \right)^{-1}\\ 
	\nonumber  &=  \left(   \sqrt{\frac{\ln 2}{2}\,\kappa}+\mathcal{O}\left(\frac{\kappa\sqrt{\kappa}}{n}\right)-\sqrt{\delta(1-\delta)}Q^{-1}\left(p_e-\frac{p_e}{\sqrt{\kappa}}-\frac{c_1}{\sqrt{n}}\right)   \right)^{-1} \\ 
	\nonumber &=\left(  \sqrt{\frac{\ln 2}{2}\,\kappa}\right)^{-1}+\mathcal{O}\left(\frac{\sqrt{\kappa}}{n}\right)+\mathcal{O}\left(\frac{\sqrt{-\log p_e}}{\kappa}\right)  \\
	\label{a4} &= \left(  \sqrt{\frac{\ln 2}{2}\,\kappa}\right)^{-1}+\mathcal{O}\left(\frac{\sqrt{-\log p_e}}{\kappa}\right). 
	\end{align}
	Note that to use Lemma~\ref{est}, we need to have $\kappa\sqrt{\kappa}=o(n)$. This is given due to the criterion of the low-capacity regime for Theorem~\ref{bounds for BSC} which states that $\kappa=f(n)=(o(n))^{2/3}$. This condition is also used for further simplification of order terms. Similarly, when $r=T-\beta\,\sqrt{n}$, we have
	\begin{align}
	\label{a5}
	\left(   \frac{\sqrt{n}}{2}-\frac{T-\beta\sqrt{n}}{\sqrt{n}}    \right)^{-1}  = \left(   \frac{\sqrt{n}}{2}-\frac{T}{\sqrt{n}} +\beta    \right)^{-1} = \left(  \sqrt{\frac{\ln 2}{2}\,\kappa}\right)^{-1}+\mathcal{O}\left(\frac{\sqrt{-\log p_e}}{\kappa}\right).  
	\end{align}
	Now, the goal is to estimate the term $n\left(1-h_2(\frac rn)\right)$ in the right-hand side of \eqref{snt} for $r=T$ and $r=T-\beta\sqrt{n}$. Using the third order estimation of $h_2(\cdot)$ gives
	\begin{align} 
	\nonumber h_2\left(\frac{T}{n}\right)= h_2(\delta)+\left(\log_2\frac{1-\delta}{\delta}\right)\sqrt{\frac{\delta(1-\delta)}{n}}Q^{-1}&\left(p_e-\frac{p_e}{\sqrt{\kappa}}-\frac{c_1}{\sqrt{n}}\right)\\
 &-\frac{1}{n\ln 2}Q^{-1}\left(p_e\right)^2+\mathcal{O}\left(\frac{1}{n\sqrt{\kappa}}\right).
	\end{align}
	Therefore, by the definition of $\kappa$ and part (ii) of Lemma~\ref{est} we obtain
	\begin{align}\label{ey3}
	\nonumber n\left(1-h_2\left(\frac{T}{n}\right)\right)= \kappa-2\sqrt{\frac{2\delta(1-\delta)}{\ln 2}}\cdot \sqrt{\kappa}\,Q^{-1}&\left(p_e-\frac{p_e}{\sqrt{\kappa}}-\frac{c_1}{\sqrt{n}}\right)\\
 &+\frac{1}{\ln 2}Q^{-1}\left(p_e\right)^2+\mathcal{O}\left(\frac{1}{\sqrt{\kappa}}\right).
	\end{align}
	Note that 
	\begin{equation}\label{ey}
	    \frac{d}{dx}Q^{-1}(x)=-\sqrt{2\pi}\,e^{\frac{Q^{-1}(x)^2}{2}}.
	\end{equation}
	Thus, by replacing the Taylor expansion of $Q^{-1}(\cdot)$ in \eqref{ey3} and using the parts (i) and (ii) of Lemma \ref{Q} together with \eqref{ey}, we arrive at 
	\begin{align}\label{h}
	n\left(1-h_2\left(\frac{T}{n}\right)\right)= \kappa-2\sqrt{\frac{2\delta(1-\delta)}{\ln 2}}\cdot \sqrt{\kappa}\,Q^{-1}\left(p_e\right)+\mathcal{E}_1,
	\end{align}
	where
	\begin{equation}\label{error term}
	\mathcal{E}_1=\frac{1}{\ln 2}Q^{-1}\left(p_e\right)^2+\mathcal{O}\left(\frac{1}{\sqrt{-\log p_e}}\right)+\mathcal{O}\left(\frac{1}{\sqrt{\kappa}}\right)=\frac{1}{\ln 2}Q^{-1}\left(p_e\right)^2+\mathcal{O}\left(\frac{1}{\sqrt{-\log p_e}}\right).
	\end{equation}
	Exploiting the same analogy leads to the following result for $r=T-\beta\sqrt{n}$ :
	\begin{align}\label{h2}
	\nonumber n\left(1-h_2\left(\frac{T-\beta\sqrt{n}}{n}\right)\right)&= \kappa-2\sqrt{\frac{2\delta(1-\delta)}{\ln 2}}\cdot \sqrt{\kappa}\,Q^{-1}\left(p_e\right)+2\sqrt{\frac{2}{\ln 2}}\cdot\beta \sqrt{\kappa}+\mathcal{E}_1\\
	&= \kappa-2\sqrt{\frac{2\delta(1-\delta)}{\ln 2}}\cdot \sqrt{\kappa}\,Q^{-1}\left(p_e\right)+\frac 12\log_2 \kappa+\mathcal{E}_1.
	\end{align}
	Now, \eqref{snt}, \eqref{a4}, and \eqref{h} together imply
	\begin{align}\label{f1}
	S_n^T \leq \frac{1}{2\sqrt{\pi\kappa\ln 2}}2^{-\left(\kappa-2\sqrt{\frac{2\delta(1-\delta)}{\ln 2}}\cdot \sqrt{\kappa}\,Q^{-1}\left(p_e\right)+\mathcal{E}_1\right)}+\mathcal{O}\left(\frac{\sqrt{-\log p_e}}{\kappa\, 2^{\kappa}}\right).  
	\end{align}
	Similarly, \eqref{snt}, \eqref{a5}, and \eqref{h2} together imply 
	\begin{align}\label{f2}
	S_n^{  T-\beta\sqrt{n} } \leq \frac{1}{2\sqrt{\pi\kappa\ln 2}}2^{-\left(\kappa-2\sqrt{\frac{2\delta(1-\delta)}{\ln 2}}\cdot \sqrt{\kappa}\,Q^{-1}\left(p_e\right)+\frac 12\log_2 \kappa+\mathcal{E}_1\right)}+\mathcal{O}\left(\frac{\sqrt{-\log p_e}}{\kappa\, 2^{\kappa}}\right).  
	\end{align}
	As a result, from \eqref{f1}, \eqref{f2}, and \eqref{a2}, we have
	\begin{align}
	\nonumber S&\leq S_{n}^{T - \beta \sqrt{n} }+S_{n}^T\beta\theta \\
	&\leq \frac{1}{2\sqrt{\pi\ln 2}}\cdot \frac{1}{\kappa}\,\left(1+\theta\log_2\kappa\right)2^{-\left(\kappa-2\sqrt{\frac{2\delta(1-\delta)}{\ln 2}}\cdot \sqrt{\kappa}\,Q^{-1}\left(p_e\right)+\mathcal{E}_1\right)}+\mathcal{O}\left(\frac{\sqrt{-\log p_e}}{\kappa\, 2^{\kappa}}\right).
	\end{align}
	Now, taking the logarithm of both sides gives
	\begin{align}
	\nonumber -\log_2 S \geq \kappa-2\sqrt{\frac{2\kappa\delta(1-\delta)}{\ln 2}}Q^{-1}\left(p_e\right)&+\mathcal{E}_1+\log_2 \kappa\\
 &-\log_2 \frac{1+\theta\log_2\kappa}{2\sqrt{\pi\ln 2}}+\mathcal{O}\left(\frac{\sqrt{-\log p_e}}{\log \kappa}\right).
	\end{align}
	Therefore, by replacing $\mathcal{E}_1$ from \eqref{error term}, we find 
	\begin{align}
	\nonumber -\log_2 S \geq \kappa &-2\sqrt{\frac{2\delta(1-\delta)}{\ln 2}}\cdot \sqrt{\kappa}\,Q^{-1}\left(p_e\right)+\log_2 \kappa+\frac{1}{\ln 2}Q^{-1}\left(p_e\right)^2\\
	&+\mathcal{O}\left(\log\log\kappa\right)+\mathcal{O}\left(\frac{1}{\sqrt{-\log p_e}}\right)+\mathcal{O}\left(\frac{\sqrt{-\log p_e}}{\log \kappa}\right).
	\end{align}
	Further simplifications and taking into account the dominant terms then result in
	\begin{align}\label{s}
	-\log_2 S \geq \kappa-2\sqrt{\frac{2\delta(1-\delta)}{\ln 2}}\cdot \sqrt{\kappa}\,Q^{-1}\left(p_e\right)+\log_2 \kappa+\frac{1}{\ln 2}Q^{-1}\left(p_e\right)^2+\mathcal{O}\left(\log\log\kappa\right).
	\end{align}
	Finally, from \eqref{lb} and \eqref{s}, we conclude that
	\begin{align}
	\nonumber &\log_2M^*(n,p_e)\ge\log_2p_e-\frac 12\log_2\kappa -\log_2S\\
	& \geq \kappa-2\sqrt{\frac{2\kappa\delta(1-\delta)}{\ln 2}}\,Q^{-1}\left(p_e\right)+\frac 12\log_2 \kappa+\log_2p_e+\frac{1}{\ln 2}Q^{-1}\left(p_e\right)^2+\mathcal{O}\left(\log\log\kappa\right).
	\end{align}
	Note that by part (ii) of Lemma~\ref{Q}, we have
	\begin{equation}
	    \frac{1}{\ln 2}Q^{-1}\left(p_e\right)^2=-2\log_2p_e+\mathcal{O}(1).
	\end{equation}
	Hence,
	\begin{align}
	\log_2M^*(n,p_e)
	& \geq \kappa-2\sqrt{\frac{2\kappa\delta(1-\delta)}{\ln 2}}\,Q^{-1}\left(p_e\right)+\frac 12\log_2 \kappa-\log_2p_e+\mathcal{O}\left(\log\log\kappa\right).
	\end{align}
\end{proof}    

\begin{proof}[{\bf Converse Bound of Theorem~\ref{bounds for BSC}}]
	Let $X \sim $ Bernoulli(${1}/{2}$) and $Y$ be the input and output of the BSC($\delta$), respectively. Also suppose $P_X$, $P_Y$, and $P_{XY}$ are distributions of $X$, $Y$, and the joint distribution of $(X,Y)$, respectively. Define $P=P_{XY}$ and $Q=P_XP_Y$, and then define $P^n$ and $Q^n$ in terms of $P$ and $Q$ as they are in Lemma~\ref{beta}. Also consider $\beta_{1-p_e}(P^n,Q^n)$ as in \eqref{beta def}. Under these choices of $P$ and $Q$, it can be verified that $\beta_{1-p_e}^n$ defined in \eqref{converse}, is a piecewise linear approximation of $\beta_{1-p_e}(P^n,Q^n)$ based on discrete values of error probabilities. Therefore, from \eqref{converse}, we can write
	\begin{align}
	\log_2M^*(n,p_e) \le -\log_2\beta_{1-p_e}(P^n,Q^n).
	\end{align}
	Now, using Lemma~\ref{beta} for any $\gamma>0$, we have
	\begin{align}
	\label{asbound}
	\log_2M^*(n,p_e) \le nD-\sqrt{nV}Q^{-1}\left(p_e+\gamma+\frac{B}{\sqrt{n}}\right)-\log_2\gamma.
	\end{align}
	Under the specific values of $P$ and $Q$ given above, the quantities $D$,$V$,$T$, and $B$ in Lemma~\ref{beta} can be computed as follows:
	\begin{align}
	D&=1-h_2(\delta),\\
	V&=\delta(1-\delta)\left(\log_2\frac{1-\delta}{\delta}\right)^2,\\
	T&=\delta(1-\delta)\left(\log_2\frac{1-\delta}{\delta}\right)^3\left(2\delta^2-2\delta+1\right),\\
	B&=\frac{2\delta^2-2\delta+1}{\sqrt{\delta(1-\delta)}}.
	\end{align}
	Replacing these quantities and using Lemma~\ref{est}, we can rewrite \eqref{asbound} as the following:
	\begin{align}\label{eyy}
	\log_2M^*(n,p_e) \le \kappa-2\sqrt{\frac{2\kappa\delta(1-\delta)}{\ln 2}}Q^{-1}\left(p_e+\gamma+\frac{B}{\sqrt{n}}\right)-\log_2\gamma+\mathcal{O}\left(\frac{\kappa\sqrt{-\kappa\log_2p_e}}{n}\right).
	\end{align}
	Note that to use Lemma~\ref{est}, we need to have $\kappa\sqrt{\kappa}=o(n)$. This is given due to the criterion for being in the low-capacity regime for Theorem~\ref{bounds for BSC} which states that $\kappa=f(n)=(o(n))^{2/3}$. This condition is also used for further simplification of order terms. 
	
	At this step, we choose $\gamma={p_e}/{\sqrt{\kappa}}$ and replace the Taylor expansion of $Q^{-1}(\cdot)$ in \eqref{eyy}. By using parts (i) and (ii) of Lemma~\ref{Q} together with \eqref{ey} in that Taylor expansion, we can conclude that 
	\begin{align}
	\nonumber \log_2M^*(n,p_e) \le \kappa &-2\sqrt{\frac{2\kappa\delta(1-\delta)}{\ln 2}}Q^{-1}\left(p_e\right)-\log_2\frac{p_e}{\sqrt{\kappa}}\\
 &+\mathcal{O}\left(\frac{\kappa\sqrt{-\kappa\log_2p_e}}{n}\right)+\mathcal{O}\left(\frac{1}{\sqrt{-\log p_e}}\right).
	\end{align}
	Further simplifications and considering the dominant terms then give
	\begin{align}
	\log_2M^*(n,p_e) &\le \kappa-2\sqrt{\frac{2\kappa\delta(1-\delta)}{\ln 2}}Q^{-1}\left(p_e\right)-\log_2\frac{p_e}{\sqrt{\kappa}}+\mathcal{O}\left(\frac{1}{\sqrt{-\log p_e}}\right).
	\end{align}
\end{proof}

\subsection{Proof of Corollary~\ref{optimal n for BSC}}\label{BSC proofs2}
\begin{proof}
	In order to obtain the optimal blocklength $n^*$ for transmission of $k$ information bits over a low-capacity BSC($\delta$), it suffices to replace $M^*(n^*,p_e)=k$. Then $n^*$ can be computed by solving (\ref{BSC b}). Replace $M^*(n^*,p_e)=k$ and $\kappa=n^*C$ in \eqref{BSC b}, where $C=1-h_2(\delta)$ to obtain
	\begin{align}
	k = n^*C-2\sqrt{\frac{2\delta(1-\delta)}{\ln 2}}\cdot\sqrt{n^*C}\,Q^{-1}\left(p_e\right)-\log_2 p_e+\mathcal{O}(\log \kappa).
	\end{align}
	Define $x=\sqrt{n^*C}$, $a=\sqrt{\frac{2\delta(1-\delta)}{\ln 2}}Q^{-1}\left(p_e\right)$ and $b=k+\log_2 p_e+\mathcal{O}(\log \kappa)$. Thus, $x^2-2ax-b=0.$
	Note that the answer will be $x= a+\sqrt{a^2+b}$. Therefore, 
	$\sqrt{n^*C}= a+\sqrt{a^2+b}$. 
	More simplifications are as follows:
	\begin{align}
	\nonumber n^*&= \frac1C \left(a+\sqrt{a^2+b}\right)^2\\
	\nonumber &=\frac 1C \left(      2a^2+b+   2a   \left(     \sqrt{k}+    \mathcal{O}\left(    \frac{-\log p_e}{k}  \right)    +    \mathcal{O}\left(   \frac{\log \kappa}{k}  \right)   \right)\right)\\
	\label{nn}&=\frac 1C \left(k+2a\sqrt{k}+2a^2+\log_2 p_e+\mathcal{O}(\log \kappa)\right).
	\end{align}
	Note that in the low-capacity regime, under optimal blocklength, $\log\kappa \approx \log k$. Therefore, by substituting the values of $C$ and $a$ in \eqref{nn}, we obtain
	\begin{equation}
	    n^* =\frac 1{1-h_2(\delta)} \left(k+2\sqrt{\frac{2\kappa\delta(1-\delta)}{\ln 2}}Q^{-1}( p_e )+\frac{4\delta(1-\delta)}{\ln 2}Q^{-1}(p_e)^2+\log_2 p_e+\mathcal{O}(\log k)\right).
	\end{equation}
\end{proof}

\section{Proofs for AWGN Channel}
In this section, we prove the converse and achievability bounds of Theorem~\ref{bounds for AWGN}. In the proofs, we will be using Theorems~\ref{AWGN conv.}--\ref{AWGN ach.} as well as Lemma~\ref{eta} which are stated at the beginning of this section.
For the results in coding theory, we generally refer to \cite{polyphd} as it has well collected and presented the corresponding proofs. See also \cite{Tan}, \cite{Wolfowitz}, \cite{Gal}, \cite{Poor}, and \cite{Elias}.

\begin{theorem}[General Converse Bound]\label{AWGN conv.}
	Consider $n$ independent uses of a channel with input alphabet $\mathcal{X}$ and output alphabet $\mathcal{Y}$. Let $P_{X}$, $P_{Y}$, and $P_{XY}$ be distributions on $\mathcal{X}$, $\mathcal{Y}$, and $\mathcal{X}\times \mathcal{Y}$, respectively, and suppose $P_{X}^n$, $P_{Y}^n$, and $P_{XY}^n$ are their product distributions over $n$ independent trials. Then the following holds:
	\begin{equation}
	M^*(n, p_e) \leq \sup_{P_{X}} \inf_{P_{Y}} \frac{1}{\beta_{1-p_e}\left(P_{XY}^n,P_{X}^nP_{Y}^n\right)}.
	\end{equation}
	where $\beta_{1-p_e}$ is defined as \eqref{beta def}.	
\end{theorem}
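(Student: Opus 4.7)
The plan is to establish this bound via the meta-converse argument of Polyanskiy--Poor--Verdú, which reduces any converse statement on the channel code to a statement about an auxiliary binary hypothesis test. Fix an arbitrary $(M,p_e)$-code over $n$ channel uses, with encoder $f_{enc}:\mathcal{M}\to\mathcal{X}^n$ and decoder $f_{dec}:\mathcal{Y}^n\to\mathcal{M}$, and write $D_m := f_{dec}^{-1}(m)\subseteq\mathcal{Y}^n$ for the (disjoint) decoding regions. For any single-letter $P_X$ on $\mathcal{X}$ and $P_Y$ on $\mathcal{Y}$, let $P_{XY}$ denote the joint induced by the channel from $P_X$, and consider the binary hypothesis test $H_0:(X^n,Y^n)\sim P_{XY}^n$ versus $H_1:(X^n,Y^n)\sim P_X^n P_Y^n$.

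The natural test to use is the randomized indicator
\[
\zeta(x^n,y^n)\ =\ \sum_{m\in\mathcal{M}} \mathbf{1}\!\bigl[x^n = f_{enc}(m)\bigr]\,\mathbf{1}\!\bigl[y^n\in D_m\bigr],
\]
which reproduces the decoder's ``accept $m$'' event when $x^n$ is the $m$th codeword. Next I would compute the two error probabilities separately. Under $H_0$, taking expectation over $P_{XY}^n$ together with the uniform distribution on codewords (which is the situation in the definition of $p_e$) gives $\mathbf{E}_{P_{XY}^n}[\zeta]\ge 1-p_e$ after averaging. Under $H_1$, because the $D_m$'s are disjoint and the codewords are distinct, a direct calculation yields $\mathbf{E}_{P_X^n P_Y^n}[\zeta]\le \tfrac{1}{M}$. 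By the definition \eqref{beta def} of $\beta_{1-p_e}$ this immediately gives $\beta_{1-p_e}(P_{XY}^n,P_X^nP_Y^n)\le \tfrac{1}{M}$, and hence $M\le 1/\beta_{1-p_e}(P_{XY}^n,P_X^nP_Y^n)$.

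The final step is to optimize: since the bound holds for every admissible $P_Y$, I would take the infimum over $P_Y$, and since the bound must hold for the worst-case code (whose empirical codeword distribution we cannot control), I would take the supremum over $P_X$, yielding the stated $\sup_{P_X}\inf_{P_Y}$ form. The main obstacle is that the argument above, applied literally, involves the code's induced (joint) distribution $P_{X^n}$ on codewords rather than a product $P_X^n$; passing to product input distributions on the outer supremum requires invoking the memoryless structure of the channel together with a convexity/symmetrization argument, which in applications such as AWGN (with its spherical symmetry and power constraint) is justified because the worst-case input distribution factorizes. Once this justification is in place, the averaging in Step~2 gives $1-p_e$ exactly, and the bound follows without further work.
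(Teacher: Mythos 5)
The paper itself offers no argument here---it simply cites \cite[Theorem~29]{polyphd}---and your proposal correctly identifies the mechanism behind that result (turn the decoder into a binary hypothesis test). However, as written your two error computations do not hold for the hypotheses you actually set up, and the step you defer at the end is precisely the substantive content of the theorem. Concretely: with $H_0:(X^n,Y^n)\sim P_{XY}^n$ for an \emph{arbitrary product} input $P_X^n$ and your test $\zeta(x^n,y^n)=\sum_m \mathbf{1}[x^n=f_{enc}(m)]\,\mathbf{1}[y^n\in D_m]$, one gets $\mathbf{E}_{P_{XY}^n}[\zeta]=\sum_m P_X^n\bigl(f_{enc}(m)\bigr)\,W^n\bigl(D_m\mid f_{enc}(m)\bigr)$, which is not $\ge 1-p_e$; for the AWGN channel (continuous $\mathcal{X}$) it is exactly $0$ whenever $P_X$ is continuous. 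Likewise $\mathbf{E}_{P_X^nP_Y^n}[\zeta]=\sum_m P_X^n\bigl(f_{enc}(m)\bigr)P_Y^n(D_m)$ is only bounded by $\max_m P_X^n(f_{enc}(m))$, not by $1/M$. Both of the inequalities you claim ($\ge 1-p_e$ under $H_0$ and $\le 1/M$ under $H_1$) are valid only when the input marginal is the \emph{code-induced} distribution, i.e.\ uniform over the codebook, in which case the disjointness of the $D_m$ gives $\frac 1M\sum_m Q_{Y^n}(D_m)\le \frac 1M$ and the $H_0$ expectation is exactly one minus the average error. You cannot ``average over the uniform distribution on codewords'' and simultaneously call the first measure $P_{XY}^n$; these are different distributions.

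What the corrected argument proves is $M\le \sup_{P_{X^n}}\inf_{Q_{Y^n}} 1/\beta_{1-p_e}(P_{X^nY^n},P_{X^n}Q_{Y^n})$ with the supremum over \emph{all} joint input distributions on $\mathcal{X}^n$ (subject to the cost constraint for AWGN), which is the minimax converse of \cite[Theorem~29]{polyphd}. Passing from that to the product-form statement quoted here---supremum over i.i.d.\ $P_X^n$, infimum over i.i.d.\ $P_Y^n$---is the step you label ``the main obstacle'' and then assume; it requires an actual argument, e.g.\ bounding $\beta_{1-p_e}(P_{X^nY^n},P_{X^n}Q_{Y^n})$ from below by $\inf_{x^n}\beta_{1-p_e}(P_{Y^n\mid X^n=x^n},Q_{Y^n})$ for a fixed product $Q_{Y^n}$ and then using the channel's symmetry (for AWGN, invariance on the power sphere) to show this conditional quantity is independent of $x^n$ and coincides with the product-input $\beta$. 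Since neither this reduction nor the handling of the power constraint is supplied, the proof as given has a genuine gap, even though the overall plan mirrors the cited meta-converse proof.
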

\begin{proof}
	See \cite[Theorem~29]{polyphd}.
\end{proof}

\begin{theorem}[Achievability Bound for AWGN]\label{AWGN ach.}
	There exists an $(M,p_e,\eta)$-code over AWGN$^n$($\eta$) such that
	\begin{equation}
	\log_2 M^*(n, p_e,\eta ) \ge nC -\sqrt{nV}\,Q^{-1}( p_e )+\mathcal{O}(1), 
	\end{equation}
	where
	\begin{align}
	C&= \frac 12\log_2\left(1+\eta\right),\\
	V&=\frac{     \eta(\eta+2) }{   2 (\eta+1)^2\ln^2 2  }.
	\end{align}
\end{theorem}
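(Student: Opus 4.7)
The plan is to adapt the $\kappa\beta$-based achievability argument of \cite[Theorem~73]{polyphd} for the AWGN channel, with care taken that all implicit constants remain $\mathcal{O}(1)$ uniformly as the SNR $\eta$ tends to zero. First, I would draw $M$ codewords independently and uniformly from the power sphere $\{x\in\mathbb{R}^n : \|x\|_2^2 = n\eta\}$ and fix the auxiliary output distribution $Q_{Y^n}=\mathcal{N}(0,(1+\eta) I_n)$. With a threshold decoder that declares message $m$ whenever the information density $i_n(x^n(m);y^n)=\log\frac{dP_{Y^n\mid X^n=x^n(m)}}{dQ_{Y^n}}$ exceeds a level $\log\gamma$, the $\kappa\beta$ bound yields $\log M^*(n,p_e,\eta)\geq \log\gamma - \log\kappa_{\tau,n}$ provided a tail bound of the form $\Pr\{i_n \leq \log\gamma\}\leq p_e-\tau-B/\sqrt{n}$ is met.

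Second, I would carry out the standard decomposition. Under the joint law determined by the sphere input and AWGN channel, $i_n(X^n;Y^n)$ can be written as a sum of $n$ i.i.d.\ functionals of the noise with mean equal to the stated $C$ and variance equal to the stated $V$; a routine Gaussian calculation gives a bounded third absolute moment $T$. Applying Berry--Esseen to this sum and choosing $\log \gamma = nC - \sqrt{nV}\,Q^{-1}(p_e - \tau - B/\sqrt{n})$ for a constant-order slack $\tau$ (say $\tau=p_e/2$) ensures the tail hypothesis of the $\kappa\beta$ bound. Taylor-expanding $Q^{-1}$ near $p_e$ and using parts (i)--(ii) of Lemma~\ref{Q}, both $\tau$ and $B/\sqrt{n}$ translate into a bounded additive rate loss, which together with $-\log \kappa_{\tau,n}$ delivers $\log M^*(n,p_e,\eta)\geq nC - \sqrt{nV}\,Q^{-1}(p_e) + \mathcal{O}(1)$.

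The main obstacle is the uniform bookkeeping that distinguishes the low-capacity statement from its moderate-capacity counterpart in \cite[Theorem~73]{polyphd}: one must verify that no hidden factor of $1/\eta$ or $1/C$ leaks into the third-order term. This reduces to checking three ingredients. The Berry--Esseen constant $B=T/V^{3/2}$ must stay finite as $\eta\to 0$; this holds because both $V/\eta$ and $T/\eta^{3/2}$ tend to nonzero finite limits, so $B$ is bounded on any compact range of $\eta\in(0,\eta_{\max}]$ and in fact remains bounded as $\eta\to 0$. The Taylor remainder in $Q^{-1}(p_e-\tau-B/\sqrt{n}) - Q^{-1}(p_e)$ is controlled by $e^{Q^{-1}(p_e)^2/2}\cdot(\tau + B/\sqrt{n})$, which is a constant (for fixed $p_e$) independent of $\eta$ and $n$. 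Finally, for the AWGN channel the normalizer $\kappa_{\tau,n}$ of the $\kappa\beta$ bound depends on $\eta$ only through $V$ and $1+\eta$, both well controlled in the low-capacity regime. Confirming these three items constitutes the ``tiny analysis refinement'' alluded to in the main text, and yields the claimed $\mathcal{O}(1)$ third-order term uniformly.
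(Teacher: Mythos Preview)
Your proposal is correct and follows the same approach as the paper, which simply cites \cite[Theorem~73]{polyphd} for this theorem and later, in the achievability part of Theorem~\ref{bounds for AWGN}, merely asserts that ``the proof of Theorem~\ref{AWGN ach.} is still valid in low-capacity regime.'' Your outline of the $\kappa\beta$ argument together with the explicit uniformity checks on the Berry--Esseen ratio $B=T/V^{3/2}$, the Taylor remainder for $Q^{-1}$, and the $\kappa_{\tau,n}$ term is exactly the ``tiny analysis refinement'' the main text alludes to but the appendix never spells out.
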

\begin{proof}
	See the achievability bound in \cite[Theorem~73]{polyphd}.
\end{proof}

\begin{lemma}\label{eta}
	Consider transmission over AWGN($\eta$) in low-capacity regime where by definition $\kappa=\frac n2\log_2 (1+\eta)$. Then we have
	\begin{equation}
	    n\eta= 2\,\kappa\ln 2+\mathcal{O}\left(\frac{\kappa^2}{n}\right).
	\end{equation}
\end{lemma}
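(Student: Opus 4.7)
The plan is to invert the defining relation $\kappa=\tfrac{n}{2}\log_2(1+\eta)$ to express $\eta$ as a function of $\kappa/n$, and then apply a Taylor expansion of the logarithm that is valid because the low-capacity regime forces $\eta$ to be small. Specifically, since $\kappa$ and $n$ satisfy $\kappa/n\to 0$, and $\kappa=\tfrac{n}{2}\log_2(1+\eta)$ implies $\log_2(1+\eta)=2\kappa/n$, we see that $\eta$ must itself be small (of order $\kappa/n$), so the Taylor series of $\ln(1+\eta)$ converges rapidly.

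First I would convert the defining equation to natural logarithms, writing
\[
\ln(1+\eta)=\frac{2\kappa\ln 2}{n}.
\]
Then I would expand the left-hand side as $\ln(1+\eta)=\eta-\tfrac{\eta^2}{2}+\mathcal{O}(\eta^3)$, obtaining
\[
\eta=\frac{2\kappa\ln 2}{n}+\frac{\eta^2}{2}+\mathcal{O}(\eta^3).
\]
A first (crude) pass shows $\eta=\mathcal{O}(\kappa/n)$, so that $\eta^2=\mathcal{O}(\kappa^2/n^2)$ and $\eta^3=\mathcal{O}(\kappa^3/n^3)$. Substituting this back and multiplying through by $n$ gives
\[
n\eta=2\kappa\ln 2+\mathcal{O}\!\left(\frac{\kappa^2}{n}\right)+\mathcal{O}\!\left(\frac{\kappa^3}{n^2}\right).
\]
Finally, since $\kappa/n=o(1)$ in the low-capacity regime, the $\mathcal{O}(\kappa^3/n^2)$ term is absorbed into $\mathcal{O}(\kappa^2/n)$, yielding the claim.

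There is no real obstacle here; the only point deserving care is the bootstrap that justifies replacing $\eta$ by $2\kappa\ln 2/n$ inside the higher-order terms, which is why I would first establish the crude bound $\eta=\mathcal{O}(\kappa/n)$ before plugging it back into $\eta^2/2$. The argument relies only on the smallness of $\kappa/n$ in the low-capacity regime, which is consistent with the general setup used throughout the AWGN analysis.
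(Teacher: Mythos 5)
Your proposal is correct and follows essentially the same route as the paper: both invert $\kappa=\tfrac n2\log_2(1+\eta)$ by Taylor-expanding $\ln(1+\eta)$ to second order and using that $\eta=\mathcal{O}(\kappa/n)$ in the low-capacity regime. The only cosmetic difference is that the paper solves the resulting quadratic $\eta^2-2\eta+4C\ln 2=0$ and expands the square root, whereas you control the remainder by a bootstrap substitution; both yield $n\eta=2\kappa\ln 2+\mathcal{O}(\kappa^2/n)$.
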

\begin{proof}
	Consider $C=\frac 12\log(1+\eta)$. Thus, by using Taylor expansion of $\log(1+x)$ up to the second order, we arrive at
	\begin{equation}
	    C= \frac {1}{2\ln 2}\ln(1+\eta)= \frac {1}{2\ln 2}\left(\eta-\frac{\eta^2}{2}\right)+\mathcal{O}\left(\eta^3\right).
	\end{equation}
	which leads to solving $\eta^2-2\eta+4C\ln 2=0$. As a result,
	\begin{align}
	\eta=1-\sqrt{1-4C\ln 2}= 2\,C \ln 2+\mathcal{O}\left(C^2\right).
	\end{align}
	Now, considering $\kappa=nC$, results in
	\begin{equation}
	    n\eta=2\,nC \ln 2+\mathcal{O}\left(nC^2\right)=2\,\kappa\ln 2+\mathcal{O}\left(\frac{\kappa^2}{n}\right).
	\end{equation}
\end{proof}

\subsection{Proof of Theorem~\ref{bounds for AWGN}} \label{AWGN proofs1}
\begin{proof}[{\bf Converse Bound of Theorem~\ref{bounds for AWGN}}]
	Let $X$ and $Y$ be a uniform input and the corresponding output of an AWGN($\eta$) channel. Under the notation of Theorem~\ref{AWGN conv.}, define $P=P_{XY}$ and $Q=P_{X}P_{Y}$. Therefore, we have 
	\begin{equation}\label{mm}
	M^*(n, p_e) \leq  \frac{1}{\beta_{1-p_e}\left(P^n,Q^n\right)}.
	\end{equation}
	Also from Lemma~\ref{beta}, we obtain
	\begin{equation}\label{bb}
	\log_2\beta_{1-p_e}(P^n,Q^n)\ge -nD+\sqrt{nV}Q^{-1}\left(p_e+\gamma+\frac{B}{\sqrt{n}}\right)+\log_2\gamma,
	\end{equation}
	where $\gamma>0$ and in order to compute the quantities $D$, $V$ and $B$, consider the random variable $H_n = \log_2\frac{dP^n(z)}{dQ^n(z)}$. It can be verified that $H_n=\sum_{i=1}^nh_i$, where
	\begin{align}
	h_i=\frac 12 \log_2\left(1+\eta\right)+\frac{\eta}{2(1+\eta)\ln 2}-\frac{1}{2(1+\eta)\ln 2}\left(\eta Z_i^2-2\sqrt{\eta}\,Z_i\right),
	\end{align}
	assuming that $Z_i$'s are independent standard normal random variables.
	Thus, a simple calculation shows that for all $i\in\set{1,\dots,n}$, we have
	\begin{align}
	D&=\mathbb{E}[h_i]=\frac 12 \log_2\left(1+\eta\right),\\
	V&=\mathbb{V}ar[h_i]=\frac{\eta(\eta+2)}{2(\eta+1)^2\ln^22},\\
	T&=\mathbb{E}\left[\abs{h_i-D}^3\right],\\
	B&=\frac{T}{V^{\frac 32}}.
	\end{align}
	As we will see in the rest of the proof, computing $T$ and $B$ are not necessary as they will appear in terms that vanish compared to other terms. Now, by replacing the values computed above, \eqref{mm} together with \eqref{bb} yields
	\begin{align}
	\log_2 M^*(n,p_e) \leq \frac n2 \log_2\left(1+\eta\right)-\frac{    \sqrt{   n\eta(\eta+2)  }        }{      \sqrt{2}(\eta+1)\ln 2     }Q^{-1}\left(p_e+\gamma+\frac{B}{\sqrt{n}}\right)-\log_2\gamma.
	\end{align}
	Now, by setting $\gamma={p_e}/{\sqrt{\kappa}}$ and plugging in Lemma~\ref{eta}, and then using part (ii) of Lemma~\ref{Q} together with the definition of $\kappa$, we conclude that
	\begin{align}
	\log_2 M^*(n,p_e) \leq \kappa-\frac{    \sqrt{   \kappa(\eta+2)  }        }{      (\eta+1)\sqrt{\ln 2}     }Q^{-1}(p_e+\frac{p_e}{\sqrt{\kappa}}+\frac{B}{\sqrt{n}})-\log_2\frac{p_e}{\sqrt{\kappa}}+\mathcal{O}(\frac{\kappa^2\sqrt{-\log p_e}}{n}).
	\end{align}
	Note that 
	\begin{equation}
	  \frac{d}{dx}Q^{-1}(x)=-\sqrt{2\pi}\,e^{\frac{Q^{-1}(x)^2}{2}}.  
	\end{equation}
	Therefore, by applying the Taylor expansion of $Q^{-1}(\cdot)$, and using parts (i) and (ii) of Lemma~\ref{Q}, we achieve
	\begin{align}
	\nonumber \log_2M^*(n,p_e) \le \kappa &-\frac{    \sqrt{   \eta+2  }  \sqrt{\kappa}      }{      (\eta+1)\sqrt{\ln 2}   }Q^{-1}\left(p_e\right)-\log_2\frac{p_e}{\sqrt{\kappa}}\\
 &\qquad\qquad+\mathcal{O}\left(\frac{\kappa^2\sqrt{-\log_2p_e}}{n}\right)+\mathcal{O}\left(\frac{1}{\sqrt{-\log p_e}}\right).
	\end{align}
	Considering the dominant terms then results in
	\begin{align}
	\log_2M^*(n,p_e) &\le \kappa-\frac{    \sqrt{   \eta+2  }   \sqrt{\kappa}     }{      (\eta+1)\sqrt{\ln 2} } Q^{-1}\left(p_e\right)+\frac{1}{2}\log_2\kappa-\log_2p_e+\mathcal{O}\left(\frac{1}{\sqrt{-\log p_e}}\right).
	\end{align}
\end{proof}

\begin{proof}[{\bf Achievability Bound of Theorem~\ref{bounds for AWGN}}]
	The proof of Theorem~\ref{AWGN ach.} is still valid in the low-capacity regime. As a result, using Theorem~\ref{AWGN ach.} together with Lemma~\ref{eta} and part (ii) of Lemma~\ref{Q} results in
	\begin{align}
	\nonumber \log_2 M^*(n, p_e,\eta ) &\ge \kappa -\frac{\sqrt{\eta+2}}{(\eta+1)\sqrt{\ln2}}\cdot \sqrt{\kappa}\,Q^{-1}( p_e )+\mathcal{O}(1)+\mathcal{O}\left(\frac{\kappa\sqrt{-\kappa\log p_e}}{n}\right)\\
	&=\kappa -\frac{\sqrt{\eta+2}}{(\eta+1)\sqrt{\ln2}}\cdot \sqrt{\kappa}\,Q^{-1}( p_e )+\mathcal{O}(1).
	\end{align}
\end{proof}

\subsection{Proof of Corollary~\ref{optimal n for AWGN}}\label{AWGN proofs2}
\begin{proof}
	In order to obtain the optimal blocklength $n^*$ for transmission of $k$ information bits over a low capacity AWGN($\eta$), it suffices to replace $M^*(n^*,p_e,\eta)=k$. Then $n^*$ can be computed by solving \eqref{AWGN b}. Substitute $M^*(n^*,p_e)=k$ and $\kappa=n^*C$ in \eqref{AWGN b}, where $C=\frac 12\log_2\left(1+\eta\right)$, to obtain
	\begin{align}
	k = n^*C-\frac{\sqrt{\eta+2}}{(\eta+1)\sqrt{\ln2}}\cdot\sqrt{n^*C}\,Q^{-1}\left(p_e\right)+\mathcal{E}.
	\end{align}
	Define $x=\sqrt{n^*C}$, $a=\frac{\sqrt{\eta+2}}{2(\eta+1)\sqrt{\ln2}}Q^{-1}\left(p_e\right)$ and $b=k-\mathcal{E}$. Thus, we have $x^2-2ax-b=0$. This results in $x= a+\sqrt{a^2+b}$. Therefore, $\sqrt{n^*C}= a+\sqrt{a^2+b}$.
	More simplifications are as follows:
	\begin{align}
	\nonumber n^*&= \frac1C \left(a+\sqrt{a^2+b}\right)^2\\
	\nonumber &=\frac 1C \left(      2a^2+b+   2a   \left(     \sqrt{k}+    \mathcal{O}\left(    \frac{-\log p_e}{k}  \right)    +    \mathcal{O}\left(   \frac{\log \kappa}{k}  \right)   \right)\right)\\
	\label{nnd}&=\frac 1C \left(k+2a\sqrt{k}+2a^2-\mathcal{E}+\mathcal{O}\left(    \frac{(-\log p_e)^{3/2}}{k}  \right) \right).
	\end{align}
	Note that in the low-capacity regime, under optimal blocklength, $\log\kappa \approx \log k$. Therefore, by substituting the values of $C$ and $a$ in \eqref{nnd} and comparing the orders, we obtain
	\begin{equation}
	    n^* =\frac 2{\log_2(1+\eta)} \left(k+\frac{\sqrt{\eta+2}\sqrt{k}}{(\eta+1)\sqrt{\ln2}}Q^{-1}( p_e )+\frac{\eta+2}{2(\eta+1)^2\ln2}Q^{-1}(p_e)^2+\mathcal{O}\left(\log_2 \frac{1}{p_e}\right)\right).
	\end{equation}
	Applying part (ii) of Lemma~\ref{Q} then results in
	\begin{equation}
	    n^* =\frac 2{\log_2(1+\eta)} \left(k+\frac{\sqrt{\eta+2}}{(\eta+1)\sqrt{\ln2}}Q^{-1}( p_e )\cdot \sqrt{k}+\mathcal{O}\left(\log_2 \frac{1}{p_e}\right)\right).
	\end{equation}
\end{proof}

\section{Proofs for the Coding Part}\label{rpp}
Below we state a lemma that will be used in the proof of Theorem~\ref{r}.
\begin{lemma}
	\label{inequality}
	\begin{enumerate}[i)]
		\item If \,$z\geq 0$,\quad then \,$1-z \leq e^{-z} \le 1-z+\frac{z^2}{2}$.
		\item If \,$0<z<1$,\quad then \,$\ln(1-z) \le -z-\frac{z^2}{2}$.
	\end{enumerate}
\end{lemma}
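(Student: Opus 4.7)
The plan is to prove all three inequalities by the standard sign-of-derivative technique: for each inequality, I will define the difference between the two sides as a function $f(z)$, verify that $f(0)=0$, and then show that $f$ is nondecreasing on the relevant interval by checking the sign of $f'$ (or, when needed, of $f''$ and the value $f'(0)$). This is the most direct route and avoids invoking Taylor's theorem with remainder, though the latter would give an equivalent proof.

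For part (i), I would set $f(z)=e^{-z}-(1-z)$. Then $f(0)=0$ and $f'(z)=1-e^{-z}\ge 0$ for $z\ge 0$, so $f$ is nondecreasing on $[0,\infty)$ and hence $f(z)\ge 0$ throughout, giving the claim. For part (ii), I would set $g(z)=1-z+\tfrac{z^2}{2}-e^{-z}$. Here $g(0)=0$ but $g'(z)=-1+z+e^{-z}$ also vanishes at $z=0$, so I would differentiate once more: $g''(z)=1-e^{-z}\ge 0$ for $z\ge 0$. This gives $g'(z)\ge g'(0)=0$, which in turn gives $g(z)\ge g(0)=0$, yielding the inequality.

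For part (iii), I would set $h(z)=-z-\tfrac{z^2}{2}-\ln(1-z)$ on $[0,1)$. Then $h(0)=0$, and a short computation gives
\[
h'(z)=-1-z+\frac{1}{1-z}=\frac{-(1-z)(1+z)+1}{1-z}=\frac{z^2}{1-z},
\]
which is nonnegative on $[0,1)$. Hence $h$ is nondecreasing there and $h(z)\ge h(0)=0$, as required.

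There is no substantive obstacle: the only mild care is in part (ii), where the first derivative vanishes at $z=0$, so one must pass to the second derivative to establish monotonicity of $g'$ before concluding monotonicity of $g$. Everything else is routine freshman calculus.
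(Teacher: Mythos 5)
Your proof is correct and follows essentially the same sign-of-derivative argument as the paper: part (i) is identical, and parts (ii)--(iii) differ only cosmetically, in that for (ii) you pass to the second derivative where the paper signs $g'$ by invoking part (i), and for (iii) you use a derivative computation of $h(z)=-z-\tfrac{z^2}{2}-\ln(1-z)$ where the paper simply bounds the series $\ln(1-z)=-\sum_{i\ge 1} z^i/i$ by its first two terms. Both variants are routine and valid, so there is nothing to fix.
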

\begin{proof}
	\begin{enumerate}[i)]
		\item Define $f(z)= e^{-z} -1+z$.  Note that $f'(z)=-e^{-z}+1 \ge 0$ for $z\ge0$. This means $f$ is increasing over $z\ge0$. Thus, $f(z)\ge f(0)=0$ which proves the lower bound.\\
		Define $g(z)= e^{-z} -1+z-\frac{z^2}{2}$.  Note that $g'(z)=-e^{-z}+1-z\le 0$ for $z\ge0$ due to the lower bound proved above. This means $g$ is decreasing over $z\ge0$. Thus, $g(z)\le g(0)=0$ which proves the upper bound.
		\item $\ln(1-z) = -\sum_{i=1}^\infty\frac{z^i}{i}\le -z-\frac{z^2}{2}$ for $0<z<1$.
	\end{enumerate}
\end{proof}

\subsection{Proof of Theorem~\ref{r}}\label{rpp1}
\begin{proof}
	Let $m$ be the number of repeated blocks of size $r$, i.e., $m=\frac nr$ and consequently, $m_{\beta}=\frac n{r_{\beta}}$. Note that the maximum achievable rate in this setting is $\frac{1-\epsilon^{r}}{r}=\frac mn\left(1-\epsilon^{\frac{n}{m}}\right)$. Thus, $m_\beta$ is the solution of the following problem:
	\begin{equation}
	\begin{aligned}
	& \text{minimize}
	& & m \\
	& \text{subject to}
	& & m\left(1-\epsilon^{\frac{n}{m}}\right) \geq \beta n(1-\epsilon).
	\end{aligned}
	\end{equation}
	It can be easily verified that the answer to this problem $m_\beta$ indeed satisfies 
	\begin{equation}
	m_\beta\left(1-\epsilon^{\frac{n}{m_\beta}}\right) = \beta n(1-\epsilon).
	\end{equation}
	Now putting $\kappa=n(1-\epsilon)$ gives
	\begin{equation}
	m_\beta\left(1-\epsilon^{\frac{\kappa}{(1-\epsilon)m_\beta}}\right) = \beta \kappa.
	\end{equation}
	Define $x = \frac{\kappa}{m_\beta}$ to get
	\begin{equation}
	1-\epsilon^{\frac{x}{1-\epsilon}} = \beta x.
	\end{equation}
	Therefore,
	\begin{equation}
	1-  \beta x =\epsilon^{\frac{x}{1-\epsilon}}= e^{\frac{\ln \epsilon}{1-\epsilon}\,x}=e^{-x\ell},
	\end{equation}
	where $\ell=-\frac{\ln \epsilon}{1-\epsilon}$. Now, let $z=x\ell$. Thus,
	\begin{equation}
	1- \frac{ \beta}{\ell} z = e^{-z}.
	\end{equation}
	Finally, define $\gamma = \frac{ \beta}{\ell}$ to get
	\begin{equation}
	\label{equa}
	1- \gamma z = e^{-z}.
	\end{equation}
	Note that $z=x\ell \ge 0$. Thus we can use Lemma~\ref{inequality}. Due to part (ii) of Lemma~\ref{inequality}, we can write
	\begin{align}
	1- \gamma z = e^{-z} \le 1-z+\frac{z^2}{2} \quad \Longrightarrow \quad 2(1- \gamma ) \le z \label{e1}.
	\end{align}
	Taking logarithm of both sides of \eqref{equa}, gives
	\begin{equation}
	-z=\ln(1- \gamma z) .
	\end{equation}
	Now use part (iii) of Lemma~\ref{inequality}  to obtain
	\begin{equation}
	-z=\ln(1- \gamma z) \le -\gamma z-\frac{(\gamma z)^2}{2}.
	\end{equation}
	Therefore,
	\begin{equation}
	\label{e2}
	z \le 2(1-\gamma)\frac{1}{\gamma^2}.
	\end{equation}
	Note that \eqref{e1} together with \eqref{e2} yields
	\begin{equation}
	2(1-\gamma)\le z \le 2(1-\gamma)\frac{1}{\gamma^2}.
	\end{equation}
	Remember $z = x\ell = {\kappa\ell}/{(m_\beta)}$ . Hence,
	\begin{equation}
	\frac{\kappa\ell}{2(1-\gamma)}\cdot\gamma^2\le m_\beta \le \frac{\kappa\ell}{2(1-\gamma)}.
	\end{equation}
	Now, replacing $\gamma = {\beta}/{\ell}$, $\kappa=n(1-\epsilon)$, and $m_\beta={n}/{r_\beta}$ result in
	\begin{equation}
	\frac{n(1-\epsilon)\ell}{2\left(1-\frac{\beta}{\ell}\right)}\cdot\left(\frac{\beta}{\ell}\right)^2\le \frac{n}{r_\beta}\le \frac{n(1-\epsilon)\ell}{2\left(1-\frac{\beta}{\ell}\right)}.
	\end{equation}
\end{proof}        

\subsection{Proof of Theorem~\ref{r_bms}}\label{rpp2}
\begin{proof}
	We use extremes of information combining. Consider two BMS channels $W_1, W_2$ with capacity $C_1, C_2$ respectively. Note that $\text{BEC}(1-C_1)$ has capacity $C_1$ and $\text{BEC}(1-C_2)$ has capacity $C_2$. Then we know from extremes of information combining \cite[Chapter~4]{richardson2008modern} that 
	\begin{equation} \label{extremes}
	C(W_1 \circledast W_2) \leq  C\left(\text{BEC}(1-C_1) \circledast \text{BEC}(1-C_2)\right),
	\end{equation}
	where $W_1 \circledast W_2$ is the BMS channel whose output is formed as the union of the output of $W_1$ and $W_2$, i.e., we send the input bit once through $W_1$ and once through $W_2$ and the resulting outcomes together will be the outcome of $W_1 \circledast W_2$. 
	
	Now, it is clear that for any BMS channel $W$ we have  
	\begin{equation}
	    W^r = \underbrace{W \circledast \cdots \circledast W}_{r \text{ times}}.
	\end{equation}  
	Thus, assuming $C(W) = C$, then by using \eqref{extremes} $r$ times we obtain $C(W^r) \leq C( \text{BEC}(1-C)^r)$. 
\end{proof}

\subsection{Proof of Theorem~\ref{polar_repetition}}\label{proofpolar}
\begin{proof}
Let $i \in [n]$ be an arbitrary index. Recall that in order to construct the $i$-th sub-channel for a polar code of length $n=2^m$ on channel $W$, which is denoted by $W_n^{(i)}$, 
we proceed as follows: (i) Consider the binary expansion $i = b_1b_2\cdots b_m$. (ii) Start with $W_0 = W$. (iii) For $j\in\set{1,\dots,m}$, let $W_{j} = W_{j-1} \circledast W_{j-1}  $ if $b_j = 1$, and otherwise,  let $W_{j} = W_{j-1} \boxast W_{j-1}$. (iv) The channel $W_m$ is the sub-channel corresponding to the $i$-th index.  Also recall that for any BMS channel $W$, we have (see \cite[Lemma~3.16]{korada2009polar}, \cite{hassani2013polarization}) 
\begin{equation}
    Z(W \circledast W) = Z(W)^2 \,\,\,\text{ and }\,\,\, Z(W\boxast W) \geq \sqrt{1-(1-(1-Z(W))^2)^2},
\end{equation}
which by simple manipulations will be simplified to
\begin{equation} \label{z_bounds}
1 - Z(W \circledast W) \leq 2 (1 - Z(W)) \,\,\,\text{ and }\,\,\,  1 - Z(W\boxast W) \leq  4*(1-Z(W))^2.
\end{equation}
Now, for an integer $t \leq m$ let $i_t $ be such that in the binary expansion $i_t = b_1b_2\cdots b_m$, all the bits $b_j$ are equal to $1$ except for $b_t$ which is $0$ (i.e., $i_t = 2^m - 2^{m-t}-1$).   Using the bounds in \eqref{z_bounds}, we can write
\begin{equation}
    1 - Z(W_n^{(i_t)})  \leq 2^{m-t+2}\left( 2^{t-1}(1-Z(W)) \right)^2.
\end{equation}
Thus, if  $i_t$ is a good sub-channel, then we must have $Z(W_m^{(i_t)}) \leq \frac 12$ which by using the above inequality gives 
\begin{equation}
    2^{m-t+2}\left( 2^{t-1}(1-Z(W)) \right)^2 \geq \frac 12 \quad \Longrightarrow \quad 2^t \geq \frac{1}{2^{m+1} (1-Z(W))^2}. 
\end{equation}
Note also that for any BMS channel, we have $C(W) \geq 1- Z(W) $, and thus the above inequality implies
\begin{equation} \label{t_bound}
2^t \geq \frac{1}{2^{m+1}C(W)^2}\quad \Longrightarrow\quad 2^{m-t+1} \leq 4(2^{m} C(W))^2 = 4(n(c(W)))^2.
\end{equation}
Now, recalling the fact that the binary expansion of index $i_t$ has only one position with zero value (the $t$-th position), we can conclude that for any other good index $i$, we must have the following: The first $t-1$ bits of the binary expansion of $i-1$ should be $1$, and $t$ is lower bounded from \eqref{t_bound}. This means that the polar code corresponding to $W$ will have at least $2^{t-1}$ repetitions in the beginning. 
\end{proof}

\end{document}